\newtheorem{theorem}{Theorem}[section]
\newtheorem{remark}[theorem]{Remark}
\newtheorem{corollary}[theorem]{Corollary}
\newtheorem{definition}[theorem]{Definition}
\newtheorem{lemma}[theorem]{Lemma}
\newtheorem{claim}[theorem]{Claim}
\newcommand{\E}{\mathbb{E}}
\definecolor{darkgreen}{rgb}{0,0.5,0}
\Crefname{lemma}{Lemma}{Lemmas}
\Crefname{corollary}{Corollary}{Corollaries}
\Crefname{remark}{Remark}{Remarks}
\Crefname{theorem}{Theorem}{Theorems}
\title{Improved Network Decompositions using Small Messages \\ with Applications on MIS, Neighborhood Covers, and Beyond}
\author{
  Mohsen Ghaffari\\
  ETH Zurich \\
  ghaffari@inf.ethz.ch
  \and
  Julian Portmann \\
  ETH Zurich\\
  pjulian@ethz.ch
}
\date{}
\begin{document}
\maketitle

\begin{abstract}
  \normalsize
  Network decompositions, as introduced by Awerbuch, Luby, Goldberg, and Plotkin [FOCS'89], are one of the key algorithmic tools in distributed graph algorithms.
  We present an improved deterministic distributed algorithm for constructing network decompositions of power graphs using small messages, which improves upon the algorithm of Ghaffari and Kuhn [DISC'18].
  In addition, we provide a randomized distributed network decomposition algorithm, based on our deterministic algorithm, with failure probability exponentially small in the input size that works with small messages as well.
  Compared to the previous algorithm of Elkin and Neiman [PODC'16], our algorithm achieves a better success probability at the expense of its round complexity, while giving a network decomposition of the same quality.
  As a consequence of the randomized algorithm for network decomposition, we get a faster randomized algorithm for computing a Maximal Independent Set, improving on a result of Ghaffari [SODA'19].
  Other implications of our improved deterministic network decomposition algorithm are:
  a faster deterministic distributed algorithms for constructing spanners and approximations of distributed set cover, improving results of Ghaffari, and Kuhn [DISC'18] and Deurer, Kuhn, and Maus [PODC'19];
  and faster a deterministic distributed algorithm for constructing neighborhood covers, resolving an open question of Elkin [SODA'04].

\end{abstract}
\thispagestyle{empty}
\newpage
\setcounter{page}{1}

\section{Introduction}
\label{sec:intro}

We present an improved deterministic distributed algorithm for constructing network decompositions of power graphs using small messages, as well as some improvements for other problems including randomized construction of maximal independent set, and deterministic construction of sparse neighborhood covers, spanners and dominating set approximation.

After introducing our model of computation, we recall the concept of network decompositions in \Cref{sec:intronetdecomp} as well as a brief summary of all known distributed constructions.
In \Cref{sec:results} we present our results and in \Cref{sec:overview} we outline our methods and explain how they depart from previous approaches.

\paragraph{Model:} Throughout, we work with the {\sffamily CONGEST} model of distributed computing \cite{peleg2000distributed}:
The communication network is abstracted as an $n$-node graph $G = (V,E)$.
We use $\Delta$ to denote the maximum degree of $G$.
There is one processor on each node of the network, which initially knows only its $O(\log n)$-bit identifier.
Per round of synchronous communication, every node can send one $O(\log n)$-bit message to each neighbor.
Note that this is enough to describe constantly many elements of the network, i.e. vertices or edges.
A closely related variant is the {\sffamily LOCAL} model\cite{linial1987distributive}, where we impose no restriction on the size of messages.

\subsection{Network Decompositions}
\label{sec:intronetdecomp}
\emph{Network decompositions} were introduced by Awerbuch et al.\cite{awerbuch1989network}, and since then, they have turned out to be one of the key algorithmic tools in distributed algorithms for graph problems.
For a given graph $G=(V, E)$, a $(c,d)$ network decomposition of it is defined as a partition of $V$ into \emph{blocks} $V_1, \dots, V_c$ such that each connected component of the subgraphs $G[V_i]$ has diameter at most $d$.
The connected components of each block are usually called \emph{clusters}.
This notion of network decomposition is sometimes also referred to as \emph{strong diameter} network decomposition, as we consider the diameter with respect to distances in the induced subgraphs.
This is as opposed to \emph{weak diameter} network decompositions, where distances are with respect to the base graph.
Intuitively, network decompositions allow us to process graph problems in $c$ sequential stages, where in each stage we process one block, a graph that is made of low-diameter components (diameter $d$).
This low-diameter simplifies the task as it opens the road for collecting either the entire topology, in the {\sffamily LOCAL} model, or at least some coordination messages, in the {\sffamily CONGEST} model.
The key point is that the problems in different components of one block can be processed independently, as they have distance at least $1$.

In many applications of network decompositions, instead of asking for the clusters to have distance at least $1$, we need them to have a larger distance, at least $k$ hops for some parameter $k\geq 2$.
This is crucial for applications where the problem is such that the answer in one node can impact nodes beyond its neighbors.
Thus, a natural extension of network decomposition is the following:
a $k$-hop separated network decomposition or decomposition of $G^k$ requires that any two nodes $u, v$ from different clusters of the same color are at distance more than $k$ in $G$.
We note that clusters do not have to be connected in $G$, which means that it is a \emph{weak diameter} decomposition of $G$.
% We do however require to have a spanning tree for every cluster $C$, which might contain nodes from other clusters than $C$.

While the authors of \cite{awerbuch1989network} used network decompositions to solve symmetry breaking problems, such as maximal independent set or $(\Delta+1)$-vertex coloring, various other applications were discovered later.
Examples in the {\sffamily LOCAL} model include the computation of sparse spanners and linear-size skeletons by Dubhashi et al.\cite{dubhashi2005fast} or distributed approximation algorithms for the graph coloring and minimum dominating set problems by Barenboim et al.\cite{barenboim2012locality, barenboim2018fast}.
For the {\sffamily CONGEST} model, Ghaffari and Kuhn\cite{ghaffari2018derandomizing} showed that $k$-hop separated network decompositions can be used for computing spanners and approximating minimum dominating set.

\paragraph{State of the Art---Deterministic Constructions:}
There are four known deterministic distributed constructions of network decompositions, successively improving either quantitatively or qualitatively \cite{awerbuch1989network, panconesi1992improved, ghaffari2018derandomizing, ghaffari2019distributed}.
Awerbuch et al. \cite{awerbuch1989network} provided an algorithm for computing $(2^{O(\sqrt{\log n \log \log n})}, 2^{O(\sqrt{\log n \log \log n})})$ network decompositions of an $n$ node graph $G$ in $2^{O(\sqrt{\log n \log \log n})}$ rounds, which works in the {\sffamily CONGEST} model.
Subsequently, this was improved by Panconesi and Srinivasan\cite{panconesi1992improved} showing that all $2^{O(\sqrt{\log n \log \log n})}$ terms could be replaced by $2^{O(\sqrt{\log n})}$.
However, their algorithm requires large messages.

For network decompositions with higher levels of separation, Ghaffari and Kuhn\cite{ghaffari2018derandomizing} gave a $k \cdot 2^{O(\sqrt{\log n \log \log n})}$ round {\sffamily CONGEST}-model algorithm for computing a $(2^{O(\sqrt{\log n \log \log n})}, 2^{O(\sqrt{\log n \log \log n})})$ network decomposition of $G^k$, which works with small messages.
Note that extending network decomposition algorithms to compute a decomposition of $G^k$ is trivial in the {\sffamily LOCAL} model:
As nodes can send messages of arbitrary size, communication on $G^k$ can be simulate in $k$ rounds of communication on $G$.
Thus, with a $k$ factor overhead in the round complexity (and a $k$ factor increase in the diameter with respect to distances in $G$), we can use any {\sffamily LOCAL}-model network decomposition algorithm to also compute $k$-hop separated decompositions.

Recently, Ghaffari\cite{ghaffari2019distributed} showed that a $(2^{O(\sqrt{\log n})}, 2^{O(\sqrt{\log n})})$ network decomposition can also be computed in $2^{O(\sqrt{\log n})}$ rounds in the {\sffamily CONGEST} model.
However, his construction cannot extend to $G^k$, which is one of the issues we address in this paper.
In contrast to all previous approaches, this algorithm has the useful property, that it can handle large identifiers.
This means that the length of identifiers does not influence the parameters of the resulting network decomposition.

% When asking for a decomposition of $G^k$, the problem is trivial in the {\sffamily LOCAL} model:
% As nodes can send messages of arbitrary size, communication on $G^k$ can be simulate in $k$ rounds of communication on $G$.
% Thus, with a $k$ factor overhead in the round complexity (and a $k$ factor increase in the diameter with respect to distances in $G$), we can use any {\sffamily LOCAL}-model network decomposition algorithm to also compute $k$-hop separated decompositions.
% In the {\sffamily CONGEST} model, the only known result is due to Ghaffari and Kuhn\cite{ghaffari2018derandomizing}.
% They give an algorithm that computes a strong diameter $(2^{O(\sqrt{\log n \log \log n})}, 2^{O(\sqrt{\log n \log \log n})})$ network decomposition of $G^k$ in $k \cdot 2^{O(\sqrt{\log n \log \log n})}$ rounds of communication in $G$, which is a higher bound than \cite{panconesi1992improved, ghaffari2019distributed}.
% Their algorithm follows a similar structure as \cite{awerbuch1989network}, which means it requires nodes to have $O(\log n)$ bit IDs.

\paragraph{State of the Art---Randomized Constructions:} For randomized algorithms there are stronger results:
Linial and Saks\cite{linial1993low} showed that $(O(\log n), O(\log n))$ network decompositions exist and gave a distributed algorithm, which finds a $(O(\log n), O(\log n))$ network decomposition in $O(\log^2 n)$ rounds, with high probability\footnote{As usual, we use the phrase \emph{with high probability} to denote that an event holds with probability at least $1 - n^{-c}$ for any constant $c$, where $c$ may influence other constants.} (w.h.p).
The construction of Linial and Saks\cite{linial1993low} only guarantees that clusters have weak diameter $O(\log n)$.
More recently, Elkin and Neiman\cite{elkin2016distributed} provided a randomized distributed algorithm that computes strong diameter $(O(\log n), O(\log n))$ network decomposition in $O(\log^2 n)$, w.h.p, and also works in the {\sffamily CONGEST} model.
Both of these algorithms can be easily extended to produce a $(O(\log n), O(k \log n))$ decomposition of $G^k$ in $O(k \log^2 n)$ rounds without requiring larger messages.

We remark that the fact that these algorithm succeed with probability $1 - 1/\text{poly}(n)$ prevents them from being directly used in our randomized MIS algorithm.
This is because after the shattering, only components of size $N \ll n$ remain, which means that the algorithms only succeed with probability $1 - 1/\textrm{poly}(N)$ in computing a $(O(\log N), O(\log N))$ network decomposition.

\subsection{Our Results}
\label{sec:results}
We present a deterministic distributed {\sffamily CONGEST}-model algorithm for computing network decompositions of $G^k$:
\begin{theorem}
  \label{thm:netdecompintro}
  There is a deterministic distributed algorithm that in any $N$-node network $G$, which has $S$-bit identifiers and supports $O(S)$-bit messages for some arbitrary $S$,
  computes a $(g(N), g(N))$ network decomposition of $G^k$ in $k g(N) \cdot \log^* S$ rounds, for any $k$, and $g(N) = 2^{O(\sqrt{\log N})}$.
\end{theorem}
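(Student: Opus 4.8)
\emph{Working over a cluster graph.} The plan is to never communicate on $G^k$ directly, but to maintain throughout a collection of \emph{clusters} — each a vertex subset with a designated center and an identifier, with every $v\in V$ storing which cluster it is in and its $G$-distance to that cluster's center — and to run, on the resulting \emph{cluster graph} $\mathcal H$ (clusters adjacent iff some pair of their vertices is within distance $k$ in $G$), the recursive network-decomposition scheme of Awerbuch et al.~\cite{awerbuch1989network} in the improved form of Ghaffari~\cite{ghaffari2019distributed}. A cluster decomposition of $\mathcal H$ with clusters of $\mathcal H$-diameter $d$ and pairwise $\mathcal H$-distance $\ge 2$ unrolls into a weak-diameter decomposition of $G^k$ with the required ${>}k$ separation and diameter $O(kd)$ in $G$, so it suffices to produce such a decomposition of $\mathcal H$ with $d=2^{O(\sqrt{\log N})}$ and $2^{O(\sqrt{\log N})}$ colors. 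The key point is that the only primitive I will need on $\mathcal H$ is \emph{ball carving}: from a chosen set of cluster centers, every vertex learns the closest such center and its distance, up to some radius $R$ in $\mathcal H$. This is simulated in $G$ by a truncated BFS in $G^k$ in which every vertex forwards only $O(1)$ tokens (its current best center and distance), so congestion stays $O(k)$ per edge and one carving iteration costs $O(kR)$ rounds of $G$.

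\emph{The recursion.} Using ball carving I would run the scheme of \cite{awerbuch1989network, ghaffari2019distributed} essentially verbatim: it proceeds in $O(\sqrt{\log N})$ levels, writing cluster identifiers as short strings over an alphabet of size $2^{O(\sqrt{\log N})}$ and, digit by digit, repeatedly (i) ball-carving from a chosen subset of living cluster centers to a bounded radius, (ii) keeping a maximal low-diameter, $\mathcal H$-distance-$\ge 2$ sub-collection of the carved balls, assigning it the next output color, and removing it, and (iii) merging the surviving clusters along the carving forest into fewer, larger clusters. I would import from \cite{ghaffari2019distributed} the counting that this uses $2^{O(\sqrt{\log N})}$ carving steps and output colors and ends with clusters of $\mathcal H$-diameter $2^{O(\sqrt{\log N})}$. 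Since each carving step runs to radius $2^{O(\sqrt{\log N})}$ in $\mathcal H$, it costs $k\cdot 2^{O(\sqrt{\log N})}$ rounds of $G$, for a running time of $k\cdot 2^{O(\sqrt{\log N})}$ before identifiers are accounted for.

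\emph{Large identifiers and the $\log^* S$ factor.} The recursion needs cluster identifiers of magnitude $\mathrm{poly}(N)$. Whenever this fails — initially, where vertices carry only $S$-bit identifiers, and right after each merge — a \emph{proper} $2^S$-coloring of the current $\mathcal H$ is available for free by giving each cluster the smallest vertex identifier it contains (adjacent clusters are vertex-disjoint). I would then run a Linial/Cole--Vishkin-style color reduction \emph{implemented through ball carving}: one step turns a proper $q$-coloring of $G^k$ into a proper $O(\Delta^{2k}\log q)$-coloring using $O(1)$ carving iterations of radius $O(1)$ in $\mathcal H$, and since $\Delta^{2k}=\mathrm{poly}(N)$ this drives $q$ from $2^S$ down to $\mathrm{poly}(N)$ in $O(\log^* S)$ carving iterations, i.e.\ $k\cdot 2^{O(\sqrt{\log N})}\log^* S$ rounds of $G$ over all $O(\sqrt{\log N})$ levels. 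Adding this to the bound of the previous paragraph gives $k\cdot 2^{O(\sqrt{\log N})}\cdot\log^* S = k\,g(N)\log^* S$ with $g(N)=2^{O(\sqrt{\log N})}$.

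\emph{Correctness and the main obstacle.} Correctness comes down to two invariants, verified as in \cite{awerbuch1989network}: at the moment an output color is fixed its clusters have $\mathcal H$-diameter $2^{O(\sqrt{\log N})}$ and pairwise $\mathcal H$-distance $\ge 2$ (hence, unrolled, diameter $2^{O(\sqrt{\log N})}$ and separation ${>}k$ in $G$), and every cluster is eventually colored because each full pass over a digit removes a sufficient fraction of clusters. I expect the real difficulty to lie entirely in the first paragraph: showing that \emph{every} step of the improved recursion can genuinely be carried out by $O(1)$-state ball carving on $G^k$, and that cluster identifiers, distance labels, and carving radii all remain $\mathrm{poly}(N)$-bounded — hence fit in one $O(S)$-bit message — through all $2^{O(\sqrt{\log N})}$ steps. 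The congestion bound above is the crux: unrestricted communication on $\mathcal H$ (equivalently, on $G^k$) would be hopelessly congested, since arbitrarily many pairwise ${>}k$-separated clusters can all lie within distance $k$ of one vertex of $G$, and this is precisely what blocks a direct extension of \cite{ghaffari2019distributed} to power graphs.
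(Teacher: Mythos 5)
Your reduction of the whole construction to one primitive---``nearest-center truncated BFS in which every vertex forwards $O(1)$ tokens''---is exactly the step you yourself flag as the crux and leave open, and it is the actual content of the theorem, not a detail that can be imported. The primitive itself is fine (multi-source BFS on $G$ with one best-center/distance pair per vertex has no congestion, and unrolling a distance-$\ge 2$ decomposition of the cluster graph $\mathcal H$ into a weak-diameter decomposition of $G^k$ is correct), but several steps of the recursion you describe simply are not expressible as nearest-center carving. Deterministically extracting a maximal $\mathcal H$-distance-$\ge 2$ sub-collection, and in particular your Linial/Cole--Vishkin color-reduction step, require a cluster to learn the current colors (or identifiers) of \emph{all} of its $\mathcal H$-neighbors---of which there can be up to $N$, since arbitrarily many pairwise far clusters can touch the $k$-neighborhood of one vertex---and then to aggregate this information at its center along a tree that, in the weak-diameter regime, leaves the cluster and is shared with other clusters' trees. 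A vertex forwarding only ``its current best center'' does not carry that information, so your claim that one Linial step costs $O(1)$ carving iterations of radius $O(1)$ is unsubstantiated, and the intra-cluster convergecast cost is unbounded without a bound on how many clusters share an edge, which you never establish.

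The paper's proof is devoted almost entirely to these two points, and it avoids ever properly coloring the full cluster graph. Each cluster learns only up to $2d$ neighboring cluster identifiers (every node forwards at most $2d+1$ messages, so congestion is bounded by fiat); clusters whose out-degree in the resulting directed virtual graph exceeds $4d^2$ are \emph{marked} and handled separately as merge centers, so the surviving virtual graph $H$ has degree at most $4d^2$ and each edge of $G$ carries at most $12d^3$ edges of $H$; an inductive invariant that every edge of $G$ lies in at most $i\cdot 13d^3$ cluster spanning trees makes simultaneous intra-cluster broadcast/convergecast affordable; and Linial's $O(\log^* S)$-round coloring is run only on $H^2$ (degree $O(d^4)$) to pick the merge centers and on the remaining low-degree clusters ($\le 2d$ neighbors) to produce the output colors, which is where the $\log^* S$ factor and the tolerance of large identifiers come from. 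Without some analogue of this capping/marking and the overlap invariant your simulation has no bounded round cost; with them, your outline essentially collapses into the paper's argument, so as written the proposal has a genuine gap at its central claim.
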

We highlight the following three properties, whose combination is new to our algorithm and is crucial for our applications in the next subsections:
(A) it is able to compute a network decomposition of $G^k$ in the {\sffamily CONGEST} model, (B) it can handle large identifiers, and (C) its bounds are as good as a simulation of the algorithm of \cite{panconesi1992improved} on $G^k$ in the {\sffamily LOCAL} model.
More precisely, property (B) says that the size of identifiers only affects the round complexity but not the quality of the computed network decomposition.

In particular for our application to MIS, property (B) is crucial.
The lack of this ability to handle large identifiers is what made previous algorithms, such as of Ghaffari and Kuhn\cite{ghaffari2018derandomizing}, not applicable.
We refer to \Cref{sec:overview} for a more in-depth explanation of these issues.
% Moreover, if we start from $N$ clusters, each with an $S$-bit center identifier and with radius at most $r$, our algorithm is able to compute a $(g(N), r g(N))$ network decomposition of $G^k$ in $kr g(N) \cdot \log^* S$ rounds of communication on $G$.

\subsection*{Applications: MIS, Neighborhood Cover, and Beyond}
Network decompositions have a wide range of applications and due to previous work, our new algorithms leads to an improvement for a number of problems.
While some of these results are immediate, for the application to MIS, we also present a randomized algorithm for network decompositions whose failure probability is exponentially small in the input size.

\paragraph{The MIS Problem:}
The Maximal Independent Set Problem (MIS) asks for a set $S$ of nodes, such that no two neighboring nodes are in $S$ and moreover, for each node $v$, either $v$ or at least one of its neighbors is in $S$.
It is one of the most well-studied distributed graph problems.
One reason for its importance is that other fundamental graph problems, such as $(\Delta+1)$-vertex coloring, maximal matching, or 2-approximation of vertex cover reduce to it \cite{linial1987distributive, luby1986simple}.

Luby\cite{luby1986simple} as well as Alon, Babai and Itai\cite{alon1986fast} gave randomized distributed MIS algorithms in the {\sffamily CONGEST} model that have round complexity $O(\log n)$.
The first significant improvement over this run time was due to Barenboim, Elkin, Pettie, and Schneider\cite{barenboim2016locality}, who gave a randomized distributed $O(\log^2 \Delta) + 2^{O(\sqrt{\log \log n})}$ round algorithm.
This bound was then improved by Ghaffari\cite{ghaffari2016improved} to $O(\log \Delta) + 2^{O(\sqrt{\log \log n})}$, which remains the state of the art.
However, both these improvements do not work in the {\sffamily CONGEST} model, as they require messages of up to $\mathrm{poly}(\Delta, \log n)$ bits to gather certain local topologies.
The only improvement upon the algorithms of \cite{luby1986simple, alon1986fast} in the {\sffamily CONGEST} model is due to Ghaffari\cite{ghaffari2019distributed}, who gave a randomized distributed algorithm that runs in $\min\{\log \Delta \cdot 2^{O(\sqrt{\log \log n})}, O(\log \Delta \cdot \log \log n) + 2^{O(\sqrt{\log \log n \cdot \log\log\log n})}\}$ rounds.

We improve this result for all values of $\Delta$ and obtain the following:
\begin{theorem}
  There is a randomized distributed algorithm, with $O(\log n)$-bit messages, that computes an MIS in $O\big(\log \Delta \cdot \sqrt{\log \log n}\big) + 2^{O(\sqrt{\log \log n})}$ rounds, w.h.p.
\end{theorem}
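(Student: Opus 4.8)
The plan is to use the graph shattering framework, in the style of Ghaffari's MIS algorithms~\cite{ghaffari2016improved, ghaffari2019distributed}: a randomized pre-processing phase that reduces the task to solving MIS on small, low-degree components, followed by a phase that finishes those components with only \emph{exponentially small} failure probability, using the network decomposition machinery of this paper. Two features are essential. First, our network decomposition of power graphs handles large identifiers (\Cref{thm:netdecompintro}, property (B)), so that on a component with $N \ll n$ nodes the round complexity scales with $N$ and not with $n$, even though identifiers carry $\Theta(\log n)$ bits; the cost $\log^* S$ with $S = \Theta(\log n)$ is negligible. Second, the randomized network decomposition built from it fails with probability exponentially small in $N$ rather than only polynomially small, which is precisely what lets us union-bound over the (at most $n$) components.

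\textbf{Step 1 (degree reduction and shattering).} Run Ghaffari's base randomized MIS algorithm; its pre-shattering phase works in {\sffamily CONGEST} with $O(\log n)$-bit messages. With a suitable schedule, $O(\log\Delta\cdot\sqrt{\log\log n})$ rounds suffice so that, with high probability in $n$, the subgraph $H$ induced by the still-undecided nodes has maximum degree $\mathrm{poly}\log n$ and in fact decomposes into connected components of $N = \mathrm{poly}\log n$ nodes each. Driving the effective degree below $\mathrm{poly}\log n$ accounts for the $O(\log\Delta)$ part; the extra $\sqrt{\log\log n}$ factor, together with the final shattering into $\mathrm{poly}\log n$-size components, comes from pushing the ``bad-event'' probability below an inverse polynomial of the surviving degree $\mathrm{poly}\log n$ while keeping the dependency radius of these events bounded, so that the shattering lemma applies. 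The point of insisting on components of $\mathrm{poly}\log n$ vertices, rather than the $\mathrm{poly}(\Delta)\cdot\log n$ one gets from the naive $O(\log\Delta)$-round shattering, is that otherwise the next step would incur a $2^{\Omega(\sqrt{\log\Delta})}$ term.

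\textbf{Step 2 (finishing the components).} A component $C$ has $N = \mathrm{poly}\log n$ nodes but $S = \Theta(\log n)$-bit identifiers. We must compute an MIS of $C$ respecting the boundary vertices already committed in Step~1, in $2^{O(\sqrt{\log\log n})}$ rounds and with failure probability $n^{-\Omega(1)}$, so that the union bound over $\le n$ components keeps the whole algorithm correct with high probability; the decompositions of Linial--Saks~\cite{linial1993low} and Elkin--Neiman~\cite{elkin2016distributed} only succeed with probability $1-1/\mathrm{poly}(N)$, which is useless at this scale. For tiny components ($N < \log n$) we simply invoke the deterministic algorithm of \Cref{thm:netdecompintro} on a suitable power $C^{O(1)}$, which has no failure probability and runs in $g(N)\cdot\log^* S = 2^{O(\sqrt{\log\log n})}$ rounds. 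For the remaining components we invoke the randomized network decomposition of this paper on $C^{O(1)}$: it produces a decomposition of a suitable power of $C$ with $O(\log N)$ colors and cluster diameter in $2^{O(\sqrt{\log N})}\cdot\log^* S = 2^{O(\sqrt{\log\log n})}$ rounds and fails with probability $2^{-\Omega(N)}\le 2^{-\Omega(\log n)}=n^{-\Omega(1)}$, boostable to any inverse polynomial. Given this decomposition, MIS on $C$ follows from the standard procedure that sweeps over color classes, with each cluster solving its own MIS using its committed boundary; because same-color clusters are well separated and we decomposed a power $C^k$ rather than $C$ itself, this can be carried out in {\sffamily CONGEST} without a cluster having to collect the topology of neighboring clusters, and in $2^{O(\sqrt{\log\log n})}$ rounds overall.

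\textbf{Step 3 and the hard part.} The union of the outputs of Steps~1 and~2 is a valid MIS of $G$: Step~1 leaves undecided exactly the vertices of $H$, and the components of $H$ are processed independently and consistently because they are pairwise non-adjacent. The round complexity is $O(\log\Delta\sqrt{\log\log n})$ for Step~1 plus $2^{O(\sqrt{\log\log n})}$ for Step~2, and the failure probability is $n^{-\Omega(1)}$. I expect the main obstacle to be Step~1: simultaneously keeping the round count at $O(\log\Delta\sqrt{\log\log n})$, forcing the undecided components down to $\mathrm{poly}\log n$ vertices, and controlling the dependency radius of the bad events so the shattering lemma fires with an inverse-polynomial bad-event probability. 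A secondary but indispensable point, in Step~2, is arranging that the randomized network decomposition genuinely has failure probability exponential in the component size and not merely polynomial --- this is exactly the strengthening over Elkin--Neiman that this paper supplies, and the reason earlier shattering-based MIS algorithms could not plug in an off-the-shelf randomized decomposition.
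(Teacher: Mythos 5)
There is a genuine gap, and it sits exactly where you suspected: Step~1. No known shattering argument lets you force the undecided components down to $\mathrm{poly}\log n$ vertices (or the undecided subgraph down to $\mathrm{poly}\log n$ degree) by running the base algorithm for $O(\log\Delta\cdot\sqrt{\log\log n})$ rounds. The guarantee one actually has (\Cref{lem:shatterMIS}) is only that after $\Theta(\log\Delta)$ rounds there is no $(G^4)$-independent, $(G^9)$-connected set of size $\geq \log_\Delta n$ among undecided nodes, which bounds components by $O(\Delta^4\log n)$ --- and running longer does not remove the $\mathrm{poly}(\Delta)$ factor, since the bad events of nearby nodes are heavily correlated and only the independent-set count shrinks. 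The paper never tries to shrink the components; instead it keeps the pre-shattering at $O(\log\Delta)$ rounds and then \emph{contracts} each component to $N=O(\log n)$ \emph{meta-nodes} of diameter $O(\log\log n)$ by computing a $(5,O(\log\log n))$ ruling set (\Cref{lem:RSghaffari}) and clustering each node to its nearest ruling-set vertex; property (P1) is what caps the number of meta-nodes at $O(\log n)$. All the decomposition machinery (the deterministic decomposition of a power of the meta-node graph via \Cref{rem:netdecomp}, and the amplified ball carving of \Cref{lem:netdecompmis}) is then run on this meta-node graph, which is why the large-identifier property and the $2^{O(\sqrt{\log N})}=2^{O(\sqrt{\log\log n})}$ bounds kick in even though the components themselves have $\mathrm{poly}(\Delta)\log n$ real vertices.

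This also changes where the $\log\Delta\cdot\sqrt{\log\log n}$ term comes from, and exposes a secondary gap in your Step~2. Because each super-cluster still contains up to $O(\Delta^4\log n)$ real nodes, the per-color MIS inside a super-cluster is not a $\Delta$-independent $2^{O(\sqrt{\log\log n})}$-round step: the paper runs $O(\log n)$ parallel copies of Ghaffari's single-bit algorithm (\Cref{thm:ghaffariMIS}) for $O(\log\Delta+\log\log n)$ rounds per color, then uses the super-cluster's low diameter to convergecast local validity checks and pick a successful copy; summing over the $O(\sqrt{\log\log n})$ colors yields $O(\log\Delta\cdot\sqrt{\log\log n})+2^{O(\sqrt{\log\log n})}$. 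Your phrase ``each cluster solving its own MIS \dots in $2^{O(\sqrt{\log\log n})}$ rounds'' has no {\sffamily CONGEST} implementation as stated: topology collection is unavailable, and a single randomized run inside a cluster only succeeds with probability $1-1/\mathrm{poly}(N)$, so the same parallel-repetition-plus-coordination trick is needed there too. Your intuition about why Elkin--Neiman cannot be used off the shelf, and that the decomposition's failure probability must be $1-1/\mathrm{poly}(n)$ on components of size $N\ll n$, does match the paper (this is precisely \Cref{lem:netdecompmis}, obtained by running $O(\log n)$ parallel ball-carving executions and using the deterministic decomposition of $H^K$ to select a successful one per cluster), but without the ruling-set/meta-node contraction the plan as written does not go through.
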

Apart from our improved network decomposition, this result contains a randomized algorithm, that transforms a network decomposition of $G^k$ into a decomposition of $G$ with improved parameters.
This transformation works in the {\sffamily CONGEST} model and succeeds with probability exponential in the input size, which is crucial for its application in solving MIS.
For a more detailed overview, see \Cref{sec:overview}.

\paragraph{Neighborhood Covers and MST:}
Neighborhood covers, as introduced by Awerbuch and Peleg\cite{awerbuch1990sparse} are another form of locality-preserving graph representations and closely related to network decompositions.
A $s$-sparse $k$-neighborhood cover of diameter $d$ is defined as a collection of clusters $C \subseteq V$ such that
(A) for each cluster $C$, we have a rooted spanning tree of $G[C]$ with diameter at most $d$,
(B) each $k$-neighborhood of $G$ is completely contained in some cluster, and
(C) each node of $G$ is in at most $s$ clusters.
Like network decompositions, this form of graph representation has many applications in distributed computing, such as in routing \cite{awerbuch1992routing}, shortest paths \cite{afek1993sparser}, job scheduling and load balancing \cite{awerbuch1992online}, or broadcast and multicast \cite{awerbuch1992efficient}.

Awerbuch and Peleg\cite{awerbuch1990sparse} also gave distributed constructions for sparse neighborhood covers using messages of unbounded size, however they do not extend to the $\mathsf{CONGEST}$ model.
More recently, Ghaffari and Kuhn\cite{ghaffari2018derandomizing} gave the first {\sffamily CONGEST} model algorithm for computing sparse neighborhood covers.
They showed that a $(c, d)$ network decomposition of $G^{2k}$ can be transformed into a $c$-sparse $k$-neighborhood cover of diameter $O(k \cdot d)$ in $O(c(d + k))$ rounds.
Together with their $k \cdot 2^{O(\sqrt{\log n \log \log n})}$ round algorithm for computing a $(2^{O(\sqrt{\log n \log \log n})}, 2^{O(\sqrt{\log n \log \log n})})$ network decomposition of $G^k$, this yields a $k \cdot 2^{O(\sqrt{\log n \log \log n})}$ round algorithm for computing $2^{O(\sqrt{\log n \log \log n})}$-sparse $k$-neighborhood covers of diameter $k \cdot 2^{O(\sqrt{\log n \log \log n})}$.

Using our network decomposition, we improve upon the result of \cite{ghaffari2018derandomizing} and show that all $2^{O(\sqrt{\log n \log \log n})}$ terms can be replaced by $2^{O(\sqrt{\log n})}$.
See \Cref{app:sparsecover}.
\begin{corollary}
  \label[corollary]{cor:sparsecoverintro}
  There is a deterministic distributed algorithm, that for every $k \geq 1$, computes a $2^{O(\sqrt{\log n})}$-sparse $k$-neighborhood cover of diameter $k \cdot 2^{O(\sqrt{\log n})}$ of and $n$-node graph $G$ in $k \cdot 2^{O(\sqrt{\log n})}$ rounds of $\mathsf{CONGEST}$.
\end{corollary}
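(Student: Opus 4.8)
The plan is to combine our network decomposition algorithm from \Cref{thm:netdecompintro} with the black-box transformation of Ghaffari and Kuhn \cite{ghaffari2018derandomizing} that converts a network decomposition of a power graph into a sparse neighborhood cover. Recall that \cite{ghaffari2018derandomizing} shows a $(c,d)$ network decomposition of $G^{2k}$ can be transformed into a $c$-sparse $k$-neighborhood cover of diameter $O(k\cdot d)$ in $O(c(d+k))$ rounds of {\sffamily CONGEST}. So the entire corollary follows once we plug in a decomposition of $G^{2k}$ of sufficiently good quality that is computable fast enough with small messages.

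First, I would invoke \Cref{thm:netdecompintro} on the input graph $G$, which has $n$ nodes, $S=\Theta(\log n)$-bit identifiers, and supports $O(\log n)$-bit messages, but with the separation parameter set to $2k$ rather than $k$. This produces a $(g(n),g(n))$ network decomposition of $G^{2k}$, where $g(n)=2^{O(\sqrt{\log n})}$, in $2k\cdot g(n)\cdot\log^* S$ rounds. Since $\log^* S = \log^*(\Theta(\log n)) = O(\log^* n)$ is subsumed by the $2^{O(\sqrt{\log n})}$ factor, this first phase runs in $k\cdot 2^{O(\sqrt{\log n})}$ rounds.

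Second, I would feed this decomposition into the transformation of \cite{ghaffari2018derandomizing} with $c=d=g(n)=2^{O(\sqrt{\log n})}$. This yields a $g(n)$-sparse $k$-neighborhood cover of diameter $O(k\cdot g(n)) = k\cdot 2^{O(\sqrt{\log n})}$, in $O(g(n)(g(n)+k)) = k\cdot 2^{O(\sqrt{\log n})}$ rounds of {\sffamily CONGEST}. Summing the round complexities of the two phases keeps the total at $k\cdot 2^{O(\sqrt{\log n})}$, and the sparsity and diameter bounds are exactly the claimed ones.

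There is no real conceptual obstacle here, since both components already exist; the only things requiring care are bookkeeping: verifying that the $\log^* S$ overhead of \Cref{thm:netdecompintro} is genuinely negligible in the standard regime $S=\Theta(\log n)$ (it is, as $\log^*\log n$ is dominated by $2^{O(\sqrt{\log n})}$), and confirming that running our decomposition with parameter $2k$ instead of $k$ only affects constants in the exponent and a leading constant factor in the round count. I would include these verifications, together with a self-contained restatement of the Ghaffari--Kuhn transformation, in \Cref{app:sparsecover}.
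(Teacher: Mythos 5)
There is a genuine gap: you apply the Ghaffari--Kuhn transformation as a black box, but that transformation presupposes a \emph{strong diameter} $(c,d)$ decomposition of $G^{2k}$, i.e.\ clusters that are connected in $G$ and internally have small diameter, so that each cluster can be grown and coordinated by communicating inside itself. The decomposition produced by \Cref{thm:netdecompintro} (equivalently \Cref{thm:netdecomp}) is only a \emph{weak diameter} decomposition of $G^{2k}$: a cluster need not be connected in $G$, and intra-cluster communication is only possible along auxiliary trees that pass through vertices of other clusters and may overlap with up to $2^{O(\sqrt{\log n})}$ other clusters' trees. Hence the step ``feed this decomposition into the transformation of \cite{ghaffari2018derandomizing}'' is exactly the point that needs an argument, and your claim that only bookkeeping (the $\log^* S$ factor and replacing $k$ by $2k$) requires care misses the actual obstacle; moreover, the target object itself demands strong diameter, since a $k$-neighborhood cover requires a rooted spanning tree of $G[C]$ of small diameter for every output cluster $C$.

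The paper closes this gap by reproving the transformation for weak-diameter input: processing the colors one by one, each cluster $C$ of the current color is enlarged to $C'$ by adding every node of $G$ within distance $k$ of $C$. Because consecutive cluster vertices along the auxiliary tree of $C$ are adjacent in $G^{2k}$, every tree path between two vertices of $C$ with no internal vertex in $C$ has length at most $2k$, so the whole tree lands inside $C'$; this makes $C'$ connected in $G$ with a spanning tree of diameter at most $d+k$, giving the required strong-diameter clusters. Disjointness of expanded clusters of the same color follows from the $2k$-hop separation in the decomposition of $G^{2k}$, which yields sparsity $c$, and the per-color round bound $O(d+k)$ uses the fact that within one color class each edge lies in at most one cluster (\Cref{rem:netdecompoverlap}), so the $k$-hop flooding and intra-cluster aggregation do not congest. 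If you add this expansion argument (or an equivalent reduction from weak to strong diameter), the rest of your parameter bookkeeping is correct and matches the paper.
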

We also resolve an open question by Elkin\cite{elkin2004faster}, who devised a randomized {\sffamily CONGEST} model algorithm for minimum spanning tree, that runs in $\tilde{O}(\mu(G, \omega) + \sqrt{n})$ rounds, where $\mu(G, \omega)$ is the \emph{MST-radius} of $G$.
The MST-radius $\mu(G, \omega)$ is defined as the smallest value $t$, such that every edge not belonging to the MST of $G$ is the heaviest edge in some cycle of length at most $t$.
However, the only part involving randomness is the construction of neighborhood covers.
The author remarks that the only obstacle towards a deterministic algorithm is that there are no known constructions of sparse neighborhood covers in the {\sffamily CONGEST} model.
Using \Cref{cor:sparsecoverintro}, in \Cref{cor:mst} we get a deterministic distributed {\sffamily CONGEST}-model algorithm for computing MST in $2^{O(\sqrt{\log n})} \cdot (\mu(G, \omega) + \sqrt{n})$ rounds.

\paragraph{Other Problems: Spanners and Dominating Set Approximation}
Due to previous applications of $k$-hop separated network decompositions by Ghaffari and Kuhn\cite{ghaffari2018derandomizing} as well as Deurer, Kuhn, and, Maus\cite{deurer2019deterministic} we obtain the following deterministic {\sffamily CONGEST} model algorithms:
In \Cref{app:spanners}, we review a $2^{O(\sqrt{\log n})}$ round algorithm for computing a $(2k-1)$-stretch spanner with size $O(kn^{1+1/k}\log n)$, and a $O(\log \Delta)$-approximation algorithm for minimum dominating set in $2^{O(\sqrt{\log n})}$ rounds.

\subsection{Method Overview and Comparison with Prior Approaches}
\label{sec:overview}

We first discuss our method for deterministic network decomposition, and then discuss our contribution to the MIS problem.

\paragraph{Network Decomposition:}
The general outline is shared by all known deterministic algorithms for network decomposition \cite{awerbuch1989network, panconesi1992improved, ghaffari2018derandomizing, ghaffari2019distributed}.
This method is often referred to as recursive clustering:
In every step, a number of clusters is merged to form new clusters while some other clusters are added to the output and discarded from the algorithm.
However, there are several challenges in applying this approach in the {\sffamily CONGEST} model, and even more so when aiming to compute a decomposition of $G^k$.
% For overcoming these challenges, we deviate from the approach of Ghaffari and Kuhn\cite{ghaffari2018derandomizing}, and use a process closer to the one used by Ghaffari\cite{ghaffari2019distributed}.

Let us address these challenges in two themes, (A) communication within clusters, and (B) communication between clusters:
Our approach entails the fact that clusters can become disconnected in the base graph $G$, even if they are connected in $G^k$.
While this means that we have more freedom in how we merge clusters, it also requires us to take extra care to allow for intra-cluster communication.
As clusters can now ``overlap'', a single edge of $G$ could be used by many clusters for communication.
We will have a two stage process to ``introduce'' clusters to their neighboring clusters (in $G^k$), which will enable us to bound the ``overlap'' between clusters.
To be more precise we will argue that any edge is used by at most $2^{O(\sqrt{\log n})}$ many clusters for communication.
This allows us to have simultaneous communication in all clusters with just a $2^{O(\sqrt{\log n})}$ factor overhead.

For challenge (B), we would like to simulate communication in $G_C$ on $G$, where $G_C$ is the virtual graph obtained by contracting each cluster into a node and connecting two clusters if they contain nodes that are adjacent in $G^k$.
However, this simulation is not directly possible in the {\sffamily CONGEST} model, as nodes in $G_C$ can have degree much larger than $\Delta$, leading to congestion for communication within clusters.
The solution to this problem will also be the introduction process mentioned above.
Roughly speaking, we will ignore some edges from $G_C$ as well as some vertices of high degree.
This allows for an efficient simulation of communication in $G_C$ on the base graph $G$.

\paragraph{Maximal Independent Set:}
For solving MIS, we follow the outline of the \emph{shattering technique}, first introduced into distributed computing by Barenboim et al.\cite{barenboim2016locality}, and also used for the MIS problem by \cite{ghaffari2016improved, ghaffari2019distributed}.
There are two parts:
In the pre-shattering phase, we solve the problem for a large portion of the graph, leaving only a number of ``small'' connected components.
Then, in the post-shattering phase, we solve the problem on the remaining parts.

In the pre-shattering phase, we use the $O(\log \Delta)$-round algorithm of Ghaffari\cite{ghaffari2016improved}, which works with just single-bit messages.
Afterwards, we are left with ``small'' components.
For now, assume that they have size\footnote{They do not contain $O(\log n)$ nodes, but rather up to $O(\Delta^4 \log n)$ many vertices. However, we will see that we can efficiently cluster them into $O(\log n)$ clusters of diameter only $O(\log \log n)$.} $O(\log n)$.
By computing a network decomposition of each component, we can further simplify the problem:
We go through the color classes, one by one, each time computing an MIS of the new color, that does not conflict with the MIS of the previous colors.
We solve the problem by running $O(\log n)$ independent copies of Ghaffari's $O(\log \Delta)$-round randomized MIS algorithm algorithm \cite{ghaffari2016improved}, all in parallel.
This parallel execution is possible in the {\sffamily CONGEST} model because the algorithm from \cite{ghaffari2016improved} only uses single-bit messages.
With high probability (i.e. at least $1 - 1/\textrm{poly}(n)$), at least one of these independent runs succeeds in computing an MIS.
Using the fact that we are solving the problem in a graph of low diameter, we can efficiently coordinate all nodes to find a successful run.

The main challenge is obtaining a suitable network decomposition:
We could use the network decomposition algorithm from \Cref{thm:netdecompintro} and get an MIS algorithm with round complexity $\log \Delta \cdot 2^{O(\sqrt{\log \log n})}$.
This only matches the previous work of Ghaffari\cite{ghaffari2019distributed}.
Also, randomized algorithms for network decomposition are hard to apply, as we are computing decompositions of graphs that only contain $N = O(\log n)$ nodes.
This means that the success probability of randomized algorithms for network decomposition, such as \cite{linial1993low, elkin2016distributed}, will only be $1 - 1/\textrm{poly}(N) \ll 1 - 1/\textrm{poly}(n)$.

We get around these issues in two steps:
First, we compute a $k$-hop separated network decomposition of each component.
Then, we use this network decomposition to boost the success probability of a randomized network decomposition algorithm, inspired by \cite{elkin2016distributed, miller2013parallel, blelloch2014nearly}.
While Ghaffari\cite{ghaffari2019distributed} used a similar idea to also get an $O(\log \Delta \cdot \log \log n) + 2^{O(\sqrt{\log \log n \cdot \log \log \log n})}$ round algorithm for MIS, we improve upon it in two ways:
First, our network decomposition of $G^k$ has better bounds, and second, we use a randomized process for computing a refined network decomposition.
This randomized process allows us to further reduce the number of colors needed, from $O(\log \log n)$ to $O(\sqrt{\log \log n})$.

\subsection{Mathematical Notation}

For a graph $G = (V,E)$ and two nodes $u, v \in V$, we define $d_G(u,v)$ to be the hop distance between $u$ and $v$.
For a node $v \in V$ and a set $U \subset V$, $dist_G(v,U)$ is the minimum distance between $v$ and any $u \in U$.
For an integer $k \geq 1$ we define the \emph{$k^{th}$ power} $G^k = (V, E')$ of $G$ to be the graph with an edge $\{u, v \} \in E'$ whenever $d_G(u,v) \leq k$.
Given a node $v \in V$, we define $N_{G,k} := \{ u \in V : d_G(u,v) \leq k \}$ to be the \emph{$k$-hop neighborhood} of $v$.

For two integers $\alpha \geq 1$ and $\beta \geq 0$ and a node set $B \subseteq V$, we call $B^* \subseteq B$ a $(\alpha, \beta)$-ruling set of $G$ w.r.t. $B$ if (A) for any two nodes $u, v \in B^*$ we have $d_G(u, v) \geq \alpha$, and (B) $\forall u \in B \setminus B^*$, there is a node $v \in B^*$ such that $d_G(u, v) \leq \beta$.
If $B = V$, $B^*$ is simply called an $(\alpha, \beta)$ ruling set of $G$.

\section{Network Decomposition}
\label{sec:netdecomp}

In this section we describe our algorithm for computing network decompositions of power graphs in the {\sffamily CONGEST} model, as outlined in \Cref{thm:netdecompintro}.
It matches the bounds of the algorithm by Panconesi and Srinivasan\cite{panconesi1992improved}, but improves upon it in three aspects that are crucial to our applications:
our algorithm works in the $\mathsf{CONGEST}$ model, can tolerate large identifiers and is able to produce a network decomposition of $G^k$.
While the first two properties were already achieved by Ghaffari\cite{ghaffari2019distributed}, the third property is new to our approach.

Note that it is trivial to achieve such a decomposition using messages of unbounded size, by just simulating communication in $G^k$ on $G$ (with a $k$ factor overhead).
In the $\mathsf{CONGEST}$ model this idea is not directly applicable and presents two challenges:
(A) how do we deal with clusters being disconnected in the base graph and (B) how do we get around simulating all communication in $G^k$ on $G$?
For the first issue we will bound the number of clusters that are overlapping.
For the second issue we will see that not all communication is necessary.

Before proceeding to the algorithm, we restate \Cref{thm:netdecompintro} in slightly more detail:
\begin{theorem} \label{thm:netdecomp}
  There is a deterministic distributed algorithm that in any $N$-node network $G$, which has $S$-bit identifiers and supports $O(S)$-bit messages for some arbitrary $S$,
  computes a $(g(N), g(N))$ network decomposition of $G^k$ in $k g(N) \cdot \log^* S$ rounds, for any $k$ and $g(N) = 2^{O(\sqrt{\log N})}$.
  Additionally we can simulate one round of communication within clusters of $G^k$ in $k 2^{O(\sqrt{\log N})}$ rounds of communication on $G$.
\end{theorem}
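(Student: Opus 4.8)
The plan is to follow the recursive-clustering paradigm shared by all prior deterministic network decomposition algorithms (\cite{awerbuch1989network,panconesi1992improved,ghaffari2018derandomizing,ghaffari2019distributed}), but to execute it on the virtual cluster graph $G_C$ obtained by contracting each current cluster and connecting two clusters whenever they contain $G^k$-adjacent nodes, while carefully controlling the \emph{overlap} (how many clusters route through a single edge of $G$) so that one round on $G_C$ costs only $2^{O(\sqrt{\log N})}$ rounds on $G$. Concretely: start with each node a singleton cluster. Run $\Theta(\sqrt{\log N})$ phases; in each phase, partition the cluster-ID space into $2^{O(\sqrt{\log N})}$ classes (as in \cite{panconesi1992improved}), and within each class greedily build a maximal set of ``merge-centers'' at $G_C$-distance $\geq 2$ (i.e.\ an MIS / ruling set on a power of $G_C$), absorbing neighboring lower-class clusters; clusters not absorbed in a phase get assigned the current color and removed. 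After $O(\sqrt{\log N})$ phases every cluster has been output, giving $c = 2^{O(\sqrt{\log N})}$ colors, and a standard induction shows each cluster's $G_C$-radius — hence its weak diameter in $G$ measured along $G^k$-edges — is $2^{O(\sqrt{\log N})} \cdot k$ in $G$ (the factor $k$ because each $G^k$-edge is a $k$-hop path, and recall the decomposition of $G^k$ is only required to be weak-diameter).

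The technical heart, and the place I expect the real work to be, is making a single round of communication on $G_C$ implementable in $k\cdot 2^{O(\sqrt{\log N})}$ rounds on $G$ — this is simultaneously the ``additionally'' clause of the theorem and the engine of the round-complexity bound. Two obstacles must be handled. First, \emph{intra-cluster congestion}: a cluster is disconnected in $G$ but connected in $G^k$, so to aggregate/broadcast inside a cluster we route along a spanning tree of the cluster in $G^k$, each tree-edge a $k$-hop path in $G$; since the cluster has $G_C$-radius $2^{O(\sqrt{\log N})}$, a convergecast/broadcast costs $k\cdot 2^{O(\sqrt{\log N})}$ rounds \emph{provided} only $2^{O(\sqrt{\log N})}$ clusters' trees cross any fixed edge of $G$. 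Second, \emph{inter-cluster degree blowup}: a cluster may have far more than $\Delta$ neighbors in $G_C$, so we cannot afford to talk to all of them. The fix — the ``two-stage introduction process'' promised in the overview — is to (i) have each node broadcast its cluster ID to its $k$-hop ball, so every node learns the IDs of clusters it touches, then (ii) let each cluster learn only a \emph{bounded} number of neighboring cluster IDs per phase, specifically by sorting potential neighbors by the class partition and, within the relevant class, keeping only enough neighbors to run the ruling-set step; high-degree clusters and the remaining incident $G_C$-edges are simply dropped in that phase (a dropped cluster still gets a color in a later phase, so correctness is unaffected). One then proves the overlap bound by a counting argument: since in each of the $O(\sqrt{\log N})$ phases a node participates in the introduction of at most $2^{O(\sqrt{\log N})}$ cluster IDs, and a cluster's tree only uses edges whose endpoints were introduced to it, any single $G$-edge lies on at most $(\text{\#phases})\cdot 2^{O(\sqrt{\log N})} = 2^{O(\sqrt{\log N})}$ cluster trees. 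The $\log^* S$ factor enters solely because the ruling-set / MIS-on-a-constant-power subroutine used to pick merge-centers within a color class is run via a Linial-style / Cole–Vishkin-style coloring that costs $O(\log^* S)$ when identifiers are $S$ bits — and crucially this is the \emph{only} place $S$ appears, so identifier length affects the round count but not $c$, $d$, or the overlap, which is exactly property (B).

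I would organize the write-up as: (1) define $G_C$ and the per-phase class partition; (2) the introduction subroutine and the lemma bounding, for each phase, the number of cluster IDs any node handles and hence the per-edge overlap by $2^{O(\sqrt{\log N})}$; (3) the merge step via a ruling set of the (dropped-edge-restricted) cluster graph, costing $O(\log^* S)$ communication rounds on $G_C$, each simulated in $k\,2^{O(\sqrt{\log N})}$ rounds on $G$ by (2); (4) an induction over phases showing the invariants ``$G_C$-radius $\leq 2^{O(\sqrt{\log N})}$'' and ``$\leq 2^{O(\sqrt{\log N})}$ colors used so far,'' and that every node is output within $O(\sqrt{\log N})$ phases; (5) assemble the round count $k\,g(N)\log^* S$ and note the simulation claim is (2)+(3). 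The main obstacle is step (2): getting the overlap bound right requires that the introduction process be defined so that a cluster never commits tree-edges through a node that did not explicitly learn its ID, and that the number of IDs learned per node per phase is genuinely $2^{O(\sqrt{\log N})}$ rather than, say, polynomial in $\Delta$; everything downstream is a fairly mechanical adaptation of \cite{panconesi1992improved} once this congestion control is in place.
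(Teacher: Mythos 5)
Your high-level architecture matches the paper's: recursive clustering on the virtual cluster graph, spanning trees through $G^k$-edges with a per-edge overlap bound of $2^{O(\sqrt{\log N})}$ that makes one cluster-graph round cost $k\cdot 2^{O(\sqrt{\log N})}$ rounds on $G$, a bounded ``introduction'' step so that clusters only learn a limited number of neighbor IDs, and Linial-style coloring as the sole source of the $\log^* S$ factor. However, there is a genuine gap in the combinatorial engine that drives the phase count and the color count. You propose to partition the cluster-ID space into classes, pick merge-centers by a ruling set within a class, \emph{drop} high-degree clusters for the phase (``a dropped cluster still gets a color in a later phase''), and give ``the current color'' to all clusters not absorbed. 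This gets the roles of high- and low-degree clusters backwards relative to what actually makes the bounds work. In the paper, a cluster with at least $2d$ neighboring clusters ($d=2^{O(\sqrt{\log N})}$) is precisely the kind of cluster that is \emph{forced} to join a new cluster: a maximal 2-independent set of high-degree clusters is computed in the (degree-truncated, undirected) cluster graph $H$, every high-degree cluster merges around such a center or around one of the few ``marked'' clusters of huge out-degree, and therefore each new cluster contains at least $2d$ old clusters (or is one of at most $n/(2d^{i+1})$ marked centers). This is the counting that yields invariant (A), the geometric decay of the number of clusters by a factor $d$ per phase, and hence termination after $\sqrt{\log N}$ phases. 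Your scheme provides no mechanism that forces the cluster count (or any potential) to drop fast, so the claim that ``after $O(\sqrt{\log N})$ phases every cluster has been output'' is unsubstantiated; deferring high-degree clusters to later phases does not make them cheaper to handle later.

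Relatedly, the clusters that are removed by coloring cannot all receive a single color per phase: two unabsorbed clusters can be adjacent in the cluster graph (i.e., within distance $k$ in $G$), so giving them ``the current color'' violates the decomposition property. The paper only colors the \emph{low-degree} leftovers (at most $2d$ cluster-neighbors each), and colors them properly with $O(d^2)$ fresh colors per phase via Linial's algorithm on the cluster graph, which is where the $\sqrt{\log N}\cdot O(d^2)=2^{O(\sqrt{\log N})}$ total color count comes from; your accounting of $2^{O(\sqrt{\log N})}$ colors has no corresponding argument that same-colored clusters are $k$-separated. Your overlap/introduction sketch is in the right spirit (the paper implements it by letting each node forward at most $2d+1$ cluster IDs over $(2d+1)(k-1)$ rounds, then reversing with up to $4d^2$ messages per edge and marking clusters of out-degree above $4d^2$, giving the per-phase per-edge bound of $O(d^3)$ trees), but without the high-degree/low-degree dichotomy and the resulting invariant on the number of clusters, the proposal does not yield the stated $(g(N),g(N))$ decomposition in $k\,g(N)\log^* S$ rounds.
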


\begin{remark} \label[remark]{rem:netdecomp}
  If we initially have $N$ clusters, each with an $S$-bit center identifier and with radius at most $r$, the algorithm of \Cref{thm:netdecomp} computes a $(g(N), r g(N))$ network decomposition of $G^k$ in $k r g(N) \cdot \log^* S$ rounds.
\end{remark}

\begin{proof}[Proof of \Cref{thm:netdecomp}]
  We first note that the recursive nature of the algorithm makes it directly applicable to use with an initial clustering, as described in \Cref{rem:netdecomp}.

  \paragraph*{Overall Structure:}
  The algorithm consists of phases $i = 1, \dots, \sqrt{\log N}$, each of which runs in $k \cdot 2^{O(\sqrt{\log N})} \cdot \log^* S$ rounds.
  During each phase, the (remaining) vertices are partitioned into vertex-disjoint clusters.
  Each cluster has one center node (which will be the identifier of the cluster), as well as a tree rooted at the center that spans all vertices of this cluster (and potentially also contains vertices of other clusters).
  However we have that any two nodes of the cluster are connected by a path of length at most $k$ in this tree.
  Note that both edges and vertices of $G$ can be included in multiple trees.

  Initially, every node is its own cluster.
  During each phase some clusters join each other to form some new clusters, while the other clusters are colored and removed from the algorithm.
  Let $d = 2^{O(\sqrt{\log N})}$.
  We maintain the following invariants during the $i^{th}$ phase:
  \begin{enumerate}[(A)]
    \item We have at most $n / d^i$ clusters.
    \item The radius of each cluster is at most $h_i = (O(1))^i \leq 2^{O(\sqrt{\log N})}$ in $G^k$, which means its radius is at most $k \cdot 2^{O(\sqrt{\log N})}$ in $G$.
    \item Each edge $e$ is part of at most $i \cdot 13 d^3$ spanning trees, which is always at most $2^{O(\sqrt{\log N})}$.
  \end{enumerate}
  Note that for $i = 0$, these invariants are trivially fulfilled.
  The first invariant ensures that after $\sqrt{\log N}$ phases, there is at most one cluster left, which we can then color with one color and finish the algorithm.
  The second invariant means that cluster radii remain small enough.
  Invariants (B) and (C) together imply that we can perform $2^{O(\sqrt{\log N})} \cdot \log^*S$ iterations of broadcast and convergecast in each cluster within each phase.
  This is because we can simulate a round of communication along an edge in the spanning tree of $G^k$ within $k \cdot 2^{O(\sqrt{\log N})}$ rounds of communication on $G$.

  \paragraph{Informal Outline of each phase:}
  % We will proceed as follows:
  We start out with a set of (old) clusters and will merge some of them into new clusters, while we color the remaining ones and add them to the resulting decomposition.
  In a first step, each cluster $\mathcal{C}$ will try to learn its neighboring clusters.
  If $\mathcal{C}$ has more than $4d^2$ neighbors, we call $\mathcal{C}$ \emph{marked}. %and ignore it for now.
  If we now consider the graph $\mathcal{G}$ induced by all non-marked clusters, it has maximum degree $\Delta_{\mathcal{G}} \leq 4d^2$.
  This allows us to simulate communication within $\mathcal{G}$ in the underlying network, with about a $4d^2$ overhead in the round complexity (ignoring cluster diameters).
  We will use this fact to find a well-separated set $\mathcal{C}^*$ in $\mathcal{G}$.
  % In this cluster graph, we will find an independent-set of high-degree vertices that will form the centers of new clusters, together with the marked clusters.
  All clusters from $\mathcal{C}^*$, together with the marked clusters will now form the centers of new clusters.
  Then all old clusters that have a neighboring center join this center to form a new cluster.
  Intuitively, as all new cluster centers have high degree, there cannot be a lot of them.
  What we are now left with is a set of low-degree clusters that are not part of any newly formed cluster.
  We can now just color these remaining clusters, using standard coloring techniques, and add them to the final output.
  As the degrees are low, the number of required colors is also low.

  \paragraph*{Building a small in-degree virtual graph $H$:}
  Call two clusters $\mathcal{C}$ and $\mathcal{C}'$ neighboring if they contain vertices $v \in \mathcal{C}$ and $v' \in \mathcal{C}'$ such that $v$ and $v'$ are at distance at most $k$ in $G$.
  This means that $v$ and $v'$ are neighbors in $G^k$.
  Similarly, a node $v$ and a cluster $\mathcal{C}$ are called neighboring if there is some $u \in \mathcal{C}$ such that $u$ is at distance at most $k$ from $v$.

  Now we want every cluster to learn about up to $2d$ many neighboring clusters.
  More precisely, a cluster that has less than $2d$ neighbors should learn about all of its neighbors.
  If it has more than $2d$ neighbors, it learns about some $2d$ of them.
  We can do so in $O(k \cdot d)$ rounds:
  Every node starts a broadcast, sending the identifier of its current cluster to all neighbors.
  Then, over $(2d+1) \cdot (k-1)$ rounds, every node $v$ forwards up to $2d+1$ different such messages about clusters of distance up to $(k-1)$ from $v$.
  This way, if node has at most $2d$ neighboring clusters it learns about all of them and if there are more, it learns about at least $2d$ many, of which it picks some $2d$ many arbitrarily.
  Within clusters, the nodes convergecast at most $2d$ identifiers to the center node.
  This is possible in $O((d + k  \cdot h_i) \cdot 2^{O(\sqrt{\log N})}) = k \cdot 2^{O(\sqrt{\log N})}$ rounds, as invariant (C) states that at most $2^{O(\sqrt{\log N})}$ clusters overlap.
  Thus, a $2^{O(\sqrt{\log n})}$ round overhead is enough to allow all clusters to perform a convergecast at the same time.

  This process creates a directed virtual graph $H$ among the clusters, where an edge $\mathcal{C} \rightarrow \mathcal{C}'$ indicates that the center of $\mathcal{C}'$ received the identifier of $\mathcal{C}$.
  We call a cluster \emph{high-degree} if it has at least $2d$ neighboring clusters, and \emph{low-degree} otherwise, see \Cref{fig:clusters}.
  Notice that a low-degree clusters has all neighboring clusters as incoming edges in $H$.
  Also, a high-degree cluster has at least $2d$ incoming edges.

  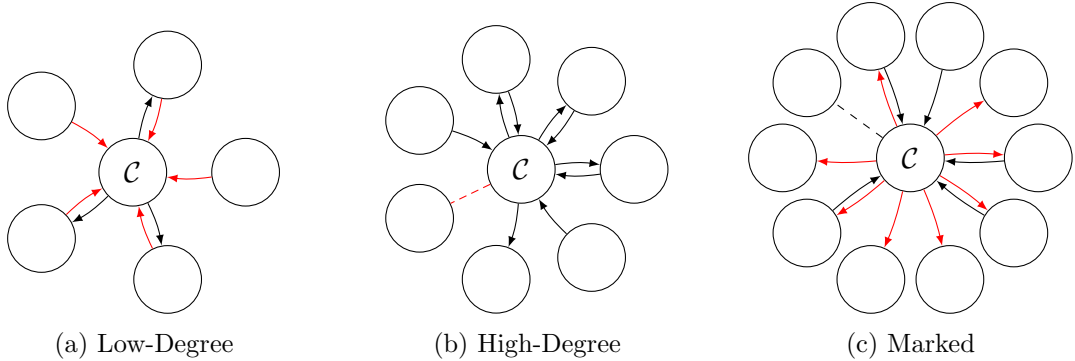
\begin{figure}
    \centering
    \begin{subfigure}[b]{0.3\textwidth}
      \centering
      \begin{tikzpicture}
        \node[draw, circle, minimum size=0.9cm] (center) at (0,0) {$\mathcal{C}$};
        \foreach \a in {1,2,...,5}{
          \node[draw, circle, minimum size=0.9cm] (c-\a) at (\a*360/5: 1.5cm) {};
        }
        \foreach \a in {1,2,3,4,5}{
          \draw[latex-, red] (center) to[bend right=8] (c-\a);
        }
        \foreach \a in {1,3,4}{
          \draw[-latex] (center) to[bend left=8] (c-\a);
        }
      \end{tikzpicture}
      \caption{Low-Degree}
    \end{subfigure}
    \begin{subfigure}[b]{0.3\textwidth}
      \centering
      \begin{tikzpicture}
        \node[draw, circle, minimum size=0.9cm] (center) at (0,0) {$\mathcal{C}$};
        \foreach \a in {1,2,...,7}{
          \node[draw, circle, minimum size=0.9cm] (c-\a) at (\a*360/7: 1.5cm) {};
        }
        \foreach \a in {1,2,3,6,7}{
          \draw[latex-] (center) to[bend right=8] (c-\a);
        }
        \foreach \a in {1,2,5,7}{
          \draw[-latex] (center) to[bend left=8] (c-\a);
        }
        \foreach \a in {4}{
          \draw[densely dashed, red] (center) to (c-\a);
        }
      \end{tikzpicture}
      \caption{High-Degree}
    \end{subfigure}
    \begin{subfigure}[b]{0.3\textwidth}
      \centering
      \begin{tikzpicture}
        \node[draw, circle, minimum size=0.9cm] (center) at (0,0) {$\mathcal{C}$};
        \foreach \a in {1,2,...,10}{
          \node[draw, circle, minimum size=0.9cm] (c-\a) at (\a*360/10: 1.7cm) {};
        }
        \foreach \a in {2,3,6,9,10}{
          \draw[latex-] (center) to[bend right=5] (c-\a);
        }
        \foreach \a in {1,3,5,6,7,8,9,10}{
          \draw[-latex, red] (center) to[bend left=5] (c-\a);
        }
        \foreach \a in {4}{
          \draw[dashed] (center) to (c-\a);
        }
      \end{tikzpicture}
      \caption{Marked}
    \end{subfigure}
    \caption{Different states of a cluster $\mathcal{C}$ in the virtual graph $H$, where clusters are vertices and an edge $\mathcal{C} \to \mathcal{C}'$ means that the center of $\mathcal{C}'$ received the identifier of $\mathcal{C}$.
    Dashed lines indicate that clusters are neighboring, but neither center received the ID of the other center.}
    \label{fig:clusters}
  \end{figure}

  \paragraph*{Making $H$ undirected with small degrees:}
  One problem is that $H$ is a directed graph with possibly large out-degrees, while we would like to have an undirected graph with small degrees.
  Additionally we would like to keep the fact that all low degree clusters are adjacent to all their neighboring clusters in this virtual graph.
  For that, we first \emph{mark} clusters of extremely high out-degree as follows:
  we reverse the communication direction of the previous paragraph, but instead of sending just one message per round along each edge, we send up to $4d^2$ messages.
  This increases the number of rounds by at most a $4d^2$ factor.
  This way, if a message from some cluster $\mathcal{C}$ was sent along an edge in the previous phase, we send up to $4d^2$ messages in the opposite direction, all from clusters that received the identifier of $\mathcal{C}$.
  If more clusters received the identifier of $\mathcal{C}$, we just inform $\mathcal{C}$ that it will be marked.
  This can be done within $O(k \cdot d^3) = k \cdot 2^{O(\sqrt{\log N})}$ rounds, as every round from the previous paragraph now takes $4d^2$ as long.
  Also, at most $(2d+1)\cdot 4d^2 \leq 12 d^3$ many messages are sent along each edge.
  As in the previous paragraph, we can now convergecast the identifiers of at most $4d^2$ outgoing neighbors in $O((d^2 + k \cdot h_i) \cdot 2^{O(\sqrt{\log N})}) = k \cdot2^{O(\sqrt{\log N})}$ rounds to the cluster centers, marking them the same way as before.

  Now, we temporarily remove marked clusters from $H$; we later discuss how to deal with them.
  Note that there are at most $\frac{n}{2d^{i+1}}$ many marked clusters.
  This is because each cluster has in-degree at most $2d$, which means at most a $1/(2d)$ fraction of clusters can have out-degree exceeding $4d^2$.
  This is at most $\frac{n}{d^i} \cdot \frac{1}{2d}$ many clusters, by invariant (A).
  We now have an undirected virtual graph on the clusters, which has degree at most $4d^2$.

  \paragraph*{Computing a Maximal 2-Independent Set in H:}
  $H$ has now degree at most $4d^2$, but we need an additional fact to ensure that we can simulate the communication along $H$ in $G$:
  Every edge is part of at most $12 d^3 = 2^{O(\sqrt{\log N})}$ edges of $H$.
  This is because we can think of every message in the previous phase as trying to establish an edge between two clusters $\mathcal{C}$,  $\mathcal{C}'$ in $H$.
  Such an edge is only established if a message from $\mathcal{C}$ actually reaches $\mathcal{C}'$.
  As every edge in $G$ only forwarded $12 d^3$ many such messages, it can only be part of as many edges in $H$.

  This means that we can now simulate one round of {\sffamily CONGEST} model on $H$ in $O(k \cdot d^3 + (d^2 + k \cdot h_i) \cdot 2^{O(\sqrt{\log n})}) = k \cdot d^3 \cdot 2^{O(\sqrt{\log N})}$ rounds of the base graph.
  This is because every edge is additionally only part of at most $2^{O(\sqrt{\log N})}$ clusters, by our invariant (C).
  Using that, we first compute a coloring of $H^2$, hence ensuring that any two clusters that are within 2 hops in $H$ have different colors.
  That can be done in $O(d^{8} \log^* S \cdot (k \cdot d^3 \cdot 2^{O(\sqrt{\log n})}) = O(d^{11} \log^* S \cdot 2^{O(\sqrt{\log N})})$ rounds, using Linial's algorithm \cite{linial1987distributive}, which runs in $O(\Delta_H^2 \log^* S)$ rounds, as $H^2$ has maximum degree $\Delta_H = O(d^4)$.
  Then, we compute a maximal 2-independent set $C^*$ of high-degree clusters (this is the definition of high degree mentioned above, which is with respect to the neighborhood of clusters in $G^k$).
  Here, 2-independent set means that no two clusters in $C^*$ should share a common neighboring cluster in $H$.
  We can do so by going through all colors one by one, adding clusters to $C^*$ that do not already have a cluster from $C^*$ within distance two in $H$.
  Any $\mathcal{C}$ that has a $\mathcal{C}' \in C^*$ within its 2-cluster-hops joins the new cluster being formed at the center of $\mathcal{C}'$.
  As all high-degree clusters not in $C^*$ must have a neighbor in $C^*$ within 2-cluster hops, all high-degree clusters are part of a newly formed cluster.

  \paragraph*{Forming new clusters:}
  Each high-degree cluster $\mathcal{C}' \in C^*$ has two cases: (I) either none of the neighboring clusters of $\mathcal{C}'$ were marked, in which case all of them will join the new cluster being formed by $\mathcal{C}'$.
  This means that the new cluster contains at least $2d$ many old clusters.
  Thus, there are at most $\frac{n}{d^i}\cdot \frac{1}{2d}$ many such new clusters.
  (II) at least one of the neighboring clusters of $\mathcal{C}'$ was marked.
  In this case, after $\mathcal{C}'$ accepts the clusters that want to join with it, $\mathcal{C}'$ picks one of its marked neighbors and joins a new cluster centered at that marked cluster.
  To make clusters learn about neighbors, use $k$ rounds of flooding, initiated at all nodes of marked clusters.
  This way, we have to send the identifier of at most one marked cluster along each edge, to ensure that all clusters know if they have a marked neighbor.
  Since there are at most $\frac{n}{2d^{i+1}}$ many marked clusters, the number of the new clusters of this kind is also at most $\frac{n}{2d^{i+1}}$.

  \paragraph{Proving the inductive invariants:}
  By the previous paragraph, we have at most $\frac{n}{d^{i+1}}$ many new clusters, proving invariant (A).
  Regarding invariant (B), first notice that each new cluster that we form is made of some of the previous clusters, all of which were within $O(1)$ cluster hops (w.r.t. distances in $G^k$) of the center of the merge (either in $C^*$ or a marked cluster).
  Hence, the maximum cluster radius grows by at most a factor of $O(1)$, which shows that each cluster radius in phase $i$ is at most $(O(1))^{i+1}$ in $G^k$.

  For invariant (C), we have already argued that due to merges between non marked clusters, every edge is used by at most $12 d^3$ many additional clusters, as these merges only happen along edges of $H$.
  For the merging centered at a marked cluster $\mathcal{C}$, we have that if an edge $e$ is part of a path that informed some other clusters about $\mathcal{C}$, they might merge with $\mathcal{C}$ or some other marked cluster.
  In either case, $e$ is included in at most one additional cluster.
  As all of those will merge to the same cluster, we have that $e$ is used by at most $12 d^3 + 1 \leq 13 d^3$ additional clusters.
  By induction, there are at most $i \cdot 13 d^3 + 13d^3 = (i+1) \cdot 13d^3$ many spanning trees that include a given edge.

  \paragraph*{Coloring low-degree old clusters that remain:}
  Finally, we are left with only low-degree clusters, as we have included all high-degree clusters in a new cluster.
  This means that all remaining clusters have at most $2d$ neighboring clusters.
  We can color these cluster using $O(d^2)$ colors by applying Linial's algorithm \cite{linial1987distributive} which runs in $O(\log^* S)$ rounds of {\sffamily CONGEST} on top of the cluster graph, that is, in $2^{O(\sqrt{\log N})}\log^* S$ rounds of the base graph.
  As we use different colors for each phase, we get a total of $\sqrt{\log N} \cdot O(d)^2 = 2^{O(\sqrt{\log N})}$ colors.
\end{proof}

\begin{remark} \label[remark]{rem:netdecompoverlap}
  Even though edges can be part of up to $2^{O(\sqrt{\log N})}$ many clusters, per color class they can be included in at most one cluster.
  This is because otherwise we would have two clusters with the same color that are at distance less than $k$.
\end{remark}

\section{Implications on MIS}
\label{sec:mis}
\vspace{-5pt}

In this section we present our improved algorithm for computing maximal independent set in the {\sffamily CONGEST} model.
In particular, we prove the following:
\begin{theorem} \label{thm:congestMIS}
  There is a randomized distributed algorithm, with $O(\log n)$-bit messages, that computes an MIS in $O(\log \Delta \cdot \sqrt{\log \log n}) + 2^{O(\sqrt{\log \log n})}$ rounds, w.h.p.
\end{theorem}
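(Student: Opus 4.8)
The plan is to use the shattering framework of \cite{barenboim2016locality,ghaffari2016improved}. \emph{Pre-shattering phase.} First I would run Ghaffari's $O(\log\Delta)$-round randomized MIS algorithm \cite{ghaffari2016improved}, which uses single-bit messages and hence runs in {\sffamily CONGEST}. Its analysis guarantees that, with high probability, after $O(\log\Delta)$ rounds every connected component induced by the still-undecided vertices has at most $O(\Delta^4\log n)$ vertices, and --- exploiting the structure of the process --- each such component can, in a further $2^{O(\sqrt{\log\log n})}$ rounds, be partitioned into $N=O(\log n)$ clusters of radius $O(\log\log n)$. From here on everything is done on each component separately (and in parallel), with $S=O(\log n)$-bit identifiers.

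\emph{A $k$-hop separated decomposition of each component.} Next I would apply \Cref{thm:netdecomp}, via \Cref{rem:netdecomp}, to each component, taking its $N=O(\log n)$ radius-$O(\log\log n)$ clusters as the initial clustering and fixing the separation parameter $k=2^{\Theta(\sqrt{\log\log n})}$. As $g(N)=2^{O(\sqrt{\log N})}=2^{O(\sqrt{\log\log n})}$ and $\log^*S\le\log\log n=2^{O(\sqrt{\log\log n})}$, this yields a weak-diameter $\big(2^{O(\sqrt{\log\log n})},2^{O(\sqrt{\log\log n})}\big)$ network decomposition of $G^{k}$ restricted to the component, plus the primitive for simulating one round inside a cluster of $G^{k}$ in $2^{O(\sqrt{\log\log n})}$ rounds on $G$; the total cost is $k\cdot r\cdot g(N)\cdot\log^*S=2^{O(\sqrt{\log\log n})}$ rounds.

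\emph{Randomized refinement.} The core step, and the place I expect the real difficulty, is to convert this $2^{O(\sqrt{\log\log n})}$-color decomposition of $G^{k}$ into an ordinary network decomposition of $G$, on each component, with only $O(\sqrt{\log\log n})$ colors, still of diameter $2^{O(\sqrt{\log\log n})}$, and with failure probability $\le 1/\mathrm{poly}(n)$. Following \cite{miller2013parallel,blelloch2014nearly,elkin2016distributed}, I would run an exponential-shift ball-carving decomposition, but \emph{localized} using the $k$-hop decomposition: since two equally-colored $k$-hop clusters are at $G$-distance more than $k$, a carving confined to an $O(k)$-neighborhood of each such cluster can be run in all of them simultaneously through the cluster-simulation primitive, and its failure becomes a \emph{local} event. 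To drive the success probability up to $1-1/\mathrm{poly}(n)$ I would run $\Theta(\log n)$ independent copies of the whole refinement in parallel --- packed into the $O(\log n)$-bit message budget, which is possible because a single carving needs only a few bits per edge per round --- and, inside each $k$-hop cluster (diameter $2^{O(\sqrt{\log\log n})}$), convergecast/broadcast to agree on a copy that succeeded there; a union bound over the at most $n$ clusters finishes. Verifying that concatenating the $2^{O(\sqrt{\log\log n})}$ $k$-hop colors with the $O(\sqrt{\log\log n})$ carving colors gives a valid weak-diameter decomposition of $G$, that all of this stays within $O(\log n)$-bit messages, and that the failure bound is truly exponentially small in the component size (so that it also covers small components) is the main obstacle.

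\emph{From the refined decomposition to an MIS.} Finally, given an $\big(O(\sqrt{\log\log n}),2^{O(\sqrt{\log\log n})}\big)$ network decomposition of $G$, I would sweep through its color classes $j=1,\dots,O(\sqrt{\log\log n})$: in step $j$, compute an MIS of the color-$j$ vertices that have no already-selected neighbor by running $\Theta(\log n)$ independent copies of Ghaffari's single-bit $O(\log\Delta)$-round MIS algorithm in parallel (all fitting in one $O(\log n)$-bit message). Since each cluster has at most $O(\Delta^4\log n)$ vertices and diameter $2^{O(\sqrt{\log\log n})}$, with high probability at least one copy produces an MIS of that cluster, and being in a low-diameter graph we can identify and broadcast such a copy in $2^{O(\sqrt{\log\log n})}$ rounds. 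Each step thus costs $O(\log\Delta)+2^{O(\sqrt{\log\log n})}$ rounds, for a total of $O(\log\Delta\cdot\sqrt{\log\log n})+2^{O(\sqrt{\log\log n})}$; adding the $O(\log\Delta)$ of pre-shattering and the $2^{O(\sqrt{\log\log n})}$ of the decomposition step gives the claimed bound, and the whole algorithm uses only $O(\log n)$-bit messages.
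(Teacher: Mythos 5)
Your proposal follows essentially the same route as the paper's proof: shattering via Ghaffari's single-bit algorithm, clustering the leftover components into $N=O(\log n)$ meta-nodes of radius $O(\log\log n)$ (the paper does this with a $(5,O(\log\log n))$ ruling set), applying \Cref{thm:netdecomp} via \Cref{rem:netdecomp} with this initial clustering to get a $2^{O(\sqrt{\log\log n})}$-separated decomposition, refining it by exponential-shift ball carving amplified through $\Theta(\log n)$ parallel runs with per-cluster selection (this is exactly \Cref{lem:netdecompmis}), and finally sweeping the $O(\sqrt{\log\log n})$ colors with parallel copies of the single-bit MIS algorithm. The ``main obstacle'' you flag is handled in the paper just as you sketch: $\sqrt{\log N}$ carving phases with shift parameter $\beta=2^{-\sqrt{\log N}-2}$ and radius cap $d=2^{2\sqrt{\log N}}$ make each run succeed with probability at least $1/2$, so the $O(\log n)$ parallel copies drive the failure probability down to $1/\mathrm{poly}(n)$ independently of the component size.
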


We will use the following results about Ghaffari's algorithm for computing a (maximal) independent set \cite{ghaffari2016improved}.
To make the paper self-contained, we include a pseudo-code of it in \Cref{app:ghaffarimis}.
\begin{theorem}[ \cite{ghaffari2016improved} ] \label{thm:ghaffariMIS}
  For each node $v$, the probability that $v$ has not made its decision within the first $O(\log \deg(v) + \log 1/\epsilon)$ rounds, where $\deg(v)$ denotes $v$'s degree at the start of the algorithm, is at most $\epsilon$.
\end{theorem}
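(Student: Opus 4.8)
The plan is to analyze Ghaffari's desire-level process directly; this is essentially the argument of \cite{ghaffari2016improved}, which I sketch for completeness. Recall that each still-active node $v$ maintains a probability $p_t(v)$ with $p_0(v)=1/2$; in round $t$ node $v$ marks itself independently with probability $p_t(v)$, joins the MIS (and is removed together with its neighbours) if it is marked while none of its neighbours is, and otherwise updates $p_{t+1}(v)=p_t(v)/2$ if its effective degree $d_t(v):=\sum_{u\in N_t(v)}p_t(u)$ is at least $2$, and $p_{t+1}(v)=\min\{2p_t(v),1/2\}$ if $d_t(v)<2$. I would call a round $t$ \emph{type-1} for $v$ if $p_t(v)=1/2$ and $d_t(v)<2$, \emph{type-2} if $d_t(v)\ge 2$, and \emph{bad} otherwise. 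Every bad round has $p_t(v)<1/2$ and $d_t(v)<2$, hence doubles $p_t(v)$, whereas type-2 rounds halve it and type-1 rounds leave it fixed. Since $p_t(v)$ stays in $(0,1/2]$, in any window of $T$ consecutive rounds in which $v$ is active the number of halvings is at least the number of doublings, so at least $T/2$ of these rounds are of type 1 or 2.

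The core estimate is that in a ``useful'' round $v$ is removed with at least a fixed constant probability $c>0$, conditioned on the state at the start of that round (the round's randomness is fresh). Here I would use the elementary inequality $1-x\ge 4^{-x}$ for $x\in[0,1/2]$, which gives $\prod_u(1-p_t(u))\ge 4^{-\sum_u p_t(u)}$ whenever all $p_t(u)\le 1/2$. In a type-1 round, $v$ is marked with probability $1/2$ and then all of its neighbours stay unmarked with probability at least $4^{-d_t(v)}\ge 4^{-2}$, so $v$ joins the MIS with probability at least $1/32$. In a type-2 round I would split on where the weight $d_t(v)\ge 2$ sits. If at least half of it comes from neighbours $u$ with $d_t(u)<2$, I pick a set $W$ of such neighbours with $\sum_{u\in W}p_t(u)\in[1,3/2)$; the events ``$u$ is the unique marked node of $W$ and no neighbour of $u$ is marked'' over $u\in W$ are pairwise disjoint and each has probability at least $p_t(u)\cdot 4^{-(3/2+2)}$, so with probability at least $4^{-7/2}$ some neighbour of $v$ joins and $v$ is removed. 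Otherwise at least half of $d_t(v)$ is contributed by neighbours $u$ with $d_t(u)\ge 2$, each of which halves its desire level in this round, so $d_{t+1}(v)\le\tfrac34 d_t(v)$.

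Finally I would combine the pieces. The type-2 rounds of the last kind make geometric progress on $v$'s effective degree, so by a potential argument on $d_t(v)$ (starting from $d_0(v)=\deg(v)/2$) there are at most $O(\log\deg(v))$ of them; every other type-1 or type-2 round removes $v$ with probability at least $c$. Hence in the first $T=O(\log\deg(v)+\log 1/\epsilon)$ rounds in which $v$ is active, discarding the $O(\log\deg(v))$ ``degree-shrinking'' rounds and using that at least a constant fraction of what remains are type 1 or 2, there are still $\Omega(\log 1/\epsilon)$ rounds in each of which $v$ is removed with probability at least $c$ irrespective of the past, provided $T$ is a large enough constant times $\log\deg(v)+\log 1/\epsilon$. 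Multiplying the conditional survival probabilities, $v$ is still active after $T$ rounds with probability at most $(1-c)^{\Omega(\log 1/\epsilon)}\le\epsilon$, which is the claim. The main obstacle is the type-2 analysis: the difficulty is exactly that $v$'s neighbours may themselves have large effective degree, so that no single neighbour is likely to join in a given round; dealing with this is what forces the case split and is the source of the additive $O(\log\deg(v))$ term.
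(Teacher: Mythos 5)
First, a framing point: the paper you are annotating does not prove this statement at all --- Theorem~3.2 is imported from \cite{ghaffari2016improved} (only the algorithm's pseudocode is reproduced in the appendix), so your sketch has to be measured against the analysis in that paper. Your plan is indeed the same framework used there: two kinds of ``useful'' (golden) rounds, a constant probability of removal in each, and a deterministic counting argument showing that within $O(\log\deg(v)+\log 1/\epsilon)$ rounds many useful rounds must occur. The probabilistic half of your sketch is fine: the type-1 estimate $\tfrac12\cdot 4^{-2}$ and the disjoint-events computation over the set $W$ are the standard bounds.

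Two steps in the deterministic half fail as written, though. (i) With your half/half split the claimed shrinkage $d_{t+1}(v)\le\tfrac34 d_t(v)$ is false: if a fraction $\alpha\ge\tfrac12$ of $d_t(v)$ sits on neighbours with $d_t(u)\ge 2$ (which halve) and the remaining $1-\alpha$ on low-degree neighbours (which may double), then one only gets $d_{t+1}(v)\le\bigl(2-\tfrac32\alpha\bigr)d_t(v)$, which at $\alpha=\tfrac12$ is $\tfrac54 d_t(v)$ --- an increase. This is why \cite{ghaffari2016improved} uses an unbalanced threshold (a round is ``good'' already when roughly a $1/10$ fraction of $d_t(v)$ comes from low-degree neighbours), so that the complementary case genuinely shrinks $d_t(v)$ by a constant factor; your constant-probability argument tolerates any constant fraction, so this part is repairable by re-choosing constants. (ii) More seriously, the assertion that a potential argument bounds the number of degree-shrinking rounds by $O(\log\deg(v))$ is unjustified: $d_t(v)$ is not monotone, since in bad rounds, type-1 rounds, and the ``good'' type-2 rounds the low-degree neighbours may double their desire levels, so $d_t(v)$ can be pushed back up repeatedly, and a priori the shrinking rounds could form a constant fraction of all rounds, swallowing the $\Omega(\log 1/\epsilon)$ useful rounds you need. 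The number of shrink rounds can only be charged against $\log\deg(v)$ \emph{plus} the number of rounds in which $d_t(v)$ increases, and closing this bookkeeping jointly with the halving/doubling count for $p_t(v)$ (with carefully tuned constants) is precisely the content of the structural lemma of \cite{ghaffari2016improved}, which states that by round $\beta(\log\deg(v)+\log 1/\epsilon)$ either $v$ has been removed or one of the two golden-round counts has reached $\Omega(\log\deg(v)+\log 1/\epsilon)$. As it stands, your sketch asserts the conclusion of that lemma rather than proving it. (A minor further point: ``halvings $\ge$ doublings'' holds for windows starting at round $0$, or at a time when $p_t(v)=1/2$, not for arbitrary windows; for this theorem the initial window is all you need.)
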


\begin{lemma}[ \cite{ghaffari2016improved} ] \label[lemma]{lem:shatterMIS}
  Let $B$ be the set of nodes remaining undecided after $\Theta(\log \Delta)$ rounds.
  Then, with high probability, we have the following properties:
  \begin{enumerate}[(P1)]
    \item There is no $(G^{4})$-independent $(G^9)$-connected subset $S \subseteq B$ s.t. $|S| \geq \log_{\Delta} n$.
      This means that $S$ is an independent set in $G^4$ and induces a connected subgraph in $G^9$.
    \item All connected components of $G[B]$, that is the subgraph of $G$ induced by nodes in $B$, have each at most $O(\log_{\Delta} n \cdot \Delta^4)$ nodes.
  \end{enumerate}
\end{lemma}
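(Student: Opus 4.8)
The plan is to deduce (P2) from (P1) and to prove (P1) by a union bound over ``forbidden'' substructures, in the style of the shattering argument of \cite{barenboim2016locality}.

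For the implication (P1)$\Rightarrow$(P2): suppose (P1) holds and let $C$ be a connected component of $G[B]$ with $m := |C|$ vertices. Greedily choose a maximal $I \subseteq C$ that is independent in $G^4$. Maximality forces every vertex of $C$ to be within $G$-distance $4$ of some vertex of $I$, so $C \subseteq \bigcup_{u \in I} N_{G,4}(u)$ and hence $m \le |I| \cdot \max_u |N_{G,4}(u)| \le |I| \cdot O(\Delta^4)$. Moreover $I$ is connected in $G^9$: map each $w \in C$ to some $\phi(w) \in I$ with $d_G(w,\phi(w)) \le 4$ (taking $\phi(w) = w$ for $w \in I$); then every edge $\{w,w'\}$ of $G[C]$ gives $d_G(\phi(w),\phi(w')) \le 9$, so the surjection $\phi$ transports the connectivity of $G[C]$ onto $I$ inside $G^9$. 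Consequently, if $m$ exceeded $c' \Delta^4 \log_\Delta n$ for a suitable constant $c'$, then $|I| > \log_\Delta n$ and $I$ would be a $G^4$-independent, $G^9$-connected subset of $B$ of size more than $\log_\Delta n$, contradicting (P1). Hence $m = O(\Delta^4 \log_\Delta n)$, which is (P2).

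For (P1) itself, it suffices to rule out sets of size exactly $\ell := \lceil \log_\Delta n \rceil$, since a larger forbidden set contains a $G^9$-connected subset of size $\ell$, which is automatically still $G^4$-independent. I would first count candidates: $G^9$ has maximum degree $\Delta^{O(1)}$, and the number of connected vertex subsets of size $\ell$ containing a fixed vertex in a graph of maximum degree $D$ is at most $(eD)^\ell$, so there are at most $n \cdot \Delta^{O(\ell)}$ candidates. Next, for a fixed candidate $S$ I want $\Pr[S \subseteq B] \le \Delta^{-c\ell}$ for a constant $c$ that I can make as large as I like at the price of a larger constant in the $\Theta(\log\Delta)$ bound. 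By \Cref{thm:ghaffariMIS}, applied with $\epsilon = \Delta^{-c}$ so that the stated $O(\log\deg(v) + \log(1/\epsilon))$ rounds become $\Theta(\log\Delta)$, each single vertex satisfies $\Pr[v \in B] \le \epsilon$; the remaining ingredient is that these survival events behave independently across a $G^4$-independent set, so that $\Pr[S \subseteq B] \le \epsilon^{|S|}$. Granting this, a union bound gives a total failure probability of at most $n \cdot \Delta^{O(\ell)} \cdot \Delta^{-c\ell} = n \cdot \big(\Delta^{\ell}\big)^{O(1)-c} \le n \cdot n^{O(1)-c}$, using $\Delta^{\ell} \ge \Delta^{\log_\Delta n} = n$; choosing $c$ large enough makes this $n^{-a}$ for any desired constant $a$, which is (P1).

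The step I expect to be the crux is exactly the independence claim just used. \Cref{thm:ghaffariMIS} is only a marginal statement, whereas the union bound needs a joint bound, and the events $\{v \in B\}_{v \in S}$ are a priori correlated: whether a vertex is still undecided after $\Theta(\log\Delta)$ rounds depends on the coin flips in a ball of radius $\Theta(\log\Delta)$ around it, while a $G^4$-independent set only guarantees pairwise distance $\ge 5$. Making this precise is where one has to open up the analysis of \cite{ghaffari2016improved}: the needed statement is that the per-vertex elimination bound is robust, i.e. it still holds after conditioning on the algorithm's behaviour outside a bounded neighbourhood of $v$ — in each of the $\Omega(\log\Delta)$ ``golden'' rounds identified in that analysis, $v$ is eliminated with constant probability using only the fresh marking coins of $v$ and its neighbours, essentially regardless of how the desire levels evolved far away. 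Once this local, robust version of the survival bound is available, it upgrades the marginal bound to $\Pr[S \subseteq B] \le \epsilon^{|S|}$ for $G^4$-independent $S$, and everything else is the routine counting above.
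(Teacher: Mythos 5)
The paper does not actually prove \Cref{lem:shatterMIS}; it is imported verbatim from \cite{ghaffari2016improved} (and the analogous statement in \cite{barenboim2016locality}), so there is no in-paper argument to compare against. Your reconstruction is, for what it is worth, the same argument as in the cited source: (P2) is deduced from (P1) exactly as you do it, via a maximal $G^4$-independent set $I$ in a component $C$, the bound $|C| \le |I|\cdot O(\Delta^4)$, and the observation that contracting each vertex to its representative in $I$ turns $G$-edges of $C$ into $G^9$-adjacencies of $I$; and (P1) is proved by the union bound you describe, over the at most $n\,(e\Delta^9)^{\ell}$ vertex sets of size $\ell=\lceil \log_\Delta n\rceil$ that are connected in $G^9$, against a bound of the form $\Pr[S\subseteq B]\le \Delta^{-c\ell}$.

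You have correctly isolated the only non-routine ingredient: \Cref{thm:ghaffariMIS} as stated is a marginal bound, and the union bound needs the joint bound $\Pr[S\subseteq B]\le \epsilon^{|S|}$ for $G^4$-independent $S$. This is exactly what the analysis in \cite{ghaffari2016improved} supplies: the per-node guarantee is proved in a robust form, namely it holds even conditioned on arbitrary (adversarial) outcomes of the coin flips of nodes outside a constant-radius neighborhood of $v$, because in each golden round $v$ is removed with constant probability using only the fresh marking coins of $v$ and its neighbors; pairwise $G$-distance at least $5$ then makes these coin sets disjoint, so the survival events can be dominated by independent ones and the bound multiplies. Since the lemma is a quoted result, deferring that step to the cited paper is legitimate, but be aware that your write-up is not self-contained on precisely that point (and it is the only point where one could go wrong: the events $\{v\in B\}$ are genuinely correlated through the desire-level dynamics, so the "regardless of how the desire levels evolved far away" claim does require the conditional form of the golden-rounds argument, not just \Cref{thm:ghaffariMIS} as stated). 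Everything else — the reduction to sets of size exactly $\ell$, the counting of $G^9$-connected sets, and the choice $\epsilon=\Delta^{-c}$ with $\Delta^{\ell}\ge n$ — is correct.
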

The statement of \Cref{lem:shatterMIS} is known as a \emph{shattering} guarantee, which is used in various (distributed) algorithms, see e.g. \cite{beck1991algorithmic, barenboim2016locality, alon2012space}.
Intuitively, this means that after $O(\log \Delta)$ rounds of the algorithm, the components induced by undecided nodes are ``small'', or more precisely in this case: they do not contain a large 5-independent set.
If we allowed for messages of unbounded size, we could just think of the remaining components as graphs of size $O(\log n)$, and use traditional algorithms to solve the problem.
However, as we restrict messages to $O(\log n)$-bits, we will need some additional ideas.

We will also use the following ruling set algorithm of Ghaffari\cite{ghaffari2019distributed}:
\begin{lemma}[ \cite{ghaffari2019distributed} ] \label[lemma]{lem:RSghaffari}
  There is a randomized distributed algorithm in the $\mathsf{CONGEST}$ model that, in any network $H = (V,E)$ with at most $n$ vertices, and for any $B' \subset V$ and for any integer $k \geq 1$, with high probability, computes a $(k, 10 k^2 \log \log n)$ ruling set $B^* \subseteq B'$, with respect to distances in $H$, in $O(k^2 \log \log n)$ rounds.
\end{lemma}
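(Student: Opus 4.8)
The plan is to reduce separation distance $k$ to separation distance $2$ by passing to the power graph $\Gamma := H^{k-1}$, and then to build the ruling set of $B'$ in $\Gamma$ by a randomized, three-phase procedure that uses only $O(\log\log n)$ iterations. The reduction is clean: two nodes lie at $\Gamma$-distance $\ge 2$ exactly when they lie at $H$-distance $\ge k$, and a node within $\Gamma$-distance $t$ of a set is within $H$-distance $t(k-1)$ of it, so a $(2,\beta_0)$-ruling set of $B'$ in $\Gamma$ is a $(k,(k-1)\beta_0)$-ruling set of $B'$ in $H$. One round of CONGEST on $\Gamma$ is simulated in $k-1$ rounds of $H$, and --- the one point where message size matters --- no large messages are ever needed, since the algorithm only forwards the running maximum of the $O(\log n)$-bit ranks a node has heard about. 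Each iteration of the procedure operates at a constant radius in $\Gamma$ through $O(k)$ internal steps, hence costs $O(k^2)$ rounds of $H$ and raises the domination radius by $O(k^2)$ in $H$; summing over $O(\log\log n)$ iterations yields the claimed $O(k^2\log\log n)$ round bound and the $10k^2\log\log n$ domination bound.

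Throughout we keep an active set $A$ (initially $B'$) and an output set $B^*$ (initially empty), with the invariants that $B^*$ is $2$-separated in $\Gamma$ and that every node of $B'\setminus A$ lies within $\Gamma$-distance $O(k)$ times the current iteration index of $B^*$. A generic iteration runs as follows: each $v\in A$ draws a fresh uniform rank $r_v$ from a $\mathrm{poly}(n)$-size range; a node in a designated subset $D\subseteq A$ joins $B^*$ as a new center if it is a local rank-maximum within $\Gamma$-distance $2$ among $D$; and then every active node within $\Gamma$-distance $\Theta(k)$ of a new center is deleted from $A$. New centers are $2$-separated from one another and from the old centers (whose balls were already deleted), and every deleted node becomes permanently close to $B^*$, so both invariants persist. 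The three phases differ only in the choice of $D$. In the \emph{degree-reduction} phase we run $O(\log\log n)$ iterations in which $D$ is the set of active nodes of current $\Gamma$-degree at least $\tau_j$, with thresholds $\tau_j$ falling doubly-exponentially from $n$ down to $O(1)$; at the end every active node has $\Gamma$-degree $O(1)$. In the \emph{shattering} phase we run $O(1)$ further iterations with $D=A$ on the now constant-degree active graph, after which, by a shattering argument of the type in \Cref{lem:shatterMIS}, the active graph has broken into connected pieces of size $O(\log n)$ with high probability. Finally, in the \emph{deterministic clean-up} phase, on each such piece --- constant degree, $O(\log n)$ nodes --- we compute an $O(1)$-coloring of $\Gamma^2$ restricted to the piece via Linial's algorithm in $O(\log^* n)$ rounds and then sweep through the colors to extract a $(2,O(1))$-ruling set of the remaining candidates in the piece; this step is deterministic, so no union bound over the (up to $n$) pieces is required --- which is exactly the obstruction that prevents us from simply invoking an off-the-shelf randomized ruling-set algorithm.

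The main obstacle is the probabilistic analysis behind the first two phases. For the degree-reduction phase one must show that, with high probability, the doubly-exponential schedule of thresholds really does push every active node's $\Gamma$-degree down to $O(1)$ within $O(\log\log n)$ iterations; the subtlety is that the local rank-maximum rule selects, in a single iteration, only a set that is independent in $\Gamma^2$ rather than a maximal such set, so one has to argue that it still deletes a large enough fraction of the high-degree active nodes per iteration for the maximum active degree to drop at the required doubly-exponential rate. For the shattering phase one needs the analogue of \Cref{lem:shatterMIS} on the constant-degree active graph: a node that is still active after all iterations, together with the nested neighborhoods whose survival its own survival would force, forms a ``bad'' configuration of super-constant size in a bounded-dependence probability space, and the union bound over all such configurations is $o(1)$; this follows the mechanism of \Cref{lem:shatterMIS} and \cite{beck1991algorithmic}. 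Everything else --- the separation and domination invariants, the $\Gamma$-on-$H$ simulation, and the final deterministic step --- is routine once these two statements are established.
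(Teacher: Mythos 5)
The paper does not prove \Cref{lem:RSghaffari} at all --- it is imported as a black box from \cite{ghaffari2019distributed} --- so your sketch has to stand on its own, and it does not. Your reduction to $\Gamma := H^{k-1}$ and the separation/domination bookkeeping are fine, but the first genuine gap is the {\sffamily CONGEST} implementation of your degree-reduction phase: to form $D$, every active node must decide whether its current $\Gamma$-degree is at least $\tau_j$, i.e.\ count (even approximately) the \emph{distinct} active nodes within $H$-distance $k-1$. That is not a max-aggregation, so your claim that the algorithm ``only forwards the running maximum of the ranks'' is false for your own procedure; counting distinct nodes in a $(k-1)$-ball under congestion is precisely the power-graph obstacle this line of work is about, and it needs extra machinery (sketches) or a redesign. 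The construction actually used in \cite{ghaffari2019distributed} avoids the issue by never querying degrees: it samples candidates with an iteration-indexed probability and proves a density invariant (``no active node has more than $f(i)$ active nodes in its $k$-ball after iteration $i$'') instead of testing degrees against thresholds.

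The second and more serious gap is the probabilistic core you yourself flag as ``the subtlety'': with $D$ equal to \emph{all} active nodes of $\Gamma$-degree at least $\tau_j$ and new centers being the rank-local maxima of $D$ within $\Gamma$-distance $2$, one iteration does not, with high probability, bring the maximum active degree down to $\tau_{j+1}$ --- for constant $k$ it need not bring it down at all for a constant fraction of nodes. Concretely, take a long $\Gamma$-path of $D$-nodes at pairwise $\Gamma$-distance $2$, each made high-degree by pendant low-degree neighbors (which are not in $D$). Distance-$2$ rank maxima along such a path occur with constant density and have exponentially distributed gaps, so the probability that a given $D$-node has no new center within $\Gamma$-distance $\Theta(k)$ is $2^{-\Theta(k)}$, a constant when $k=O(1)$ (and the application in this paper uses $k=5$). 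Those nodes and their pendant neighborhoods survive the iteration untouched, so the doubly-exponential threshold schedule, which allots only one iteration per threshold within the $O(\log\log n)$ budget, collapses, and the shattering and clean-up phases never receive the constant-degree graph they need. The standard repair --- and the mechanism behind the actual proof --- is to select candidates by \emph{random sampling with probability tied to the current degree scale} (roughly $1/\tau_{j+1}$), so that every node with $\tau_j$ active nodes in its $k$-ball sees a sampled candidate nearby w.h.p.\ while the sampled set is locally sparse enough to resolve within the $O(k)$-radius budget; with that change (and with the explicit degree test replaced by a density invariant) your outline could be salvaged, but as written the lemma is not established.
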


\paragraph{Algorithm Outline:}
Combining these results, we can obtain the following:
First, we run the algorithm of Ghaffari\cite{ghaffari2016improved} for $O(\log \Delta)$ rounds, which results in a state described by \Cref{lem:shatterMIS}.
Then, we compute a $(5, O(\log \log n)$ ruling set of each remaining component, using \Cref{lem:RSghaffari}.
This ruling set induces a clustering, where each node is vertex is clustered to its closest node from the ruling set.
By property (P1) of \Cref{lem:shatterMIS}, this yields $N = O(\log n)$ clusters per component of the remaining graph.
Let us call one such cluster a \emph{meta-node}, and note that it has diameter $r = O(\log \log n)$.
For the remainder of this section, let $H$ be the graph, where the vertex set are all meta-nodes and where two clusters are connected if they contain two adjacent nodes.
Then, we compute a network decomposition of $H$ into \emph{super-clusters}.
Going through the color classes of this decomposition, one by one, we compute an MIS of each super-cluster.
We use the fact that these are graphs of low-diameter to amplify the success probability of a randomized algorithm.

The main challenge will be to compute a suitable network decomposition of $H$.
In particular, we aim to compute a network decomposition using as few colors as possible.
In \Cref{lem:impnetdecomp} we obtain such a decomposition, enabling us to prove \Cref{thm:congestMISslow}.
We strengthen this result in \Cref{lem:netdecompmis}, using a randomized approach, to get a decomposition that will be sufficient to prove \Cref{thm:congestMIS}.

\subsection{First Approach: Slower but Simpler}
\label{sec:misslow}

In this section, we prove the following Theorem.
While it give a slower runtime than \Cref{thm:congestMIS}, it is still faster than previous algorithms.
\begin{theorem} \label{thm:congestMISslow}
  There is a randomized distributed algorithm, with $O(\log n)$-bit messages, that computes an MIS in $O(\log \Delta \cdot \log \log n) + 2^{O(\sqrt{\log \log n})}$ rounds, w.h.p.
\end{theorem}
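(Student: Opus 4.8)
The plan is to carry out the four-stage algorithm outlined above. First I would run Ghaffari's algorithm (\Cref{thm:ghaffariMIS}) for $\Theta(\log\Delta)$ rounds to reach the shattered state of \Cref{lem:shatterMIS}: w.h.p.\ the undecided set $B$ induces components with $O(\Delta^4\log n)$ vertices and no large $(G^4)$-independent $(G^9)$-connected subset. On each component I would compute a $(5,O(\log\log n))$-ruling set via \Cref{lem:RSghaffari} and cluster every vertex of $B$ to its nearest ruling-set node; property~(P1) bounds the number of resulting \emph{meta-nodes} per component by $N=O(\log n)$, and each meta-node has weak radius $r=O(\log\log n)$ in $G$. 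Let $H$ be the graph on the meta-nodes with an edge whenever two meta-nodes contain $G$-adjacent vertices; equivalently, two meta-nodes are $H$-adjacent iff their centers are within $2r+1=O(\log\log n)$ hops in $G$.

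Next I would invoke \Cref{lem:impnetdecomp} --- which builds on \Cref{thm:netdecomp} and \Cref{rem:netdecomp} applied with the meta-nodes as the initial clustering and separation parameter $k=\Theta(r)$ --- to obtain, in $2^{O(\sqrt{\log\log n})}$ rounds, a network decomposition of $H$ with only $O(\log\log n)$ colors whose super-clusters have weak diameter $2^{O(\sqrt{\log\log n})}$ in $G$, together with the primitive promised by \Cref{thm:netdecomp} that simulates one round of communication inside a super-cluster in $2^{O(\sqrt{\log\log n})}$ rounds of $G$. Taking $k=\Theta(r)$ large enough guarantees that same-colored super-clusters are pairwise non-adjacent in $H$, hence have $G$-non-adjacent vertex sets. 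Having only $O(\log\log n)$ colors is the whole point: it is what makes the unavoidable $\Theta(\log\Delta)$ cost incurred per color below contribute only $O(\log\Delta\cdot\log\log n)$ in total.

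Then I would process the $O(\log\log n)$ colors in order. For color $j$, work in parallel in all super-clusters of color $j$; since they are pairwise non-adjacent in $H$, the parallel computations cannot create conflicting selections. Inside one super-cluster: delete (in one round) every vertex with a neighbor already in the MIS, then run $\Theta(\log n)$ independent copies of Ghaffari's algorithm in parallel --- possible within $O(\log n)$-bit messages since each copy uses single-bit messages --- for $O(\log\Delta+\log\log n)$ rounds each. Taking $\epsilon=\Theta(1/(\Delta^4\log n))$ in \Cref{thm:ghaffariMIS} and union-bounding over the $O(\Delta^4\log n)$ vertices of the super-cluster, every copy decides the whole super-cluster with probability at least $1/2$, so some copy is fully successful with probability $1-1/\mathrm{poly}(n)$; using the simulation primitive and the $2^{O(\sqrt{\log\log n})}$ weak diameter, aggregate over the super-cluster to select the first successful copy and commit it. A union bound over all super-clusters and colors gives overall success w.h.p., and correctness (independence and domination) holds because two adjacent vertices of $B$ lie either in a common super-cluster or in super-clusters of distinct colors, the latter resolved by the sequential order. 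Summing $O(\log\Delta)$ (pre-shattering) $+\,O(\log\log n)$ (ruling set) $+\,2^{O(\sqrt{\log\log n})}$ (decomposition) $+\,O(\log\log n)\cdot(O(\log\Delta+\log\log n)+2^{O(\sqrt{\log\log n})})$ (color classes) yields the claimed $O(\log\Delta\cdot\log\log n)+2^{O(\sqrt{\log\log n})}$.

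The main obstacle is \Cref{lem:impnetdecomp} itself: producing a \textsf{CONGEST}-model network decomposition of the meta-node graph with only $O(\log\log n)$ colors --- far fewer than the $2^{O(\sqrt{\log\log n})}$ colors a direct use of \Cref{thm:netdecomp} gives --- while (i) coping with meta-nodes being disconnected, mutually overlapping regions of $G$ whose communication may all route through a single $G$-edge, which is exactly what the power-graph and congestion-control machinery of \Cref{thm:netdecomp} is built to control, and (ii) keeping the failure probability at $1/\mathrm{poly}(n)$ even though $H$ has only $\mathrm{poly}(\log n)$ vertices, which forces any randomized sub-routine to be run in $\Theta(\log n/\log\log n)$ parallel copies and hence to fit in $O(\log\log n)$-bit messages per copy. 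Relative to that, the remaining work --- the shattering bounds, the message-size accounting, and the union bounds --- is routine.
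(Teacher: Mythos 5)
Your proposal matches the paper's proof of \Cref{thm:congestMISslow} essentially step for step: shattering via \Cref{thm:ghaffariMIS} and \Cref{lem:shatterMIS}, meta-nodes from the $(5,O(\log\log n))$ ruling set of \Cref{lem:RSghaffari}, the $(O(\log\log n),2^{O(\sqrt{\log\log n})})$ decomposition of \Cref{lem:impnetdecomp}, and then color-by-color MIS via $O(\log n)$ parallel single-bit runs of Ghaffari's algorithm with a convergecast over the low-diameter super-clusters to select a successful run, giving the same round-complexity accounting. The only (harmless) inaccuracy is in your closing aside: \Cref{lem:impnetdecomp} is proved deterministically in the paper by ball growing, so the concern about amplifying a $1-1/\mathrm{poly}(n)$ success probability on $\mathrm{poly}(\log n)$-size graphs pertains to the faster \Cref{lem:netdecompmis}, not to the lemma you invoke here.
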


To prove \Cref{thm:congestMISslow}, we use the following algorithm for network decomposition:

\begin{lemma} \label[lemma]{lem:impnetdecomp}
  Let $H$ be the $N = O(\log n)$-node graph as described in the outline.
  There is a deterministic distributed algorithm, with $O(\log n)$-bit messages, that computes a $(O(\log \log n), 2^{O(\sqrt{\log \log n})})$ network decomposition of $H$ into \emph{super-clusters} in $2^{O(\sqrt{\log \log n})}$ rounds of communication on $G$.
\end{lemma}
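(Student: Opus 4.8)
The plan is to combine our own power‑graph decomposition (\Cref{thm:netdecomp}/\Cref{rem:netdecomp}) with a colour‑reduction (refinement) stage. First I set up a cheap simulation of $H$ on $G$. The meta‑nodes are vertex‑disjoint clusters of $G$ of radius $r=O(\log\log n)$, and two of them are $H$‑adjacent exactly when they contain $G$‑adjacent vertices, in which case they contain vertices at $G$‑distance at most $2r+1$; hence a network decomposition of ``$G^{\,t(2r+1)}$ with the meta‑nodes as initial clusters'' is, for every $t$, automatically a decomposition of $H^{t}$ (it only adds separation). Applying \Cref{rem:netdecomp} with the $N$ meta‑nodes as the initial clustering (each carrying its $O(\log n)$‑bit centre identifier and radius $r$) and a suitable power, I obtain, in $2^{O(\sqrt{\log\log n})}$ rounds on $G$, a $(g(N),\,2^{O(\sqrt{\log\log n})})$ network decomposition of $H^{\kappa}$ for some $\kappa=2^{O(\sqrt{\log\log n})}$ into \emph{super‑clusters}, where $g(N)=2^{O(\sqrt{\log N})}=2^{O(\sqrt{\log\log n})}$ using the hypothesis $N=O(\log n)$. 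By the last sentence of \Cref{thm:netdecomp} together with invariant (C) in its proof, afterwards one round of communication inside a super‑cluster --- and hence one hop of the super‑cluster graph $\mathcal H$ (super‑clusters as vertices, adjacent iff they contain $H$‑adjacent meta‑nodes) --- can be simulated in $2^{O(\sqrt{\log\log n})}$ rounds on $G$, which disposes of all congestion issues (note that simulating $H$ directly would cost a factor $N=O(\log n)$, which we cannot afford).

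With this in place I would refine the preliminary $2^{O(\sqrt{\log\log n})}$‑colour decomposition into one with only $O(\log N)=O(\log\log n)$ colours, via $O(\log N)$ rounds of ball‑growing on $\mathcal H$. Maintaining a live set $U$ of super‑clusters, initially all of them, each round carves a constant fraction of $U$ into pairwise $H$‑non‑adjacent clusters of $\mathcal H$‑diameter $O(\log N)$ --- each obtained by growing a ball, with respect to the \emph{global} distance in $\mathcal H$ but counting only live super‑clusters, around a centre until the number of enclosed live super‑clusters grows by a factor below $2$, which happens within $\log_2|U|\le\log_2 N$ hops --- assigns them a fresh colour, and passes the remaining (at most $|U|/2$) super‑clusters to the next round; after $O(\log N)$ rounds $U$ is empty. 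The separation built into the preliminary decomposition is what lets the balls grown in one round be grown in parallel without colliding. Every super‑cluster then lies in exactly one output cluster; the output clusters of a given round (hence of a given colour) are pairwise $H$‑non‑adjacent, and each has $H$‑diameter $O(\log N)\cdot 2^{O(\sqrt{\log\log n})}=2^{O(\sqrt{\log\log n})}$. Each round costs $2^{O(\sqrt{\log\log n})}$ rounds on $G$ (via the first‑stage simulation), so the total is $O(\log N)\cdot 2^{O(\sqrt{\log\log n})}=2^{O(\sqrt{\log\log n})}$ rounds, and only $O(\log\log n)$ colours are used.

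The hard part is the refinement stage, and specifically getting the $O(\log N)$‑colour bound \emph{deterministically}: a black‑box decomposition of $H$ would only give $2^{O(\sqrt{\log N})}$ colours, so the refinement must exploit the special structure of $H$ coming from property (P1) of \Cref{lem:shatterMIS} --- each component has only $N=O(\log n)$ meta‑nodes and, crucially, no long ``sparse'' (constant‑scale independent) substructure, so that ball‑growing on $\mathcal H$ makes constant‑fraction progress while closing up within $O(\log N)$ hops, and so that the parallel growths within a round can be scheduled using only the hop‑separation of the preliminary decomposition. I expect the genuinely delicate points to be: that the balls must be grown with global rather than live‑subgraph distances, so that non‑adjacency of same‑round clusters survives the deletions; the treatment of super‑clusters that a growing ball reaches only through its boundary, so that the ``at most $|U|/2$ survivors'' claim holds; and the choice of $\kappa$ --- large enough that the parallel carving within a colour class is collision‑free, yet still $2^{O(\sqrt{\log\log n})}$. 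Once these are settled, the counting, diameter, and $\mathsf{CONGEST}$‑overhead bounds are routine, the last being supplied directly by invariant (C) of \Cref{thm:netdecomp}.
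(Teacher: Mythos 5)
Your first stage (apply \Cref{rem:netdecomp} with the meta-nodes as initial clusters to get a decomposition of a power of $H$, then refine by deterministic ball growing to cut the number of colours to $O(\log\log n)$) is exactly the paper's strategy, and the simulation/overhead bookkeeping is fine. The gap is in the refinement stage. You want only $O(\log N)$ rounds, one fresh colour and a constant fraction of the live set $U$ carved per round, with the parallel balls of one round kept disjoint ``by the separation built into the preliminary decomposition.'' But that decomposition separates only clusters \emph{of the same colour}; super-clusters of different colours can be adjacent in $H$. So either (a) the centres used in one round range over several intermediate colours, in which case the parallel balls do collide and same-colour output clusters can be adjacent, or (b) each round uses centres of a single intermediate colour, in which case a round touches only the vicinity of that one colour class, the ``constant fraction of $U$ per round'' claim has no justification, and you would need as many rounds (hence colours) as there are intermediate colours, i.e.\ $2^{O(\sqrt{\log\log n})}$. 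Your ``at most $|U|/2$ survivors'' bound is likewise unsupported: with your stopping rule the live super-clusters just outside a finished ball are not counted against anything, so nothing bounds how many survive a round.

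The paper resolves exactly this tension with a two-level structure that your write-up is missing: each output colour corresponds to a \emph{phase}, and a phase sweeps sequentially through all $2^{O(\sqrt{\log\log n})}$ colours of the intermediate decomposition, in each step growing balls (at meta-node granularity, hop by hop in $H$) only from the clusters of the current colour, so that separation $\Theta(\log N)$ versus ball radius $\le \log N$ makes that step collision-free. When a ball becomes ``good'' (fewer boundary than interior meta-nodes), only its interior becomes an output cluster and its \emph{boundary meta-nodes are deactivated for the rest of the phase}. This deactivation is what simultaneously (i) keeps output clusters created in different steps of the same phase non-adjacent, and (ii) yields the progress bound: every still-active meta-node is either clustered or deactivated during the phase, and per ball the deactivated are at most the clustered, so at least half of the remaining meta-nodes are clustered per phase, giving $\log N = O(\log\log n)$ phases/colours. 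Without the step-per-colour sweep and the boundary-deactivation (or an equivalent ``discard the last layer and re-queue it'' rule), the non-adjacency and the $O(\log\log n)$-colour bound in your scheme do not follow; these are precisely the ``delicate points'' you flag but leave unresolved.
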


\begin{proof}%[Proof of \Cref{lem:impnetdecomp}]
  We will first argue that we can compute a network decomposition of $H^k$.
  Then we refine this decomposition into a decomposition of $H$, while reducing the number of colors used.
  We will do so by using a \emph{ball growing} process, inspired by \cite{awerbuch1985complexity, awerbuch1990sparse, linial1993low}:
  Here, a ball is just a set of vertices with low diameter.
  Starting from balls being clusters of one color, we grow each of them hop by hop in $H$ until it contains enough meta-nodes.
  We use the fact that we have a decomposition of $H^K$ (for $K$ large enough) to argue that different clusters can operate independently.

  \paragraph{Intermediate Network Decomposition:}
  First, we will compute a network decomposition of $H^K$ for $K = \Theta(\log \log n)$.
  In $G$, every meta-node of $H$ is a cluster of diameter $O(\log \log n)$, so we compute such a network decomposition of $H^K$ by computing a network decomposition of $G^k$ for $k = \Theta((\log \log n)^2)$:
  Using the initial partition as a starting point, we get a $(2^{O(\sqrt{\log \log n})}, 2^{O(\sqrt{\log \log n})})$ network decomposition of $G^k$ by \Cref{rem:netdecomp}.
  Note that by design of the algorithm, all nodes of such an initial cluster will end up in the same cluster as well.
  As these initial clusters have diameter $O(\log \log n)$ and we set $k = \Theta((\log \log n)^2)$, two clusters are at distance $\Theta(\log \log n)$ in $H$.

  Now we have an intermediate $(2^{O(\sqrt{\log N})}, 2^{O(\sqrt{\log N})})$ network decomposition of $H^K$.
  That is, every two meta-nodes from different clusters of the same color have distance at least $K = O(\log N)$ in $H$.
  We can simulate one round of communication within clusters of $H$ in $2^{O(\sqrt{\log n})}$ rounds in $G$.

  \paragraph{One Step of Ball Growing:}
  The next step is to refine this intermediate decomposition to compute a new decomposition of $H$ with the properties from \Cref{lem:impnetdecomp}.
  To do so we use the following ball growing process:
  % In each step, we only consider meta-nodes that are still active and not yet part of a super cluster.
  % Initially, all meta-nodes are active.
  In each step, we add some meta-nodes to a new super-cluster, while deactivating another set of meta-nodes.
  Initially, all meta-nodes are active.

  More precisely, the $i^{th}$ step is as follows:
  Starting from all clusters of color $i$, we initiate a ball growing process.
  Note that we only consider meta-nodes that are still active and not yet part of a super cluster.
  We call a meta-node of $H$ a \emph{boundary} for this ball if at least one of its neighbors is in a different ball.
  We call a ball \emph{good} if there are less boundary than non-boundary noes.

  Initially, a ball is a cluster of color $i$ (or rather its remaining meta-nodes).
  If the ball is not good, we grow it by one hop in $H$.
  We can do this along edges of $H$, which are also edges in $G$.
  In that case, by definition of a good ball, this ball grows by at least a 2 factor in terms of its number of meta-nodes.
  We repeat this until we reach a good ball.
  That happens within $\log N$ steps of growth as otherwise the ball would have more than $2^{\log N} = N$ meta-nodes of $G$, which is not possible.
  Notice that each step of growth can be performed in $2^{O(\sqrt{\log N})}$ rounds:
  % because we simply need to aggregate at the center the number of meta-nodes in the ball as well as the number of boundary meta-nodes, after which we can determine whether the ball is good or not.
  We aggregate the number of boundary and non-boundary meta-nodes at the center, which then decides whether to stop or continue the process.
  Once a ball is good, we deactivate its boundary meta-nodes for this phase.
  The non-boundary meta-nodes of each ball are joined together as one super-cluster of the output-decomposition.
  In each step of the ball growing, the radius of these super-clusters increases by at most one and thus stays $2^{O(\sqrt{\log N})}$ (which is also true in $G$, as every meta-node has radius $O(\log N)$).
  Additionally, balls can grow along each edge at most once, meaning that every edge gets included in at most one additional super-cluster on top of the previous clusters it was included in.
  Together with the diameter staying $2^{O(\sqrt{\log N})}$, this ensures property (B).
  We note that the balls that start from different clusters of color $i$ in the intermediate network decomposition can grow simultaneously.
  They will never reach each other, as originally they were separated by at least $\Omega(\log N)$ hops in $H$ and each ball grows at most $\log N$ hops.

  \paragraph{The Full Algorithm:}
  We perform $\log N$ phases:
  In each phase, we perform $2^{O(\sqrt{\log N})}$ steps of ball growing, one step for each color class of the intermediate network decomposition.
  Once a phase is finished, we reactive all unclustered meta-nodes and move on to the next phase.
  Notice that in each phase, at least half of the remaining meta-nodes join a new super-cluster:
  we only deactivate boundary-nodes and further only do so, whenever we add at least as many nodes to a new super cluster.
  Thus after $\log N$ phases, the graph must be empty.
  In total, we spend $\log N \cdot 2^{O(\sqrt{\log N})} \cdot 2^{O(\sqrt{\log N})} = 2^{O(\sqrt{\log N})}$ rounds.
  As we always deactivate the boundary nodes, super clusters are non-adjacent, which shows property (A).
  For the number of colors, we use only one color per phase, and as there are $\log N = O(\log \log n)$ phases, we use as many colors.
\end{proof}

We can now use this decomposition of $G$, to compute a maximal independent set:

\begin{proof}[Proof of \Cref{thm:congestMISslow}]
  % As a first step, we run the algorithm of Ghaffari\cite{ghaffari2016improved} for $O(\log \Delta)$ rounds.
  % This will yield an independent set of nodes, which we remove from the graph, together with their neighbors.
  % By \Cref{lem:shatterMIS} all nodes that remain, form components of size $O(\Delta^4 \log n)$, by property (P2).
  % Next, we compute a $(5, O(\log \log n))$ ruling set $B_C^*$, using \Cref{lem:RSghaffari}, for every remaining component $C$.
  % This can be done in $O(\log \log n)$ rounds.
  % Now we cluster the nodes around vertices of $B_C^*$, where each node joins the cluster formed by the closest $B_C^*$ vertex, breaking ties by IDs.
  % This clustering can be performed in $O(\log \log n)$ rounds as well, by $B_C^*$ nodes announcing themselves.
  % By property (P1) of \Cref{lem:shatterMIS}, this yields at most $O(\log n)$ clusters per component.
  % Let us call one such cluster a \emph{meta-node}.
  % We can now apply \Cref{lem:impnetdecomp}, to get a $(O(\log \log n), 2^{O(\sqrt{\log \log n})})$ decomposition of the meta-nodes in each component.
  As a first step, we compute $H$ as described before, by running Ghaffari's algorithm \cite{ghaffari2016improved} for $O(\log \Delta)$ rounds, and computing a clustering in the remaining parts of the graph.
  Then, we find a $(O(\log \log n), 2^{O(\sqrt{\log \log n})})$ decomposition of $H$, using \Cref{lem:impnetdecomp}.

  For computing the MIS we proceed as follows:
  We work through each of the $O(\log \log n)$ color classes, spending $O(\log \Delta) + 2^{O(\sqrt{\log \log n})}$ rounds per color class.
  In one color class, we can find one MIS per super cluster, as super clusters of the same color are non-adjacent.

  In every step, all nodes of the active super clusters execute $O(\log n)$ parallel executions of the algorithm of Ghaffari\cite{ghaffari2016improved}, as reviewed in \Cref{thm:ghaffariMIS}.  This can be done without any overhead, as every single execution only uses one-bit messages.
  This super cluster contains $O(\Delta^4 \log n)$ regular nodes, by property (P1) of \Cref{lem:shatterMIS}.
  Running this algorithm for $O(\log (\Delta^4 \log n)) = O(\log \Delta + \log \log n)$ rounds, we find an MIS with probability at least $1 - 1/\mathrm{poly}(\Delta^4 \log n)$. Since all $O(\log n)$ parallel executions are independent, the probability that none of them succeeds is at most $1/\mathrm{poly}(n)$.

  Now we just need to find a run that was successful.
  For this we use the network decomposition we obtained from \Cref{lem:impnetdecomp}.
  First, each node $v$ performs a local check for all runs, by making sure that either $v$ is in the MIS and none of its neighbors is, or that $v$ is not in the MIS, but at least one of its neighbors is.
  This can again be done with just one-bit messages.
  Then, we can convergecast these local checks towards the cluster centers in $2^{O(\sqrt{\log \log n})}$ rounds.
  These centers can pick the first successful run and inform all nodes of their cluster in $2^{O(\sqrt{\log \log n})}$ rounds.
  We remove all nodes that are in the MIS of the super cluster, together with all their neighbors (in the base graph).
  After this, we move on to the next color.

  In total, we spend $O(\log \log n) \cdot (O(\log \Delta + \log \log n) + 2^{O(\sqrt{\log \log n})}) = O(\log \Delta \cdot \log \log n) + 2^{O(\sqrt{\log \log n})})$ rounds and find an MIS with high probability.
\end{proof}

\subsection{Second Approach: Faster}

To improve the runtime compared to \Cref{thm:congestMISslow} we need to obtain a network decomposition using fewer colors.
Instead of a sequential ball growing process, we will perform a randomized \emph{ball carving}, similar to Elkin and Neiman\cite{elkin2016distributed}.
As this decomposition will be used on small components of the graph, we need to ensure that we still succeed with probability $1 - 1/\text{poly}(n)$ even on components with much less than $n$ vertices.
We will use a similar idea as in the proof of \Cref{thm:congestMISslow}, namely that we run many random processes in parallel and use a previously computed network decomposition to find a run that was successful.
However, defining the right measure of success and identifying a successful run both require much more care than in algorithm for MIS.
The resulting algorithm is formalized in the following Lemma:
\begin{lemma} \label[lemma]{lem:netdecompmis}
  Let $H$ be the $N = O(\log n)$-node graph as described in the algorithm outline.
  There is a randomized distributed algorithm, with $O(\log n)$ bit messages, that computes a strong diameter $(O(\sqrt{\log \log n}), 2^{O(\sqrt{\log \log n})})$ network decomposition of $H$ in $2^{O(\sqrt{\log \log n})}$ rounds, with probability $1 - 1/\mathrm{poly}(n)$.
\end{lemma}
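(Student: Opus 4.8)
The plan is to mirror the structure of the proof of \Cref{lem:impnetdecomp}, but to replace the sequential, $O(\log N)$-phase ball-growing by a single randomized ball-carving step à la Elkin--Neiman, and to use the intermediate decomposition of $H^K$ to boost the success probability so that it becomes $1-1/\mathrm{poly}(n)$ rather than $1-1/\mathrm{poly}(N)$. First I would compute, exactly as in \Cref{lem:impnetdecomp}, an intermediate $(2^{O(\sqrt{\log N})},2^{O(\sqrt{\log N})})$ network decomposition of $H^K$ for $K=\Theta(\log\log n)$, by invoking \Cref{thm:netdecomp}/\Cref{rem:netdecomp} on $G^k$ with $k=\Theta((\log\log n)^2)$; this costs $2^{O(\sqrt{\log N})}$ rounds and gives us the ability to simulate one round of $H$-communication within clusters in $2^{O(\sqrt{\log N})}$ rounds of $G$, with clusters of the same color separated by $\Omega(\log N)$ hops in $H$.

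Next I would describe the randomized ball-carving primitive on $H$: each meta-node picks an i.i.d.\ exponential (or geometric, truncated at $O(\log N)$) random shift, and is assigned to the cluster of the meta-node minimizing (distance in $H$) minus (its shift); this is the Miller--Peng--Xu / Elkin--Neiman construction. Standard arguments give that (i) each resulting cluster has strong diameter $O(\log N)=2^{O(\sqrt{\log N})}$ in $H$, hence $2^{O(\sqrt{\log N})}$ in $G$ after accounting for meta-node radius $O(\log N)$, and (ii) the probability that a given meta-node is a "boundary" node --- i.e.\ has a neighbor in another cluster --- is at most a constant, say $1/2$; actually for a single shift distribution the right statement is that iterating this carving $c$ times, discarding boundary nodes and recursing on the rest, shatters the graph in $O(\log N)$ rounds using only $O(1)$ colors per round --- but to get $O(\sqrt{\log\log n})=O(\sqrt{\log N})$ colors total we instead run $\Theta(\sqrt{\log N})$ rounds of carving where in each round every surviving meta-node remains outside its cluster's interior with probability at most $2^{-\sqrt{\log N}}$ (achieved by scaling the shift parameter), so that after $\Theta(\sqrt{\log N})$ rounds all meta-nodes are clustered. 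Each round contributes one color, giving $O(\sqrt{\log N})=O(\sqrt{\log\log n})$ colors, and the per-round diameter is $2^{O(\sqrt{\log N})}$ so the total is $2^{O(\sqrt{\log\log n})}$ rounds; the edge-overlap bookkeeping (each $H$-edge is used by at most one extra cluster per round, as in \Cref{lem:impnetdecomp}) shows property (B) and that simulation in $G$ stays within $2^{O(\sqrt{\log\log n})}$.

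The crux --- and the main obstacle --- is amplifying the success probability. A single run of the carving only guarantees "every surviving meta-node is clustered after $\Theta(\sqrt{\log N})$ rounds" with probability $1-1/\mathrm{poly}(N)$, which is useless since $N=O(\log n)$. The idea, following the proof of \Cref{thm:congestMISslow}, is to run $\Theta(\log n)$ independent copies of the carving in parallel (the carving only needs each meta-node to broadcast its shifted distance, which fits in $O(\log n)$ bits), and then to \emph{select} a globally successful copy. Here "success" must be defined carefully: I would declare a copy successful on a cluster $\mathcal{C}$ of the intermediate $H^K$-decomposition if, within $\mathcal{C}$, the carving produced clusters of the right diameter and made enough progress (e.g.\ at least a $1-2^{-\sqrt{\log N}}$ fraction of meta-nodes became interior in each round). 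Because the intermediate decomposition has $H^K$-clusters separated by $\Omega(\log N)\ge K$ hops and each carving ball has radius $O(\log N)$ --- wait, this requires $K$ to dominate the carving radius; so I would take $K=\Theta(\log N)$ large enough that carving balls started in one $H^K$-cluster never interact with those of another same-colored cluster --- the event of success on $\mathcal{C}$ depends only on the shifts of meta-nodes near $\mathcal{C}$, hence copies are "locally independent", and the probability that a fixed copy fails on a fixed $\mathcal{C}$ is $1/\mathrm{poly}(N)\le 1/\mathrm{polylog}(n)$... which is \emph{still} not enough per cluster. The resolution, as the authors hint ("defining the right measure of success and identifying a successful run both require much more care"), is that failure on $\mathcal{C}$ in one round only costs a constant-factor slowdown, so one instead asks that across all $O(\sqrt{\log N})$ rounds the \emph{total} number of failures summed over clusters is small; by a union bound over the $\le N$ clusters and $O(\log n)$ copies, the probability that \emph{some} copy works in \emph{every} cluster in \emph{every} round is $1-N\cdot O(\sqrt{\log N})\cdot(1/\mathrm{poly}(N))^{?}$ --- so I need the per-cluster failure probability to be $1/\mathrm{poly}(n)$, which forces me to make each copy itself use $\Theta(\log n)$ internal independent sub-trials per carving step and take a majority/first-success within the low-diameter $H^K$-cluster, using the intermediate decomposition to convergecast the verdict in $2^{O(\sqrt{\log N})}$ rounds. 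Concretely: within each $H^K$-cluster (diameter $2^{O(\sqrt{\log N})}$), run $\Theta(\log n)$ independent carving attempts, convergecast to the cluster center whether each attempt was good, let the center pick the first good one and broadcast the choice; a good attempt exists with probability $1-1/\mathrm{poly}(n)$, and a union bound over the $\le N$ clusters and $O(\sqrt{\log N})$ rounds still leaves failure probability $1/\mathrm{poly}(n)$. I expect verifying that the selected-per-cluster attempts glue into a globally consistent decomposition --- i.e.\ that choices made independently in different same-colored $H^K$-clusters do not conflict, which holds precisely because those clusters are $\ge K$-separated and carving balls have radius $<K$ --- to be the delicate point, together with checking that the parallel copies all fit in $O(\log n)$-bit messages.
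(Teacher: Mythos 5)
Your overall route is the same as the paper's: compute an intermediate decomposition of a large power of $H$ via \Cref{thm:netdecomp}/\Cref{rem:netdecomp}, then run Miller--Peng--Xu/Elkin--Neiman ball carving on $H$ with the exponential parameter scaled down to $\beta \approx 2^{-\sqrt{\log N}}$ so that each of $\Theta(\sqrt{\log N})$ phases (one output color per phase) clusters all but a $2^{-\sqrt{\log N}}$ fraction of the meta-nodes it touches, initiating each carving step only from the surviving meta-nodes of one color class of the intermediate decomposition (you leave this staging somewhat implicit, but it is what confines each attempt to well-separated regions), and finally amplify each carving step by $\Theta(\log n)$ parallel attempts whose success is checked and selected at the centers of the intermediate clusters. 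This is essentially the paper's proof of \Cref{lem:netdecompmis}.

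The one genuine problem is your separation parameter. With $\beta = 2^{-\sqrt{\log N}}$ the shifts are of order $1/\beta = 2^{\sqrt{\log N}}$, so the carving balls have radius $2^{\Theta(\sqrt{\log N})}$ (the paper truncates at $d = 2^{2\sqrt{\log N}}$ and declares a run failed if any shift exceeds $d$, which happens with at most constant probability); your intermediate claim that the carved clusters have strong diameter $O(\log N)$ holds only for constant $\beta$, which would force $\Theta(\log N)$ colors and defeat the purpose. Consequently $K=\Theta(\log N)$ is far too small: same-colored $H^K$-clusters would be reached by each other's carving balls, so the $\Theta(\log n)$ attempts in different clusters are neither independent nor consistently gluable, and the per-cluster ``pick the first good attempt'' step breaks exactly at the point you yourself flag as delicate. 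The paper avoids this by taking $k = \Omega(\log\log n \cdot 2^{2\sqrt{\log N}})$, i.e. $K = \Omega(2^{2\sqrt{\log N}})$, which still costs only $2^{O(\sqrt{\log\log n})}$ rounds in \Cref{thm:netdecomp}, so the fix is available inside your own framework. With that correction (and with per-step progress measured only over meta-nodes that actually received a value, which is what makes the per-attempt success probability a constant via \Cref{lem:expdist} plus Markov, and what justifies deactivating unclustered receivers for the rest of the phase), your argument coincides with the paper's.
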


\subsubsection{Reviewing Ball Carving}
\label{sec:ballcarving}

Before giving a proof of \Cref{lem:netdecompmis}, we review previous approaches to ball carving:
we use the same process as Elkin and Neiman\cite{elkin2016distributed} used for obtaining a strong diameter $(O(\log N), O(\log N))$ network decomposition.
Their algorithm is based on the \emph{shifted shortest path approach}, due to Blelloch et al.\cite{blelloch2014nearly} and Miller et al.\cite{miller2013parallel}.

In this approach, one step of ball carving works as follows:
We will remove one \emph{block} $W \subset V$ from the graph.
A connected component of $G[W]$ is a \emph{cluster}.
Each vertex $v$ picks a random value $r_v$ from the exponential distribution with parameter $\beta$, denoted $\mathcal{EXP}(\beta)$, which as density:
\[
  f(x) =
  \begin{cases}
    \beta \cdot e^{- \beta x} & x \geq 0 \\
    0                         & \text{otherwise}.
  \end{cases}
\]
For analysis, suppose every node $v$ broadcasts the value $r_v$ to all other nodes within distance $R_v = \lfloor r_v \rfloor$.
Each node $u$ keeps track of the values $m_i = r_{v_i} - d(u, v_i)$ it receives and orders them non-increasingly (i.e. $m_1 \geq m_2 \geq \dots$).
If $m_1 - m_2 > 1$, we add $u$ to $W$ and say that $u$ joins the cluster centered at $v_1$.
In the case that $m_1 - m_2 \leq 1$, the node $u$ is not clustered in this round.
Note that for nodes to make this decision, it suffices to forward the largest two values $m_i, m_j$ they have received so far.

In \cite{elkin2016distributed} the authors show that this clustering has two properties:
The clusters are non-adjacent and if $R = \max_v R_v$, clusters have strong diameter at most $2R - 2$.
Moreover, as nodes only need to forward the two largest values $m_i, m_j$ in each round, $O(\log R)$ bit messages suffice.
% Moreover, nodes only need to send $O(\log R)$ bit messages per round, by forwarding the two largest values they received so far.
Assuming $R = \text{poly}(N)$, this ensures that the algorithm also works in the $\mathsf{CONGEST}$ model.
To get a network decomposition, they argue that after $O(\log N)$ rounds of ball carving, the graph is exhausted and that there is never a value $r_v$ larger than $O(\log N)$ (both with high probability).

As we will want to compute a network decomposition of graphs with just $N = O(\log n)$ nodes, this success probability of $1 - 1/\text{poly}(N) = 1 - 1/\text{poly}(\log n)$ will not be high enough.
We will amplify this probability of success by performing multiple runs of ball carving in parallel.
The network decomposition from \Cref{thm:netdecomp} will then be used to determine a successful run, while the fact that two clusters of the same color have distance at least $k$, will allow the clusters to run independently.

We will also use the following property of the exponential distribution, proven by Miller et al.\cite{miller2013parallel}:
\begin{lemma} [ \cite{miller2013parallel} ]
  \label[lemma]{lem:expdist}
  Let $d_1 \leq \dots \leq d_n$ be arbitrary values and $\delta_1, \dots, \delta_n$ be independent random variables picked from $\mathcal{EXP}(\beta)$.
  Then the probability that the largest and the second largest value of $\delta_j - d_j$ are within 1 of each other is at most $\beta$.
\end{lemma}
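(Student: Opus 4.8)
The plan is to condition on which index attains the maximum of the shifted values $\tau_j := \delta_j - d_j$ and then exploit the memorylessness of the exponential distribution; note that the hypothesis $d_1 \le \cdots \le d_n$ plays no role and is merely a normalization. Since the $\delta_j$ come from a continuous distribution, there is almost surely a unique maximizer, so the events $A_i := \{\tau_i > \tau_j \text{ for all } j \ne i\}$, $i = 1,\dots,n$, partition the sample space up to a null set. On $A_i$ put $v := \max_{j \ne i}\tau_j$; the key structural point is that $v$ is a function of $\{\delta_j\}_{j\ne i}$ alone. Then $A_i = \{\delta_i > v + d_i\}$, and the event ``the largest and the second largest of the $\tau_j$ lie within $1$ of each other'', intersected with $A_i$, is exactly $\{\, v + d_i < \delta_i \le v + d_i + 1 \,\}$.

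First I would isolate the elementary inequality
\[
  \Pr[\, a < \delta \le a+1 \,] \;\le\; (1 - e^{-\beta}) \cdot \Pr[\, \delta > a \,] \qquad \text{for } \delta \sim \mathcal{EXP}(\beta) \text{ and every real } a ,
\]
which holds with equality when $a \ge 0$ (this is precisely memorylessness, the probability of landing in the next unit interval given $\delta > a$ being $1 - e^{-\beta}$) and with slack when $a < 0$ (then the left side is at most $1 - e^{-\beta}$ whereas $\Pr[\delta > a] = 1$, so the case of a vacuous constraint is harmless). Conditioning on $\{\delta_j\}_{j\ne i}$ fixes $v$ and leaves $\delta_i$ an independent $\mathcal{EXP}(\beta)$, so applying the inequality with $a = v + d_i$ gives $\Pr[(\text{within }1)\cap A_i \mid \{\delta_j\}_{j\ne i}] \le (1-e^{-\beta})\,\Pr[A_i \mid \{\delta_j\}_{j\ne i}]$. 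Taking expectations removes the conditioning, and since the $A_i$ partition the space and contain the event ``within $1$'' almost surely, summing over $i$ yields $\Pr[\text{within }1] \le (1-e^{-\beta})\sum_i \Pr[A_i] = 1 - e^{-\beta} \le \beta$, the last inequality being the standard bound $1 - e^{-x} \le x$.

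The part that needs care is the bookkeeping, and in particular avoiding a spurious factor of $n$ when ranging over the $n$ candidate maximizers. This is exactly why I would keep the factor $\Pr[\delta > a]$ on the right-hand side of the isolated inequality instead of simply bounding the left-hand side by $1 - e^{-\beta}$: once we condition on the other variables, that factor equals $\Pr[A_i \mid \{\delta_j\}_{j\ne i}]$, so the per-index contributions add up to $(1-e^{-\beta})\sum_i \Pr[A_i] = 1 - e^{-\beta}$ rather than to $n(1-e^{-\beta})$. The only other thing to check is that $v = \max_{j\ne i}\tau_j$ is measurable with respect to $\{\delta_j\}_{j\ne i}$, so that $\delta_i$ is genuinely a fresh exponential after conditioning; this is immediate from the definition.
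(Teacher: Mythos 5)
Your proof is correct. Note that the paper does not prove \Cref{lem:expdist} itself---it imports it from \cite{miller2013parallel}---and your argument is essentially the standard one from that source: condition on which index attains the (a.s.\ unique) maximum of $\delta_j-d_j$, observe that $v=\max_{j\ne i}(\delta_j-d_j)$ is determined by the other variables, and use memorylessness of $\mathcal{EXP}(\beta)$ via $\Pr[a<\delta\le a+1]\le(1-e^{-\beta})\Pr[\delta>a]$. Your bookkeeping is exactly right where it matters: retaining the factor $\Pr[\delta>a]$ makes the per-index contributions sum to $(1-e^{-\beta})\sum_i\Pr[A_i]=1-e^{-\beta}\le\beta$ rather than incurring a factor $n$, and your observations that the ordering $d_1\le\cdots\le d_n$ is immaterial and that the $a<0$ case only helps are both correct.
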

Applied to our ball carving, this means that a node is added to a cluster with probability at least $\beta$.

\subsubsection{Proof of \Cref{lem:netdecompmis}}

Using the tools introduced in the previous section, we are now ready to prove \Cref{lem:netdecompmis}.
\begin{proof}[Proof of \Cref{lem:netdecompmis}]
  We will show how to obtain a $(O(\sqrt{\log \log n}), 2^{O(\sqrt{\log \log n})})$ network decomposition of $H$ from a $(2^{O(\sqrt{\log \log n})}, 2^{O(\sqrt{\log \log n})})$ decomposition of $H^k$.

  \paragraph{Proof Outline:}
  % The outline is as follows:
  We restart from scratch, using the network decomposition only for amplifying success probabilities.
  We perform a number of phases, where in each phase we perform multiple steps of ball carving, initiating the process from nodes of one color class.
  As a single execution of ball carving does not have high enough success probability, we run a number of independent executions in parallel and pick one that was successful using the initial network decomposition.
  In this step, we will make use of the fact that we have a decomposition of $H^k$, which prevents different clusters of the same color to overlap in their ball carving, making it possible to pick one successful run per cluster.

  \paragraph{Initial Network Decomposition:}
  Using the clustering given by the meta-nodes as an initial clustering, we can compute a $(2^{O(\sqrt{\log \log n})}, 2^{O(\sqrt{\log \log n})})$ decomposition of $G^k$ for $k = \Omega(\log \log n \cdot 2^{2\sqrt{\log N}})$, by \Cref{thm:netdecomp}.
  As the algorithm of \Cref{thm:netdecomp} does not break initial clusters, this is also a network decomposition of $H^K$ for $K = \Omega(2^{2\sqrt{\log N}})$, since meta-nodes have diameter $O(\log \log n)$.

  \paragraph{Algorithm outline:}
  Next, we want to transform this $(2^{O(\sqrt{\log N})}, 2^{O(\sqrt{\log N})})$ decomposition of $H^K$ to a decomposition of $H$, while reducing the number of colors.
  Note that we can simulate one round of communication on $H$ in $O(\log \log n)$ rounds of communication on $G$, as long as every node only needs to learn about a constant number of messages from its (potentially up to $O(\log n)$ many) neighbors.
  As mentioned, the ball carving described in \Cref{sec:ballcarving} has this property.
  From now on, let a round denote one such round of communication on $H$.

  The algorithm consists of phases $t = 1, \dots \sqrt{\log N}$.
  Let $H_1 = H$.
  In each phase, we remove a \emph{block} $W_t$ from the current graph and set $H_{t+1} = H_t \setminus W_t$.
  The $t$-th phase is as follows:
  We have $2^{O(\sqrt{\log N})}$ iterations of $2^{O(\sqrt{\log N})}$ rounds each.
  In the $i^{th}$ iteration, we perform one step of ball carving:
  Let $\mathcal{C}_i \subseteq H_t$ denote all remaining and active meta-nodes of color $i$, let $\beta = 2^{- \sqrt{\log N} - 2}$, and let $d = 2^{2 \sqrt{\log N}}$.
  At the start of a phase, all meta-nodes are active.
  Now we perform one step of ball carving as described above, but with the key-difference that only meta-nodes $v \in \mathcal{C}_i$ pick a random value $r_v \sim \mathcal{EXP}(\beta)$.
  In the following broadcast, all meta-nodes from $H_t$ that are still active, participate and behave the same as in the regular ball carving.
  Note that some vertices might not be reached by this process, we just ignore them for this iteration.
  Now, we add all meta-nodes that are clustered by this process to $W_t$ and all meta-nodes that received some value, but did not get clustered, are set to be \emph{inactive} for this phase.
  As shown in \cite{elkin2016distributed}, this yields non-adjacent clusters of strong diameter at most $2 \max_v r_v$.

  \paragraph{Analysis:}
  We call a run of the algorithm \emph{successful} if the following holds:
  A $2^{-\sqrt{\log N}}$ fraction of meta-nodes that received any value $r_v$ is in $W_t$, and the maximum value of $r_v$ chosen by the meta-nodes does not exceed $d$.
  The second part is true with probability at least $3/4$, as for $X \sim \mathcal{EXP}(\beta)$ we have $\Pr[X \geq d + 1] \leq e^{-\beta(d+1)}$.
  By our choice of parameters:
  \begin{align*}
    \Pr[X \geq d + 1] & \leq \exp\left(- 2^{-\sqrt{\log N} - 2} \cdot \left(2^{2 \sqrt{\log N}} + 1\right)\right) \\
                      & \leq \exp\left(- 2^{\sqrt{\log N} - 2} \right) \ll \frac{1}{4N},
  \end{align*}
  where the last inequality hold for $N$ large enough.
  Thus, by a union bound over all $v \in H_t$, the maximum $r_v$ is at most $d$ with probability at least $1 - 1/4 = 3/4$.

  For the number of meta-nodes that are deactivated, we use the following claim:
  \begin{claim}
    With probability at least $3/4$ a $2^{- \sqrt{\log N}}$ fractions of all meta-nodes in $H_t$ that received a message from a node $v \in \mathcal{C}_t$ are added to $W_t$.
  \end{claim}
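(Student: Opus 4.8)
The plan is to apply the property of the exponential distribution from \Cref{lem:expdist} to lower-bound, in expectation, the number of meta-nodes that get clustered, and then turn the expectation bound into a high-probability bound via a deterministic a-priori upper bound on how many nodes can possibly be clustered. Fix the $t$-th phase and the $i$-th iteration with color class $\mathcal{C}_i$. Let $U$ denote the set of meta-nodes of $H_t$ that receive at least one value $r_v$ (i.e. lie within distance $R_v = \lfloor r_v \rfloor$ of some center $v \in \mathcal{C}_i$). For each $u \in U$, \Cref{lem:expdist} (applied with the $d_j$'s being the distances $d_H(u, v_j)$ to the centers, noting that centers not reaching $u$ are irrelevant since the influence of an unreached center toward a given node is an $r_v$-value of $-\infty$, which never wins) says that the top two shifted values $m_1 - m_2$ differ by more than $1$ — and hence $u$ is clustered — with probability at least $\beta$. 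So $\E[\,|W_t \cap U|\,] \geq \beta |U| = 2^{-\sqrt{\log N}-2} |U|$.

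Next I would convert this expectation into the claimed tail bound. The point is that $|W_t \cap U| \le |U|$ deterministically, so the random variable $Z := |W_t \cap U|$ lies in $[0, |U|]$ with mean at least $\beta|U|$. Writing $Y := |U| - Z \in [0, |U|]$, we have $\E[Y] \le (1-\beta)|U|$, and by Markov's inequality
\[
  \Pr\!\left[\,Z < \tfrac{\beta}{4}\,|U|\,\right]
  = \Pr\!\left[\,Y > \bigl(1 - \tfrac{\beta}{4}\bigr)|U|\,\right]
  \le \frac{(1-\beta)|U|}{\bigl(1 - \tfrac{\beta}{4}\bigr)|U|}
  = \frac{1-\beta}{1-\beta/4}.
\]
For $\beta = 2^{-\sqrt{\log N}-2}$ this ratio is $1 - \Theta(\beta)$, which is not yet $\le 1/4$; so a single step of ball carving does not by itself give the claim. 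The fix is that the iteration as described is actually repeated enough times (or, equivalently, one amplifies via the parallel independent runs already built into the algorithm): running $\Theta(1/\beta) = 2^{O(\sqrt{\log N})}$ independent copies of the step and keeping the best one drives the failure probability below $1/4$, since the events for distinct copies are independent and each individually succeeds with probability $\Omega(\beta)$. The surviving successful copy clusters at least a $\beta/4 = 2^{-\sqrt{\log N}-4} \ge 2^{-\sqrt{\log N}}/4$ fraction — and absorbing the constant into the exponent (or noting the claim's exponent $2^{-\sqrt{\log N}}$ already has slack) gives the stated $2^{-\sqrt{\log N}}$ fraction.

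The place that needs the most care — and where I expect the real work to lie — is the transition from "expected fraction clustered is $\ge \beta$" to "a $\ge 2^{-\sqrt{\log N}}$ fraction is clustered with probability $\ge 3/4$," because the per-node clustering events are not independent (they are coupled through the shared distances and through which centers reach which nodes). The clean way around this, as sketched above, is to avoid any concentration inequality on the dependent sum and instead exploit (i) the deterministic ceiling $Z \le |U|$ together with Markov on the complement, and (ii) independence \emph{across} the parallel runs — which is exactly the structure \Cref{lem:netdecompmis} already uses to identify a successful run via the initial network decomposition of $H^K$. One subtlety to handle explicitly: the set $U$ itself is random (it depends on the drawn $r_v$'s), so the bound $\E[Z \mid U] \ge \beta|U|$ should be stated conditionally on the radii, and then the claim read as a statement about the fraction \emph{among the nodes that received a message}, which is precisely how the claim is phrased. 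Finally I would note that on the complementary $3/4$-probability events the two requirements in the definition of "successful" (the max radius being at most $d$, and the $2^{-\sqrt{\log N}}$ clustered fraction) each fail with probability $\le 1/4$, so by a union bound a run is successful with probability $\ge 1/2$; amplifying over the parallel runs then makes the overall construction succeed with probability $1 - 1/\mathrm{poly}(n)$ as required by \Cref{lem:netdecompmis}.
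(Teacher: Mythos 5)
There is a genuine gap, and it starts with how you read \Cref{lem:expdist}. The lemma says the top two shifted values are within $1$ of each other with probability \emph{at most} $\beta$; hence a meta-node that received some value fails to be clustered with probability at most $\beta$, i.e., it \emph{is} clustered with probability at least $1-\beta$, not ``at least $\beta$'' as you assert (the paper's informal gloss after the lemma is itself misstated, but the claim's proof uses the correct direction). With the correct reading the argument is short and needs no independence and no amplification: let $N_t$ be the number of reached meta-nodes and $X$ the number of them that are \emph{not} added to $W_t$; then $\E[X]\le \beta N_t$ by linearity of expectation, and Markov's inequality gives
\[
  \Pr\left[X \ge 2^{-\sqrt{\log N}} N_t\right] \le \frac{\beta}{2^{-\sqrt{\log N}}} = \frac{2^{-\sqrt{\log N}-2}}{2^{-\sqrt{\log N}}} = \frac14 ,
\]
so with probability at least $3/4$ all but a $2^{-\sqrt{\log N}}$ fraction of the reached meta-nodes join $W_t$. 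This stronger reading of the claim is the one actually needed later: the unclustered reached nodes are exactly the ones deactivated, and only if their fraction is at most $2^{-\sqrt{\log N}}$ per phase does the graph empty out after $\sqrt{\log N}$ phases, which is what yields the $O(\sqrt{\log\log n})$ color bound.

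Your route cannot recover this. Starting from the weak bound $\E[|W_t\cap U|]\ge \beta|U|$, your own Markov computation shows the tail bound fails, and the proposed fix --- repeating the carving step $\Theta(1/\beta)=2^{O(\sqrt{\log N})}$ times, or repurposing the $O(\log n)$ parallel runs --- changes the algorithm rather than proving the stated claim (the parallel runs are used in \Cref{lem:netdecompmis} only to boost the success probability of a whole run from $1/2$ to $1-1/\mathrm{poly}(n)$, and independence across runs does not make a single run's iteration succeed). Moreover, even granting the fix, you only obtain a clustered fraction of order $\beta=2^{-\sqrt{\log N}-2}$, which is far too weak: if only a $2^{-\sqrt{\log N}}$ fraction were clustered per phase, $\sqrt{\log N}$ phases would leave almost the entire graph unprocessed, so the lemma's round and color bounds would not follow. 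Your instinct to avoid concentration inequalities on the dependent indicator sum is right, but Markov must be applied to the count of \emph{unclustered} reached nodes, whose expectation is $O(\beta N_t)$, rather than to the clustered count.
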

  \begin{proof}
    Let $N_t$ denote the number of remaining and active meta-nodes that received some message.
    Using \Cref{lem:expdist} and setting $d_i = d(u, v_i)$ and $\delta_i = r_{v_i}$, we have that the probability that $m_1 - m_2 \leq 1$, is at most $\beta$.
    This means that the probability of a meta-node from $N_t$ not being included in $W_t$ is at most $\beta$ as well.
    Letting $X_t$ denote the number of unclustered meta-nodes from $N_t$, and using linearity of expectation, we get $\E [X] \leq \beta N_t$.
    We can now apply Markov's inequality:
    \begin{align*}
      \Pr\left[X \geq 2^{- \sqrt{\log N}} N_t \right] \leq \frac{\beta N_t}{2^{-\sqrt{\log N}}} = \frac{2^{-\sqrt{\log N} - 2}}{2^{-\sqrt{\log N}}} = 1/4
    \end{align*}
    Thus, with probability at least $3/4$ the desired event holds.
  \end{proof}
  Combining these results, we get that a single run is successful with probability at least $1/2$.
  Note that this algorithm can be implemented using $O(\log d)$-bit messages:
  if a meta-node chooses a larger random value than $d$, we can abort this execution.
  As such a run is unsuccessful by definition, this modification does not change our success probability.
  Additionally, the round complexity is $O(d) = 2^{O(\sqrt{\log \log n})}$.

  As we can send $O(\log n)$ bit messages per round and $O(\log d) = O(\sqrt{\log \log n})$, we can execute $O(\log n)$ many parallel executions of the ball carving in parallel, while losing just a $O(\sqrt{\log \log n})$ factor overhead.
  That is, within $O(\sqrt{\log \log n}) \cdot 2^{O(\sqrt{\log \log n})} = 2^{O(\sqrt{\log \log n})}$ rounds, we can execute $O(\log n)$ runs of ball carving.
  As an individual run has success probability $1/2$, the probability that all runs fail is at most $1/\text{poly}(n)$.

  \paragraph{Identifying a successful run:}
  First, note that we can have different clusters operate independently, as they are separated by $\Omega(2^{2\sqrt{\log N}})$ hops, while the balls have radius at most $d = 2^{\sqrt{\log N}}$.
  As noted before, if a meta-node $v$ chooses a value $r_v \geq d + 1$, we mark the run as failed.
  For a single run and a single cluster, we can now aggregate the number of clustered and unclustered meta-nodes at the cluster center, by a simple convergecast.
  The diameter of the meta-nodes that initiated the ball carving is at most $2^{O(\sqrt{\log \log n})}$, by the parameters of the initial network decomposition.
  Additionally, meta-nodes at distance up to $d$ might have joined a newly-formed cluster.
  Thus in $d + 2^{O(\sqrt{\log \log n})}  = 2^{O(\sqrt{\log \log n})}$ rounds, we can determined if a run was successful at the center meta-node of the cluster.

  For all runs, we just need a $O(\log \log n)$ round overhead:
  As there are at most $N = O(\log n)$ meta-nodes in total, $O(\log \log n)$ bit suffice to describe these numbers.
  Again, as we have access to $O(\log n)$ bit messages, we can execute $O(\log n)$ runs in parallel with a $O(\log \log n)$ factor overhead.
  Finally, the center meta-node picks the first successful run and announces it to all nodes that participated in some execution.

  This describes one iteration.
  By construction, at the end of a phase, all meta-nodes are either in a cluster or deactivated.
  Furthermore, at least a $2^{\sqrt{\log N}}$ fraction of meta-nodes joined a cluster.
  Thus, after $\sqrt{\log N}$ phases there are no meta-nodes remaining.
  For every phase, we can use a single color:
  As shown in \cite{elkin2016distributed}, clusters that are created at the same time are non-adjacent.
  This argument can easily be extended to the case where we have multiple different phases, but deactivate the \emph{boundary} nodes of each cluster (i.e. the meta-nodes that received $m_1 - m_2 \leq 1$.
  This means that we only need $\sqrt{\log \log n}$ colors.
  The diameter of the new clusters is at most $d = 2^{O(\sqrt{\log \log n})}$ by definition of a successful run.
\end{proof}

\subsubsection{Proof of \Cref{thm:netdecomp}}

The proof is analogous to the proof of \Cref{thm:congestMISslow}. Instead of using the network decomposition from \Cref{lem:impnetdecomp}, we use the decomposition obtained in \Cref{lem:netdecompmis} to get a better runtime.
\begin{proof}[Proof of \Cref{thm:congestMIS}]
  As a first step, we run the algorithm of Ghaffari\cite{ghaffari2016improved} for $O(\log \Delta)$ rounds.
  This will yield an independent set of nodes, which we remove from the graph, together with their neighbors.
  By \Cref{lem:shatterMIS} all nodes that remain, form components of size $O(\Delta^4 \log n)$, by property (P2).
  Next, we compute a $(5, O(\log \log n))$ ruling set $B_C^*$, using \Cref{lem:RSghaffari}, for every remaining component $C$.
  This takes $O(\log \log n)$ rounds.
  Now we cluster the nodes around vertices of $B_C^*$, where each node joins the cluster formed by the closest $B_C^*$ vertex, breaking ties by IDs.
  This clustering can be performed in $O(\log \log n)$ rounds as well, by $B_C^*$ nodes broadcasting their identifiers.
  Each node joins the cluster formed by the node who's identifier it receives first, breaking ties arbitrarily.
  By property (P1) of \Cref{lem:shatterMIS}, this yields at most $O(\log n)$ clusters per component.
  Let us call one such cluster a \emph{meta-node}.
  We can now apply \Cref{lem:netdecompmis}, to get a $(O(\sqrt{\log \log n}), 2^{O(\sqrt{\log \log n})})$ decomposition of the meta-nodes in each component.

  For computing the MIS we proceed as follows:
  We work through each of the $O(\sqrt{\log \log n})$ color classes, spending $O(\log \Delta) + 2^{O(\sqrt{\log \log n})}$ rounds per color class.
  In one color class, we can find one MIS per super cluster, as super clusters of the same color are non-adjacent.

  In every step, all nodes of the active super clusters execute the algorithm of Ghaffari\cite{ghaffari2016improved}, as reviewed in \Cref{thm:ghaffariMIS}, running $O(\log n)$ parallel executions of it.  This can be done without any overhead, as every single execution only uses one-bit messages.
  Note that this super cluster contains $O(\Delta^4 \log n)$ regular nodes, by property (P1) of \Cref{lem:shatterMIS}.
  Running this algorithm for $O(\log (\Delta^4 \log n)) = O(\log \Delta + \log \log n)$ rounds, we find an MIS with probability at least $1 - 1/\mathrm{poly}(\Delta^4 \log n)$. Since all $O(\log n)$ parallel executions are independent, the probability that none of them succeeds is at most $1/\mathrm{poly}(n)$.

  Now we just need to find a run that was successful.
  For this we use the network decomposition we obtained from \Cref{lem:impnetdecomp}.
  First, each node $v$ performs a local check for all runs, by making sure that either $v$ is in the MIS and none of its neighbors is, or that $v$ is not in the MIS, but at least one of its neighbors is.
  This can again be done with just one-bit messages.
  Then, we can convergecast these local checks towards the cluster centers in $2^{O(\sqrt{\log \log n})}$ rounds.
  These centers can pick the first successful run and inform all nodes of their cluster in $2^{O(\sqrt{\log \log n})}$ rounds.
  We remove all nodes that are in the MIS of the super cluster, together with all their neighbors (in the base graph).
  After this, we move on to the next color.

  In total, we spend $O(\sqrt{\log \log n}) \cdot (O(\log \Delta + \log \log n) + 2^{O(\sqrt{\log \log n})}) = O(\log \Delta \cdot \sqrt{\log \log n}) + 2^{O(\sqrt{\log \log n})}$ rounds and find an MIS with high probability.
\end{proof}

\newpage

\nocite{elkin2006faster}
\bibliographystyle{alpha}
\bibliography{refs}

\appendix

\section{The Independent Set Algorithm of \cite{ghaffari2016improved}}
\label{app:ghaffarimis}

\begin{mdframed}[hidealllines=false,backgroundcolor=gray!30]
  \vspace{-10pt}
  \paragraph{The Algorithm}
  In each round $t$, each node $v$ has a \emph{desire-level} $p_t(v)$, which is initially set to $p_0(v) = 1/2$.
  We define the \emph{effective degree} $d_t(v)$ of a node $v$ to be the sum of the desire-levels of neighbors of $v$, i.e. $d_t(v) = \sum_{u \in N(v)} p_t(u)$.
  The desire levels change over time as follows:
  \begin{align*}
    p_{t+1} =
    \begin{cases}
      p_t(v)/2,             & \text{if } d_t(v) \geq 2 \\
      \min\{2p_t(v), 1/2\}, & \text{if } d_t(v) < 2.
    \end{cases}
  \end{align*}
  The desire levels are used as follows:
  In each round, node $v$ gets \emph{marked} with probability $p_t(v)$ and if no neighbor of $v$ is marked, $v$ joins the MIS and get removed along with its neighbors.
\end{mdframed}

Note that this algorithm can be implemented using single-bit messages: instead of exchanging the desire levels, it is sufficient to exchange in which of the two way they change.
For indicating whether or not a node is marked or joined the MIS, single bits suffice as well.

\section{Sparse Neighborhood Cover}
\label{app:sparsecover}

% The notion of sparse neighborhood covers as introduced by Awerbuch and Peleg\cite{awerbuch1990sparse} is closely related to network decompositions.
In this section, we show how a network decomposition of $G^k$ can be transformed into a sparse neighborhood cover.
Let us first recall its definition, as introduced by Awerbuch and Peleg\cite{awerbuch1990sparse}:
\begin{definition}
  Let $G = (V, E)$ be a graph and let $k \geq 1$, $d \geq 1$, and $s \geq 1$ be three integer parameters.
  A \emph{$s$-sparse $k$-neighborhood cover of diameter $d$} is a collection of clusters $C \subseteq V$ such that
  (A) for each cluster $C$, we have a rooted spanning tree of $G[C]$ of diameter at most $d$,
  (B) each $k$-neighborhood of $G$ is completely contained in some cluster, and
  (C) each node of $G$ is in at most $s$ clusters.
\end{definition}
While there were constructions known in the {\sffamily LOCAL} model \cite{awerbuch1990sparse, awerbuch1996fast}, Ghaffari and Kuhn\cite{ghaffari2018derandomizing} only recently showed that neighborhood covers can also be computed in the {\sffamily CONGEST} model.
In particular, they showed that a $(c,d)$ network decomposition of $G^{2k}$ can be transformed into a $c$-sparse $k$-neighborhood cover of diameter $d + k$.
In contrast to our result, they assumed to be given a strong diameter decomposition of $G^{2k}$, while \Cref{thm:netdecomp} only guarantees weak diameter.
In the next Corollary we show that the construction of \cite{ghaffari2018derandomizing} is also applicable if we are only given a weak diameter decomposition.
\begin{corollary}
  \label[corollary]{cor:sparsecover}
  Assume that we are given a $(c, d)$-decomposition of $G^{2k}$.
  One can compute a $c$-sparse $k$-neighborhood cover of diameter $d + k$ in $O(c(d + k))$ rounds in the $\mathsf{CONGEST}$ model on $G$.
  Consequently, for every $k \geq 1$, one can deterministically compute a $2^{O(\sqrt{\log n})}$-sparse $k$-neighborhood cover of diameter $k \cdot 2^{O(\sqrt{\log n})}$ of and $n$-node graph $G$ in $k \cdot 2^{O(\sqrt{\log n})}$ rounds of $\mathsf{CONGEST}$.
\end{corollary}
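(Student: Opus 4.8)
The plan is to follow the reduction of Ghaffari and Kuhn and to check that its only use of the strong-diameter hypothesis can be replaced by the $G$-tree structure that the decompositions of \Cref{thm:netdecomp} carry. Concretely, given the $(c,d)$-decomposition of $G^{2k}$, for each of its clusters $\mathcal{C}$ I would define the \emph{extended cluster} $\hat{\mathcal{C}} := \{v \in V : \mathrm{dist}_G(v,\mathcal{C}) \le k\}$ and output the collection $\{\hat{\mathcal{C}}\}_{\mathcal{C}}$, inheriting the coloring. Property (B) is then immediate: every node $v$ lies in a unique cluster $\mathcal{C}$ of the decomposition, and any $u$ with $d_G(u,v) \le k$ satisfies $\mathrm{dist}_G(u,\mathcal{C}) \le d_G(u,v) \le k$, so the $k$-hop neighborhood $N_{G,k}(v)$ is contained in $\hat{\mathcal{C}}$. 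Property (C) follows from the $2k$-separation of the decomposition: if $\mathcal{C},\mathcal{C}'$ have the same color and $w \in \hat{\mathcal{C}} \cap \hat{\mathcal{C}}'$, then $\mathrm{dist}_G(\mathcal{C},\mathcal{C}') \le \mathrm{dist}_G(w,\mathcal{C}) + \mathrm{dist}_G(w,\mathcal{C}') \le 2k$, contradicting that clusters of the same block of a decomposition of $G^{2k}$ are at $G$-distance more than $2k$; hence a node lies in at most one extended cluster per color, i.e.\ in at most $c$ of them.

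The crux is property (A): a rooted spanning tree of $G[\hat{\mathcal{C}}]$ of depth at most $d+k$. This is the step where Ghaffari and Kuhn use that $G[\mathcal{C}]$ is connected. Instead I would use the tree $T_{\mathcal{C}}$ in $G$ that \Cref{thm:netdecomp} attaches to $\mathcal{C}$: it is rooted at the center $r_{\mathcal{C}} \in \mathcal{C}$, spans $\mathcal{C}$, has depth at most $d$, and — by the construction (each merge connects cluster centers by a $G$-path of length at most $k$, with the interiors of old clusters handled recursively) — all of its vertices lie within $G$-distance $k$ of $\mathcal{C}$, so $V(T_{\mathcal{C}}) \subseteq \hat{\mathcal{C}}$. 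Consequently $G[\hat{\mathcal{C}}]$ is connected: every $v \in \hat{\mathcal{C}}$ reaches some $u \in \mathcal{C} \subseteq V(T_{\mathcal{C}})$ along a $G$-path of length at most $k$, all of whose vertices are again within $k$ of $\mathcal{C}$ and therefore in $\hat{\mathcal{C}}$. A BFS tree of $G[\hat{\mathcal{C}}]$ rooted at $r_{\mathcal{C}}$ then reaches every $u \in \mathcal{C}$ within $d$ hops (through $T_{\mathcal{C}} \subseteq \hat{\mathcal{C}}$) and every $v \in \hat{\mathcal{C}}$ within $d+k$ hops, giving the claimed diameter. I expect the verification that $V(T_{\mathcal{C}})$ stays inside the $k$-neighborhood of $\mathcal{C}$ — equivalently, that the recursive merging of \Cref{thm:netdecomp} never routes a cluster's spanning tree on a long detour away from that cluster — to be the main point to nail down; in the strong-diameter case of \cite{ghaffari2018derandomizing} this is trivial since $T_{\mathcal{C}} \subseteq \mathcal{C}$, and everything else is routine bookkeeping.

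For the round complexity I would process the $c$ color classes one at a time. Within a color class the extended clusters are vertex-disjoint by the argument above, so there is no congestion: a $k$-round BFS started simultaneously from all clusters of that color tells every vertex its distance to the nearest such cluster (capped at $k$) together with that cluster's identifier, and a further $d+k$ rounds of downward propagation from the centers along the edges of the $\hat{\mathcal{C}}$'s build the rooted trees; this is $O(d+k)$ rounds per color and $O(c(d+k))$ rounds in total. For the second statement, apply \Cref{thm:netdecomp} with $2k$ in place of $k$ to obtain, in $k \cdot 2^{O(\sqrt{\log n})}$ rounds, a $(2^{O(\sqrt{\log n})}, 2^{O(\sqrt{\log n})})$ decomposition of $G^{2k}$ whose clusters have $G$-radius $d = k \cdot 2^{O(\sqrt{\log n})}$ (and which comes with the trees $T_{\mathcal{C}}$ used above); feeding $c = 2^{O(\sqrt{\log n})}$ and this $d$ into the first part yields a $2^{O(\sqrt{\log n})}$-sparse $k$-neighborhood cover of diameter $d+k = k \cdot 2^{O(\sqrt{\log n})}$ in $O(c(d+k)) = k \cdot 2^{O(\sqrt{\log n})}$ further rounds, hence $k \cdot 2^{O(\sqrt{\log n})}$ rounds overall.
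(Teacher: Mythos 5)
Your proposal is correct and follows essentially the same route as the paper: expand each cluster of the $G^{2k}$-decomposition by $k$ hops, process one color class at a time, obtain sparsity from the $2k$-separation, and obtain the rooted tree of depth $d+k$ because the $G$-tree attached to each cluster by \Cref{thm:netdecomp} is contained in the expanded cluster. The containment you flag as ``the main point to nail down'' is exactly what the paper settles with a one-line argument: since the decomposition is of $G^{2k}$, any tree path between two cluster nodes with no internal cluster node has length at most $2k$ (not $k$, as you wrote), so every internal node is within distance $k$ of one of its two endpoints in the cluster and hence lies in the expanded cluster.
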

\begin{proof}
  We process all colors one by one:
  For each cluster $C$ of the current color, we create a new, larger cluster $C'$, by including $C$ and all nodes of $G$ that are within distance $k$ of some node from $C$.
  First, note that this way all nodes of the spanning tree of $C$ will be in $C'$.
  This is because every path in the spanning tree between two nodes of $C$, that has no internal nodes in $C$, has length at most $2k$ by definition of the network decomposition.
  As both of these endpoints will include their entire $k$ hop neighborhood, the path will be contained in $C'$.

  If two clusters $C_1$ and $C_2$ had the same color in the network decomposition, their resulting clusters of the neighborhood cover will be disjoint, as they were separated by at least $2k$ hops.
  Thus, for every color, a node is in at most one cluster of the computed neighborhood cover.
  The cover is therefore $c$-sparse.
  By construction, the clusters have diameter at most $d+k$ and for every node there is some cluster containing its entire $k$-hop neighborhood.
  For a single color, the new clusters can be computed in $O(d + k)$ rounds, thus the total computation takes $O(c(d + k))$ rounds.

  This proves the first part of the statement.
  The second part now follows by applying \Cref{thm:netdecomp}.
\end{proof}

\subsection{Applications to MST}
\label{app:mst}

Another application of sparse neighborhood covers is in the distributed construction of minimum spanning trees (MST).
Elkin\cite{elkin2004faster} showed that there is a randomized distributed algorithm that computes an MST in $\tilde{O}(\mu(G,\omega) + \sqrt{n})$ of the {\sffamily CONGEST} model, where $\mu(G, \omega)$ is the \emph{MST-radius} of the graph $G$.
The MST-radius $\mu(G, \omega)$ is defined as the minimum $t$ such that every edge not belonging to the MST is the heaviest edge in some cycle of length at most $t$.
The author also remarks that given a deterministic construction of sparse neighborhood covers would yield a similar deterministic algorithm for computing the minimum spanning tree.
While Awerbuch et al.\cite{awerbuch1996fast} gave such constructions using large messages, the only construction applicable in the {\sffamily CONGEST} model was due to Ghaffari and Kuhn\cite{ghaffari2018derandomizing}.
However, they were only able to construct a $2^{O(\sqrt{\log n \log \log n})}$-sparse $k$-neighborhood cover of diameter $k \cdot 2^{O(\sqrt{\log n \log \log n})}$.
Using our improved result from \Cref{cor:sparsecover}, we get the following:
\begin{corollary}
  \label[corollary]{cor:mst}
  There is a deterministic distributed algorithm that computes an MST of an $n$-vertex graph $(G, \omega)$ with distinct edge-weights\footnote{This is without loss of generality, as edge weights can be made distinct using standard techniques.} in $2^{O(\sqrt{\log n})} \cdot O(\mu(G, \omega) + \sqrt{n})$ rounds, where every vertex is given the MST-radius $\mu(G, \omega)$ as input.
  Moreover, this algorithm works in the $\mathsf{CONGEST}$ model.
\end{corollary}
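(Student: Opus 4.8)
The plan is to take Elkin's randomized MST algorithm \cite{elkin2004faster} and replace its single randomized ingredient --- the construction of a sparse neighborhood cover --- with the deterministic construction provided by \Cref{cor:sparsecover}. Recall the structure of Elkin's algorithm: it splits the computation into a controlled-GHS / Kutten--Peleg part that costs $\tilde O(\sqrt n)$ rounds, and a ``local'' part that decides the status of the remaining edges. For the local part, Elkin observes that an edge $e \notin \mathrm{MST}(G)$ is, by definition of the MST-radius, the heaviest edge of some cycle of length at most $\mu := \mu(G,\omega)$; hence, if every $\mu$-neighborhood of $G$ is contained in some low-diameter cluster equipped with a spanning tree, each cluster can aggregate edge weights along its tree and certify which of its internal edges are non-MST edges. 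Running this over all clusters of the cover and combining with the $\tilde O(\sqrt n)$ part yields the MST.

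First I would invoke \Cref{cor:sparsecover} with $k = \mu$ to obtain, deterministically and in the $\mathsf{CONGEST}$ model, an $s$-sparse $\mu$-neighborhood cover of diameter $d$, with $s = 2^{O(\sqrt{\log n})}$ and $d = \mu \cdot 2^{O(\sqrt{\log n})}$, in $\mu \cdot 2^{O(\sqrt{\log n})}$ rounds. Property (A) of \Cref{cor:sparsecover} gives each cluster a rooted spanning subtree of $G[C]$, so all intra-cluster aggregation runs in $O(d)$ rounds per cluster, and by $s$-sparsity (scheduling the clusters by a proper coloring) all clusters can be processed with only a $\tilde O(s)$ factor of congestion overhead on any single edge. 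Since $\mu$ is given to every vertex as input, there is no need to estimate the MST-radius, and distinctness of edge weights is without loss of generality by the standard tie-breaking with identifiers noted in the footnote.

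Next I would track the overhead. In Elkin's analysis, a neighborhood cover of sparsity $\mathrm{polylog}(n)$ and diameter $O(\mu \cdot \mathrm{polylog}(n))$ makes the local part cost $\tilde O(\mu)$ rounds and the total $\tilde O(\mu + \sqrt n)$. Here the cover parameters are larger by a $2^{O(\sqrt{\log n})}$ factor in both sparsity and diameter, and each of these enters his bound multiplicatively, so the local part costs $\mu \cdot 2^{O(\sqrt{\log n})}$ rounds (the $\log n$ factors hidden in $\tilde O(\cdot)$ being absorbed into $2^{O(\sqrt{\log n})}$). The controlled-GHS part is unchanged at $O(\sqrt n \log^* n)$. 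Adding the $\mu \cdot 2^{O(\sqrt{\log n})}$ cost of building the cover gives a total of $2^{O(\sqrt{\log n})} \cdot (\mu(G,\omega) + \sqrt n)$ rounds, as claimed.

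The main obstacle is verifying in detail that randomness enters Elkin's algorithm only through the neighborhood-cover construction --- which Elkin himself remarks upon --- and that the precise dependence of his running-time bound on the cover's sparsity $s$ and diameter $d$ is multiplicative and polynomially bounded, so that substituting $2^{O(\sqrt{\log n})}$-quality covers for $\mathrm{polylog}$-quality ones costs exactly a $2^{O(\sqrt{\log n})}$ factor and nothing worse. A secondary point to check is that \Cref{cor:sparsecover} supplies covers with the strong-diameter (rooted spanning tree) guarantee that the local aggregation relies on; this is precisely property (A) of the cover produced there, so no additional work is needed.
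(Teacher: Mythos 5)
Your proposal is viable but takes a genuinely different route from the paper. You treat Elkin's algorithm \cite{elkin2004faster} as a black box and substitute the deterministic cover of \Cref{cor:sparsecover} for its randomized cover construction, relying on two properties of that algorithm that you explicitly leave unverified: that randomness enters only through the cover, and that its running time depends only multiplicatively (and polynomially) on the cover's sparsity $s$ and diameter $d$, so that degrading from $\mathrm{polylog}(n)$-quality to $2^{O(\sqrt{\log n})}$-quality covers costs exactly a $2^{O(\sqrt{\log n})}$ factor. The paper deliberately avoids opening up Elkin's algorithm (its footnote notes that Elkin's version involves a time/message trade-off and that a ``simpler version'' is presented instead): it builds the $\mu(G,\omega)$-neighborhood cover via \Cref{cor:sparsecover}, runs a standard MST algorithm (Kutten--Peleg \cite{kutten1998fast}) \emph{inside each cluster} in $\mu(G,\omega)\cdot 2^{O(\sqrt{\log n})} + \tilde O(\sqrt n)$ rounds, scheduled over the $2^{O(\sqrt{\log n})}$-fold overlap, and then decides each edge $e$ purely locally by two rules --- $e$ is discarded if it is absent from the MST of some cluster containing it (correct because distinct weights make the MST unique), and $e$ is kept if it lies in the MST of every cluster containing it (correct because any non-MST edge is the heaviest edge on a cycle of length at most $\mu(G,\omega)$, and that whole cycle lies inside some cluster by the cover property). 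This self-contained argument yields the same $2^{O(\sqrt{\log n})}\cdot O(\mu(G,\omega)+\sqrt n)$ bound while needing nothing from \cite{elkin2004faster} beyond the definition of the MST-radius; your route buys brevity by reusing Elkin's machinery, but to be a complete proof it would still require tracking his analysis to confirm the multiplicative dependence on $(s,d)$ and the confinement of randomness to the cover --- precisely the ``main obstacle'' you name, and precisely what the paper's per-cluster-MST construction sidesteps. Your accounting of the cover construction cost, the use of the given $\mu(G,\omega)$, and the strong-diameter spanning trees from property (A) of \Cref{cor:sparsecover} are all consistent with the paper.
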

For completeness, we provide a high-level overview\footnote{Elkin\cite{elkin2004faster} used a slightly different approach, allowing for a smooth trade-off between run-time and message size. We present a simpler version, focusing on the $\mathsf{CONGEST}$ model with $O(\log n)$ bit messages.}:
First, we compute a $\mu(G, \omega)$-neighborhood cover, using \Cref{cor:sparsecover}.
This cover will have sparsity $2^{O(\sqrt{\log n})}$ and diameter $\mu(G, \omega) \cdot 2^{O(\sqrt{\log n})}$.
Now we compute an MST of every cluster using a standard MST algorithm, such as the one of Kutten and Peleg\cite{kutten1998fast}.
As clusters have diameter $\mu(G, \omega) \cdot 2^{O(\sqrt{\log n})}$, this algorithm runs in $\mu(G, \omega) \cdot 2^{O(\sqrt{\log n})} + O(\sqrt{n} \cdot \log^* n)$ rounds.
Since we have at most $2^{O(\sqrt{\log n})}$ clusters overlapping, we can compute an MST of each cluster in $2^{O(\sqrt{\log n})} \cdot (\mu(G, \omega) + \sqrt{n})$ rounds in total.
Now the nodes can locally decide if an edge $e$ belongs to the MST of $G$ according to the following rules:
(A) if there is some cluster in which $e$ does not belong to the MST, $e$ does not belong to the MST of $G$.
(B) if $e$ is contained in the MST of every cluster that contains $e$, then $e$ also belong to the MST of $G$.

We still have to argue why these local decisions are correct:
Rule (A) is correct as the edge weights are distinct, which means that the MST is unique.
Rule (B) is correct by the definition of $\mu(G, \omega)$ and the fact that we have a $\mu(G, \omega)$ neighborhood cover:
Every edge $e$ that is not part of the MST is the heaviest edge on a cycle of length at most $\mu(G, \omega)$.
This entire cycle must be included in some cover $C$, which means the MST of $C$ cannot contain $e$.

\section{Spanners and Dominating Set}
\label{app:spanners}

Ghaffari and Kuhn\cite{ghaffari2018derandomizing}, gave deterministic algorithm for computing sparse spanners as well as approximations of dominating set.
One thing they rely on is what they call a \emph{strong diameter $k$-hop $(k f(n), f(n))$-decomposition}, which is a partitioning the network into $f(n)$ colors.
Each color class consists of clusters of strong-diameter $k f(n)$ such that any two nodes from different clusters have distance at least $k$.
However, their applications of this network decomposition do not depend on the fact that the clusters have strong-diameter and are connected in $G$, other than for ensuring efficient communication.

To replace the random parts of various algorithms, the authors of \cite{ghaffari2018derandomizing} define an abstract problem, which they call the \emph{hitting set} problem.
\begin{definition}[The Hitting Set Problem]
  Consider a graph $G = (V, E)$ with two special sets of nodes $L \subseteq V$ and $R \subseteq V$ with the following properties:
  each node $\ell \in L$ knows a set of vertices $R(\ell) \subseteq R$, where $\lvert R(\ell) \rvert = \Theta(p \log n)$, such that $dist_G(\ell,r) \leq T$ for every $r \in R(\ell)$.
  Here, $p$ and $T$ are two given integer parameters in the problem.
  Moreover, there is a $T$-round $\mathsf{CONGEST}$ algorithm that can deliver one message from each node $r \in R$ to all nodes $\ell \in L$ for which $r \in R(\ell)$.
  We emphasize that the same message is delivered to all nodes $\ell \in L$.

  Given this setting, the objective in the hitting set problem is to select a subset $R^* \subseteq R$ such that (I) $R^*$ dominates $L$---i.e., each node $\ell \in L$ has at least one node $r^* \in R$ such that $r^* \in R(\ell)$---and (II) we have $\lvert R^* \rvert \leq \lvert R \rvert / p$.
\end{definition}
\begin{lemma}[\cite{ghaffari2018derandomizing}] \label[lemma]{lem:hitset}
  Given a $2T$-hop $(c,d)$ decomposition of the graph $G$ of the hitting set problem, there is a deterministic distributed algorithm that in $\tilde{O}(c(d + T))$ rounds solves the hitting set problem.
\end{lemma}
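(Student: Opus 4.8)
The plan is to exhibit a simple randomized algorithm for the hitting set problem and then derandomize it, using the given $2T$-hop $(c,d)$ decomposition of $G$ to localize the method of conditional expectations. In the randomized algorithm each node $r\in R$ places itself into $R^{*}$ independently with probability $q=\Theta(1/p)$, the constant chosen so that $\E[|R^{*}|]=q|R|\le |R|/(2p)$; this needs only $O(\log p)=O(\log n)$ random bits per node. For a fixed $\ell\in L$, since $|R(\ell)|=\Theta(p\log n)$ and the indicators $X_r:=[r\in R^{*}]$ are independent, the expected number of chosen nodes of $R(\ell)$ is $\Theta(\log n)$ and a standard tail bound gives $\Pr[\ell\text{ undominated}]\le n^{-\Omega(1)}$. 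We will in fact only use $\Theta(\log n)$-wise independence of the $X_r$ — which keeps the derandomization seed short — and with $\Theta(\log n)$-wise independence the same conclusion follows from a high-moment (Bellare--Rompel--type) tail bound for the sum $\sum_{r\in R(\ell)}X_r$ of mean $\Theta(\log n)$. By Markov, $\Pr[|R^{*}|>|R|/p]\le 1/2$. A union bound over the $|L|\le n$ events $A_\ell=\{\ell\text{ undominated}\}$ and the single event $B=\{|R^{*}|>|R|/p\}$ gives total failure probability strictly below $1$, so a good outcome exists; the derandomization must find one.

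Next I would replace full randomness by an explicit $\Theta(\log n)$-wise independent, $q$-biased sample space of seed length $s=\mathrm{polylog}(n)$, drawing an \emph{independent} seed for each cluster of the decomposition, and define the potential $\Phi=\sum_{\ell\in L}\phi_\ell+\phi_B$, where $\phi_\ell$ is a high-moment upper bound on $\Pr[A_\ell]$ — a quantity that factors over the seed bits and hence is exactly computable under any partial assignment of them — and $\phi_B=\tfrac{p}{|R|}\E[|R^{*}|]\ge\Pr[B]$ is linear in the $X_r$. By the analysis above $\Phi<1$ under the random choice. The crucial locality fact: two nodes $r,r'\in R$ jointly affect a term $\phi_\ell$ only if $r,r'\in R(\ell)$, in which case $d_G(r,r')\le 2T$ (both are within $T$ of $\ell$). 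Hence, in the given $2T$-hop decomposition, no $\ell$ has $R(\ell)$ meeting two distinct clusters of the same color, so for a fixed color $j$ the clusters of color $j$ contribute to $\Phi$ through disjoint sets of still-undetermined variables and can be handled in parallel without interference.

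The derandomization then processes the $c$ colors one at a time; within color $j$, every cluster $C$ acts in parallel and fixes its $s=\mathrm{polylog}(n)$ seed bits one at a time, each time keeping the value that does not increase $C$'s own partial sum $\sum_{\ell:\,R(\ell)\cap C\ne\emptyset}\phi_\ell$ plus $C$'s share of $\phi_B$; such a value exists by averaging. To evaluate this, each relevant $\ell$ — which lies within $d+T$ of the center of $C$ — recomputes $\phi_\ell$ from the current seed prefix (pushed out from the $R$-nodes of $C$ using the promised $T$-round delivery primitive, after $O(d)$ rounds of in-cluster routing) together with the $X_r$ already fixed in earlier colors, and these $O(\log n)$-bit quantities are convergecast to the center in $O(d+T)$ rounds, where the winning seed bit is chosen; $\phi_B$ is tracked by the center locally. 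Thus one color costs $\mathrm{polylog}(n)\cdot O(d+T)=\tilde O(d+T)$ rounds, giving $\tilde O(c(d+T))$ in total. Since each cluster's decision never increases the total and per-color contributions are additive, $\Phi$ stays below $1$ throughout; once all colors are processed every term of $\Phi$ is a $0/1$ indicator, so no $A_\ell$ holds and $B$ fails, i.e. $R^{*}$ dominates $L$ and $|R^{*}|\le |R|/p$, as required.

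The main obstacle is making the conditional-expectation step fit the $O(\log n)$-bit message constraint: one must pick the estimators $\phi_\ell$ so that they simultaneously (i) honestly upper-bound $\Pr[A_\ell]$, (ii) remain computable under partial seed fixings from only the local data at $\ell$ plus a polylog-bit seed prefix, and (iii) decompose additively across clusters; and one must shuttle the seed prefix out to the at most $\Theta(p\log n)$ nodes $\ell$ that a cluster's $R$-nodes can hit, and the resulting $\phi_\ell$-values back, within radius $O(d+T)$ while the other same-color clusters run concurrently over shared edges — non-interference of the latter being exactly what the $2T$-hop separation of the given decomposition buys us.
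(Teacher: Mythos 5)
Your proposal follows essentially the same route as the paper's (sketched) proof, which defers to Ghaffari--Kuhn: derandomize the trivial $\Theta(1/p)$-sampling by replacing full randomness with $\Theta(\log n)$-wise independence, encode domination and size as a cost/potential function, and fix the seed bits by the method of conditional expectations, with the $2T$-hop separation guaranteeing that each $R(\ell)$ meets at most one cluster per color so that same-color clusters fix their bits in parallel and cluster centers aggregate the conditional expectations in $O(d+T)$ rounds per bit. Your write-up is a faithful, somewhat more detailed reconstruction of exactly this argument, so no further comparison is needed.
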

\begin{proof}[Proof Sketch]
  The general outline is to derandomize the trivial randomized algorithm which includes every node of $R$ in $R^*$ with probability $1/(2p)$.
  To this end, the authors of \cite{ghaffari2018derandomizing} show that the requirements of the hitting set problem can be reformulated as a cost function depending on the random choices of the nodes from $R$.
  Furthermore, they argue that using just $\Theta(\log n)$-wise independence gives the same concentration as a standard Chernoff bound, up to constant factors.
  Finally, they show how to fix the bits of randomness by using the method of conditional expectation.
  In this last step they use the network decomposition which is assumed to exist.
  However, they only use the fact that clusters have low diameter to compute the conditional expectations and report them to the center of the cluster, which will then fix the bits of randomness.
  Thus, the weak diameter of the decomposition from \Cref{thm:netdecomp} suffices (see also \Cref{rem:netdecompoverlap}).
\end{proof}

\subsection{Spanners}

Having argued that the hitting set computation of \cite{ghaffari2018derandomizing} also works with our weaker form of network decomposition, we directly get the following:
\begin{theorem} \label{thm:spanner}
  There is a distributed deterministic algorithm in the $\mathsf{CONGEST}$ model that computes a $(2k - 1)$-stretch spanner with size $O(kn^{1+1/k} \log n)$ in $2^{O(\sqrt{\log n})}$ rounds.
\end{theorem}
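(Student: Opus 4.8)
The plan is to follow the classical iterative–clustering spanner construction of Baswana and Sen, whose only probabilistic ingredient is a cluster–sampling step, and to replace that step by the deterministic hitting–set subroutine of Ghaffari and Kuhn (\Cref{lem:hitset}), fed with the weak–diameter decomposition of a suitable power graph obtained from \Cref{thm:netdecomp}. Concretely, the algorithm runs in $k$ phases; it maintains a clustering $\mathcal{C}_i$ of a subset of $V$, with $\mathcal{C}_0$ the singleton clustering, and in phase $i$ every cluster has a center and a spanning tree of radius at most $i$. In each phase a subset of the clusters is declared \emph{surviving}; every remaining cluster either is adjacent to a surviving cluster, in which case it joins one such cluster through a single tree edge, or is adjacent to no surviving cluster, in which case it adds one edge to each neighboring cluster of $\mathcal{C}_i$ and leaves the clustering. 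A standard argument shows that the union of the edges added over all phases is a $(2k-1)$-spanner, and that if the surviving clusters were chosen by keeping each cluster independently with probability $n^{-1/k}$ the expected size would be $O(kn^{1+1/k})$; derandomizing this sampling is precisely what costs the extra $\log n$ factor.

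The first step is to cast the sampling of phase $i$ as an instance of the hitting–set problem with parameters $p = n^{1/k}$ and $T = \Theta(i)$. Take $R$ to be the set of centers of $\mathcal{C}_i$, take $L$ to be the "heavy" centers, namely those adjacent to at least $\Theta(n^{1/k}\log n)$ clusters, and let $R(\ell)$ be $\Theta(p\log n)$ of the clusters neighboring $\ell$, all within distance $T$ of $\ell$ since cluster trees have radius $i$. A set $R^*$ dominating $L$ with $|R^*|\le |R|/p$ then yields a choice of surviving clusters in which each heavy center pays $O(1)$ edges (it joins a surviving neighbor), each light center pays $O(n^{1/k}\log n)$ edges, and at most $|R|/p$ clusters are carried into phase $i+1$; summing over phases gives the size bound $O(kn^{1+1/k}\log n)$. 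Before invoking \Cref{lem:hitset} one has to check its communication precondition — a $T$-round $\mathsf{CONGEST}$ algorithm delivering one message from each $r\in R$ to every $\ell$ with $r\in R(\ell)$ — which follows by $\Theta(i)$-hop flooding inside cluster trees, using that by \Cref{thm:netdecomp} and \Cref{rem:netdecompoverlap} each edge of $G$ lies in only $2^{O(\sqrt{\log n})}$ clusters.

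Next I would apply \Cref{thm:netdecomp} to build a $\big(2^{O(\sqrt{\log n})},2^{O(\sqrt{\log n})}\big)$ decomposition of $G^{2T}$ in $T\cdot 2^{O(\sqrt{\log n})}\cdot\log^{*} n$ rounds, where one round of intra-cluster communication in $G^{2T}$ is simulated in $T\cdot 2^{O(\sqrt{\log n})}$ rounds of $G$. Plugging $(c,d)=\big(2^{O(\sqrt{\log n})},2^{O(\sqrt{\log n})}\big)$ into \Cref{lem:hitset} solves the phase-$i$ hitting–set instance in $\tilde{O}\big(c(d+T)\big)$ rounds of $G^{2T}$-communication, i.e. $\mathrm{poly}(T)\cdot 2^{O(\sqrt{\log n})}$ rounds of $G$; the remaining per-phase bookkeeping (growing cluster trees by one hop, learning neighboring clusters, adding spanner edges) fits in the same budget via the same simulation. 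Since $T=O(k)$ and there are $k$ phases, the total is $\mathrm{poly}(k)\cdot 2^{O(\sqrt{\log n})}$, which is $2^{O(\sqrt{\log n})}$ for $k$ up to $2^{O(\sqrt{\log n})}$ (for larger $k$ the spanner is of no interest, and $k$ may be regarded as a fixed parameter).

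The main obstacle — the reason this is more than a one-line citation of \cite{ghaffari2018derandomizing} — is that \Cref{thm:netdecomp} gives only a weak–diameter decomposition whose clusters overlap in $G$, whereas the original construction assumed strong diameter. One has to verify that (i) the hitting–set derandomization itself (fixing the random bits by the method of conditional expectations, aggregating conditional costs at cluster centers) uses only low cluster diameter and bounded overlap and never connectivity of a cluster in $G$ — this is exactly the point addressed in the proof sketch of \Cref{lem:hitset}; and (ii) the spanner construction's own per-phase communication degrades only by the $2^{O(\sqrt{\log n})}$ overlap factor, which follows from invariant (C) in the proof of \Cref{thm:netdecomp} together with \Cref{rem:netdecompoverlap}. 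Once these two observations are in place, the remainder is a routine assembly of known components.
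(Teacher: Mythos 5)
Your proposal is correct and follows essentially the same route as the paper: adapt Baswana--Sen by replacing the random cluster sampling with the deterministic hitting-set routine of \Cref{lem:hitset}, plug in the $(2^{O(\sqrt{\log n})},2^{O(\sqrt{\log n})})$ decomposition of the relevant power graph from \Cref{thm:netdecomp}, and observe (via \Cref{rem:netdecompoverlap}) that weak diameter and bounded overlap suffice. The only difference is one of exposition: the paper delegates the reduction to the hitting-set instance and the per-phase accounting to the citation of Ghaffari--Kuhn, whereas you spell those details out explicitly.
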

The proof from \cite{ghaffari2018derandomizing} of this Theorem adapts the algorithm of \cite{baswana2007simple} to use the hitting set algorithm instead of a random marking process.
In order to apply the hitting set algorithm, they compute a $(2^{O(\sqrt{\log n \log \log n})}, 2^{O(\sqrt{\log n \log \log n})})$ network decomposition in $2^{O(\sqrt{\log n \log \log n})}$ rounds.
As mentioned, we can replace this network decomposition with the $(2^{O(\sqrt{\log n})}, 2^{O(\sqrt{\log n})})$-decomposition from \Cref{thm:netdecomp}, which can be computed in $2^{O(\sqrt{\log n})}$ rounds.
This immediately implies the Theorem.

\subsection{Minimum Set Cover and Dominating Set}

A set cover instance $(\mathcal{X}, \mathcal{S})$ is given by a set $\mathcal{X}$ of elements and a set $\mathcal{S} \subseteq 2^\mathcal{X}$ of subsets of $\mathcal{X}$ such that $\bigcup_{A \in \mathcal{S}} A = \mathcal{X}$.
The objective of the minimum set cover problem is to select a subset $\mathcal{C} \subseteq \mathcal{S}$ of the sets in $\mathcal{S}$ such that $\bigcup_{A \in \mathcal{C}} = \mathcal{X}$ and the cardinality of $\mathcal{C}$ is minimized.
The set cover instance $(\mathcal{X}, \mathcal{S})$ is modeled as a distributed graph problem by defining a bipartite network graph that has a node $u_{\mathcal{X}}$ for each element $x \in \mathcal{X}$ and a node $v_A$ for every set $A \in \mathcal{S}$ and contains an edge $\{ u_{\mathcal{X}}, v_A \}$ whenever $x \in A$.
Note that with this definition we can model the minimum dominating set problem by having each node $u$ correspond to an element of $\mathcal{X}$ and a set in $\mathcal{S}$ which contains $u$ as well as all its neighbors.

For this distributed variant of set cover, Ghaffari and Kuhn\cite{ghaffari2018derandomizing} showed that a $O(\log^2 n)$ approximation could be computed in $2^{O(\sqrt{\log n \cdot \log \log n})}$ rounds.
As noted, this yields the same result for minimum dominating set.

More recently, Deurer, Kuhn, and Maus\cite{deurer2019deterministic} gave a stronger result for minimum dominating set.
They obtain a $2^{O(\sqrt{\log n \cdot \log \log n})}$ round deterministic distributed algorithm that computes a $O(\log \Delta)$ approximation of minimum dominating set.
They also remark that the time complexity is due to the best known deterministic algorithm for computing a network decomposition of $G^2$ by Kuhn and Ghaffari\cite{ghaffari2018derandomizing} and that improving their result would yield a faster algorithm.
Thus, as a consequence of \Cref{thm:netdecomp} we get:
\begin{theorem}
  There is a deterministic {\sffamily CONGEST} model algorithm that computes a $O(\log \Delta)$-approximation of minimum dominating set in $2^{O(\sqrt{\log n})}$ rounds.
\end{theorem}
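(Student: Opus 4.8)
The plan is to invoke the $O(\log \Delta)$-approximation algorithm for minimum dominating set of Deurer, Kuhn, and Maus~\cite{deurer2019deterministic} essentially as a black box, and to replace the one subroutine that dominates its round complexity. As the authors of~\cite{deurer2019deterministic} note, the only reason their algorithm runs in $2^{O(\sqrt{\log n \log\log n})}$ rounds rather than faster is that it relies on a deterministic construction of a network decomposition of $G^2$, for which the best previously known bound was the $2^{O(\sqrt{\log n \log\log n})}$-round algorithm of Ghaffari and Kuhn~\cite{ghaffari2018derandomizing}; every other ingredient of their algorithm already runs in $2^{O(\sqrt{\log n})}$ rounds. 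Hence it suffices to feed in the improved decomposition from \Cref{thm:netdecomp}.

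First I would apply \Cref{thm:netdecomp} with $k = 2$, $N = n$, and $S = O(\log n)$-bit identifiers, obtaining a $(2^{O(\sqrt{\log n})}, 2^{O(\sqrt{\log n})})$ network decomposition of $G^2$ in $2 \cdot 2^{O(\sqrt{\log n})} \cdot \log^*(\log n) = 2^{O(\sqrt{\log n})}$ rounds of {\sffamily CONGEST}, together with the guarantee that one round of communication within clusters of $G^2$ can be simulated in $2^{O(\sqrt{\log n})}$ rounds on $G$. The one point that needs checking, rather than a pure citation, is that \Cref{thm:netdecomp} only produces a \emph{weak} diameter decomposition---clusters need not be connected in $G$---whereas~\cite{deurer2019deterministic}, following~\cite{ghaffari2018derandomizing}, phrased its requirement in terms of strong diameter decompositions. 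I would verify (as was already done for the hitting set problem in the proof of \Cref{lem:hitset}) that the decomposition is used only as a coordination scaffold for the derandomization via the method of conditional expectations: each cluster needs only to aggregate conditional expectations at its center, have the center fix a block of random bits, and broadcast the decision back. By \Cref{rem:netdecompoverlap}, within any single color class each edge of $G$ lies in at most one cluster, so these simultaneous intra-cluster communications do not collide, and the simulation overhead stated above is enough to run them in parallel.

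Putting the pieces together, the approximation guarantee is untouched---nothing in the Deurer--Kuhn--Maus analysis changes except the subroutine that supplies the network decomposition---so the output is still an $O(\log \Delta)$-approximate dominating set, and the total round complexity becomes $2^{O(\sqrt{\log n})}$. The main obstacle is precisely the weak-versus-strong-diameter check in the previous paragraph: one has to confirm that their algorithm never exploits connectivity of a cluster in $G$ for anything other than the bounded-diameter intra-cluster communication that \Cref{thm:netdecomp} already provides in the weak-diameter, $G^k$ setting. Given the parallel with the hitting-set reduction of~\cite{ghaffari2018derandomizing}, I expect this to go through without difficulty.
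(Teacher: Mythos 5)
Your proposal matches the paper's argument: the paper likewise cites Deurer, Kuhn, and Maus, notes that their round complexity is dictated solely by the $G^2$ network decomposition subroutine, and substitutes the $(2^{O(\sqrt{\log n})},2^{O(\sqrt{\log n})})$ decomposition of \Cref{thm:netdecomp}, with the weak-versus-strong-diameter concern handled exactly as you suggest via the hitting-set discussion (\Cref{lem:hitset}) and \Cref{rem:netdecompoverlap}. Your check of that point is the only substantive step, and it is the same one the paper makes.
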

While this result only holds for minimum dominating set, it entails an improved result for set cover.
By a standard reduction of set cover to minimum dominating set, one can obtain a similar runtime.
However, this reduction changes the maximum degree of the considered graph, which leads to:
\begin{corollary}
  There is a deterministic {\sffamily CONGEST} model algorithm that computes a $O(\log n)$-approximation of minimum set cover in $2^{O(\sqrt{\log n})}$ rounds.
\end{corollary}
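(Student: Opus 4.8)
The plan is to reduce distributed minimum set cover to distributed minimum dominating set, invoke the preceding theorem on the reduced graph, and keep track of how the reduction inflates the maximum degree; the loss from $O(\log\Delta)$ to $O(\log n)$ in the approximation factor comes entirely from this degree blow-up.

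First I would spell out the reduction. Given a set cover instance $(\mathcal{X},\mathcal{S})$ with its bipartite network graph $G$ (a node $u_x$ for each $x\in\mathcal{X}$, a node $v_A$ for each $A\in\mathcal{S}$, and an edge $\{u_x,v_A\}$ whenever $x\in A$), I would build $G'$ on the same vertex set by additionally inserting all edges $\{v_A,v_B\}$ with $A\neq B$, so that the set-nodes form a clique. Then I would verify the standard equivalence of optima. If $\mathcal{C}\subseteq\mathcal{S}$ is a set cover, then $\{v_A:A\in\mathcal{C}\}$ dominates $G'$: it dominates every element-node because $\mathcal{C}$ covers $\mathcal{X}$, and every set-node because the set-nodes form a clique (and $\mathcal{C}\neq\emptyset$ whenever $\mathcal{X}\neq\emptyset$, which is the only nontrivial case). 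Conversely, given a dominating set $D$ of $G'$, I would replace each element-node $u_x\in D$ by a node $v_A$ with $x\in A$, obtaining $D'\subseteq\{v_A:A\in\mathcal{S}\}$ with $|D'|\le|D|$; since in $G'$ the only neighbors of $u_x$ are set-nodes $v_A$ with $x\in A$, every element-node remains dominated by $D'$, so $\{A:v_A\in D'\}$ is a set cover of size at most $|D|$. Hence $\mathrm{OPT}_{\mathrm{DS}}(G')=\mathrm{OPT}_{\mathrm{SC}}(\mathcal{X},\mathcal{S})$, and any $\alpha$-approximate dominating set of $G'$ is turned into an $\alpha$-approximate set cover by this purely local replacement, costing $O(1)$ rounds.

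Next I would apply the dominating set theorem to $G'$. The key observation is the degree bound: a set-node $v_A$ has degree $|A|+|\mathcal{S}|-1$ and an element-node has degree at most $|\mathcal{S}|$, so $\Delta(G')\le|\mathcal{X}|+|\mathcal{S}|-1<n$, where $n=|\mathcal{X}|+|\mathcal{S}|$ is the number of network nodes. Therefore $\log\Delta(G')=O(\log n)$, and the theorem computes an $O(\log\Delta(G'))=O(\log n)$-approximate dominating set of $G'$ in $2^{O(\sqrt{\log n})}$ rounds; by the previous paragraph this gives an $O(\log n)$-approximate set cover in the same round complexity. This is exactly the loss anticipated in the statement, since the reduction may raise the maximum degree all the way to $n-1$.

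The one place to be careful is that the set-node clique in $G'$ consists of virtual edges: the cleanest reading is to take $G'$ itself as the communication graph, which is the natural distributed encoding of the dominating set instance produced by the reduction and on which the preceding theorem applies directly. I expect the main obstacle to be nothing more than this modeling point, together with carefully checking both directions of the optimum equivalence and the local conversion step — there is no difficult estimate involved, and the degree count $\Delta(G')<n$ is precisely what accounts for the loss from $O(\log\Delta)$ to $O(\log n)$.
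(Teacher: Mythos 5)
Your proposal matches the paper's own justification of this corollary: the paper invokes exactly the same standard reduction from set cover to minimum dominating set, applies the preceding $O(\log\Delta)$-approximation theorem, and attributes the weakening to $O(\log n)$ to the degree increase caused by the reduction, which you correctly quantify via $\Delta(G')<n$. You simply spell out the details (the clique on the set-nodes, the equivalence of optima, and the local conversion of a dominating set back to a set cover) that the paper leaves implicit, so the approach is the same and correct.
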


\end{document}